\newtheorem{thm}{Theorem}[section]
\newtheorem{prop}[thm]{Proposition}
\newtheorem{lemma}[thm]{Lemma}
\newtheorem{remark}[thm]{Remark}
\newtheorem{example}[thm]{Example}
\newcommand{\bmb}{\left( \begin{array}{rr}}
\newcommand{\enm}{\end{array}\right)}
\newcommand{\Z}{{\mathbb Z}}
\newcommand{\R}{{\mathbb R}}
\newcommand{\al}{{\alpha}}
\numberwithin{equation}{section}
\begin{document}

\title{Limit shapes for Domain-Wall (colored) vertex models}
\author{Philippe Di Francesco and David Keating} 
\address{Department of Mathematics,
University of Illinois at Urbana-Champaign,
1409 West Green Street, Urbana, IL 61801, USA
}

\begin{abstract}
We study partition functions with domain-wall like boundary conditions for path models issued from colored vertex models. These models display an arctic phenomenon, as attested by numerical simulations. We show that the colored vertex model is equivalent to a certain single-color ``colorblind" vertex model. In a special case of the weights for the colorblind touching paths, we derive the arctic curve using a bijective sliding map to non-intersecting paths, for which arctic curves were previously derived using the tangent method. The resulting arctic curves are only piecewise analytic, as in the known non-free fermion cases of Six vertex model with domain-wall boundaries and its relatives. We also prove a shear phenomenon, that some portions of the arctic curve are sheared versions of the analytic continuation of other portions, as already observed in the uniformly weighted Six and Twenty vertex models.
\end{abstract}

\maketitle
\date{\today}
\tableofcontents

\section{Introduction}

In this paper we consider some simple path models borrowed from recent developments on colored vertex models, and study their arctic phenomenon.
The arctic phenomenon occurs in a variety of two-dimensional statistical models, and was first spotted in relation with domino tilings of the Aztec diamond \cite{JPS},
and later generalized to rhombus tilings of simple domains \cite{cohn1998shape}, and more generally dimer models. Looking at typical tiling configurations of large scaled domains, one observes frozen, crystal-like regions near the boundaries of the domain and liquid, disordered regions away form the boundaries, with a phase separation becoming sharper as the size increases. The latter converges to a curve coined the arctic curve. Exact solutions for dimer models using Kasteleyn matrix technology have allowed a thorough investigation of arctic curves (see \cite{KOS,KO1,KO2}). However all these models are so-called ``free fermion" models, that can be described with objects with repulsive interaction, such as non-intersecting paths. For those, the general theory predicts that the arctic curve is analytic. 

In contrast, integrable lattice models from statistical physics often allow for situations where the fermionic descriptions include non-trivial interactions. The most famous such model is the Six Vertex (6V) model on a square with domain-wall boundary conditions, which can be described in terms of osculating paths, namely non-intersecting paths with contact interactions, and includes the case of Alternating Sign Matrices (ASMs), whose limit shape/arctic curve is non-analytic, made of 4 pieces of ellipse with curvature discontinuities at the contact points. The arctic phenomenon for such models was first derived non-rigorously using in particular the tangent method \cite{COSPO}, later proved in \cite{Aggar}. The tangent method consists of slightly modifying the setting of the model, say described by (possibly interacting) paths, by simply moving away from the domain the endpoint of the last path supposed to describe the phase boundary, and using that new path as a revelator of the arctic curve formed by the remaining ones. Indeed, away from the other paths, the most likely trajectory of a path is a line, which is expected to be tangent to the arctic curve at the point at which it leaves it. By determining the parametric family of these tangent lines obtained by moving the endpoint, one recovers the arctic curve as their envelope. Moreover, the tangent lines are determined by a boundary one-point function, in which the last path is constrained to end at a different point on the boundary of the domain. Such functions are easily calculable by utilizing the integrability of the model, and lead to predictions for the arctic curves in excellent agreement with numerics. The tangent method has been validated in free fermion models as well \cite{corteel2021arctic,DFLAP,DFGUI,DFG3}, including the possibility of non-local weights for the paths, such as that involving the area under the path. In the latter case, it was shown that the tangent lines become curved geodesics due to the background created by the area weights, but the method still applies and the arctic curve, previously computed in \cite{BuKni,Petrov}, is still their envelope \cite{DFG2}.

More predictions were made for non-free fermion models, such as the 20V model, related to the 6V model, using the tangent method, all leading to non-analytic arctic curves \cite{BDFG,DF21V,DF23V}.  
In these models, as well as the 6V, in the case of uniform weights, the arctic curve is made of several analytic pieces, some of which are obtained by a shear transformation applied to the analytic continuation of another portion. In particular, in the case of the 6V model with uniform weights (equivalent to ASMs), each piece of ellipse can be obtained as a shear of its neighbors, and a similar property holds in the case of the uniform weight 20V model as well \cite{BDFG,DF21V,DF23V}. In the latter model this shear phenomenon was explained  by using explicit transformations in its description in terms of osculating paths. Similar shear phenomena are seen in models coming from colored paths related to interacting tilings \cite{corteel2022colored,guse2025colored}.

In the present paper, we study another one-parameter ($t$) family of non-free fermion (colored) lattice paths, related to integrable colored vertex models \cite{aggarwal2023colored,corteel2022vertex}. We show that the colored model is equivalent to a colorblind version with specific weights, recently used to generate known families of multivariate symmetric polynomials \cite{garbali2020modified}. We study both the colored and colorblind models with domain-wall type boundary conditions, and we find exact formulas for the partition function in the special cases $t=0$ and $t=1$.

For suitable domain-wall type boundary conditions, we expect an arctic phenomenon. We apply the tangent method to the special case $t=0$ of the colorblind model where path edges are allowed to touch only along one direction, and with more general boundary conditions allowing for any fixed path exit points on the top boundary. We show by construction that paths can be taken apart (and rendered free-fermionic) under a sliding map, which conversely allows to determine their limit shape/arctic curve. Portions of the latter appear naturally in two forms: (1) simple translates of portions of arctic curve of the free fermion model (obtained by the sliding map) or (2) shear transforms of such portions.
We then introduce area weights, for which we determine the limit shapes by use of the curved tangent method of \cite{DFG2}. Remarkably, the above-mentioned shear phenomenon still holds.

Our analytic predictions are confirmed by numerical simulations using random sampling, which we also use to analyze more general colored vertex weights.

The paper is organized as follows. In Section 2, we introduce the colored model and the domain-wall boundary conditions, and introduce an equivalent colorblind version with suitable weights whose partition function is computed for $t=0,1$. When $t=1$ the partition function for the colorblind model is computed by first computing the partition function for the colored model whose paths are independent from one another for this choice of weight. The main property distinguishing the model with $t=0$ from general values of $t$, is that for general values both horizontal and vertical edges can be visited simultaneously by multiple paths, while for $t=0$ only vertical edges can be multiple. In the latter case, we generalize the domain-wall boundary conditions by allowing for arbitrary fixed exit points on the top boundary, a situation already considered in \cite{DFGUI} for free-fermionic non-intersecting paths. We introduce a ``sliding map" between the configurations of the $t=0$ ``touching path" model and those of the non-intersecting one, allowing for the computation of the partition functions. 

Section 3 is devoted to the application of the tangent method to the touching path model. We obtain exact predictions for the arctic curve in terms of an arbitrary given distribution function of endpoints along the top boundary, including the possibility of macroscopic gaps (absence of exit points on some segments) or frozen exits (finite fraction of paths exiting at the same point). We observe a shear phenomenon, explicitly due to the sliding map. This analysis is repeated in Section 4 for touching paths receiving an extra ``area" weight, namely a weight $q$ per unit square between the path and the West vertical boundary. We find that the shear phenomenon still holds in this case.

In Section 5, we give numerical evidence by simulating large configurations of the models, for various distributions of exit points. We first describe our sampling algorithm, and then show simulation results. We extend the study to colored models with $t\neq 0$, and discuss the results.

\noindent{\bf Acknowledgments.} PDF is partially supported by the Morris and Gertrude Fine endowment and the Simons Foundation travel grant MPS-TSM00002262. DK is supported by the NSF RTG grant DMS-1937241.

\section{Colored vertex model: colorblind weights}

\subsection{The colored model: partition functions}

Here we define the colored vertex model and colored domain-wall boundary conditions.  Consider a square lattice vertex model with $n$ colors of paths. We label the colors 1 to $n$. For $s_N,s_S,s_E,s_W\subset \{1,2,\ldots,n\}=:[n]$ such that $s_S \cap s_W = s_N \cap s_E = \emptyset$ and $s_S \cup s_W = s_N \cup s_E$ the vertex weights are given by
\begin{equation}\label{eq:coloredweights}
    \tilde w(s_W,s_S,s_N,s_E; x, t) := x^{|s_E|} t^{d(s_N,s_E)}
\end{equation}
where $d(s_N,s_E):= \#\{i,j | i<j,\, i\in s_E,\, j \in s_E \cup s_N \}$ and the weights are 0 otherwise. In words,
\begin{itemize}
    \item $s_W, s_S$ describe which colors of paths are entering the vertex from the west or south, respectively,
    \item $s_N,s_E$ describe which colors of path are exiting the vertex to the north or east, respectively,
    \item we get a factor of $x$ in the weight for each path exiting east,
    \item and when a color exits to the east we get a factor of $t$ in the weight for each larger color present in the vertex.
\end{itemize}
 Note that, for it to have non-zero weight, at most one path of a each color can appear in a vertex. These vertex weights were first introduced in \cite{aggarwal2023colored,corteel2022vertex} although here, unlike in those works, we restrict to having only one path of each color.

Consider the model on an $n\times n$ square grid with colored domain-wall type boundary conditions we now describe. Fix a permutation $\sigma\in S_n$, the set of permutations of $[n]$. At every horizontal edge on the west boundary a path enters the domain and we color them 1 to $n$ from top to bottom. At each vertical edge at the north boundary a path exits the domain and we color them $\sigma^{-1}_1$ to $\sigma^{-1}_n$ from left to right. All other boundary edges have no paths entering or exiting. See Figure \ref{fig:coloredDW} for an example.

\begin{figure}
    \centering
    \[
        \begin{tikzpicture}
        \draw[thin] (1,0)--(4,0)--(4,3)--(1,3)--(1,0);
        \draw[red, thick] (0,0)--(1,0); \draw[red, thick] (2,3)--(2,4);
        \draw[orange, thick] (0,1)--(1,1);
        \draw[orange, thick] (4,3)--(4,4);
        \draw[green, thick] (0,2)--(1,2);
        \draw[green, thick] (1,3)--(1,4);
        \draw[blue, thick] (0,3)--(1,3);
        \draw[blue, thick] (3,3)--(3,4);
        \end{tikzpicture}
        \qquad
        \begin{tikzpicture}
        \draw[red, thick] (0,0)--(1.1,0)--(1.1,2.1)--(2,2.1)--(2,4);
        \draw[orange, thick] (0,1)--(1,1)--(1,2)--(4,2)--(4,4);
        \draw[green, thick] (0,2)--(0.9,2)--(0.9,4);
        \draw[blue, thick] (0,3)--(3,3)--(3,4);
        \end{tikzpicture}
    \]
    \caption{Example colored domain-wall boundary condition and a possible configuration for $n=4$.  The colors are order {\color{blue} blue} $<$ {\color{green} green} $<$ {\color{orange} orange} $<$ {\color{red} red}. Here $\sigma=(3,1,4,2)$. The coloring on the top boundary is given by $\sigma^{-1}=(2,4,1,3)$. The weight of the configuration is $x_3^4x_4^2t^4$.}
    \label{fig:coloredDW}
\end{figure}
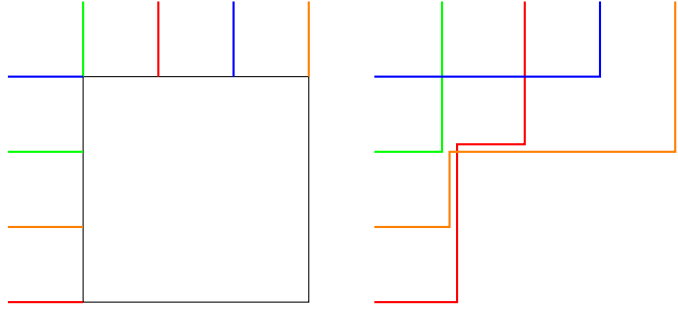

For fixed permutation $\sigma$, we denote the partition function of the model with fixed boundary coloring by
\begin{equation}\label{eq:Zsig}
    \tilde Z^{\sigma}_n(x_1,\ldots,x_n|t) := \sum_{{\rm path}\, {\rm configs.}\atop} \prod_{i=1}^n \prod_{j=1}^n \tilde w(s_W(i,j),s_S(i,j),s_N(i,j),s_E(i,j);x_i,t)
\end{equation}
where vertices are labeled by pairs $(i,j)\in [1,n]\times [1,n]$, enumerated from bottom to top and left to right, and the vertices in the $i$-th row use parameter $x_i$.

We will also consider boundary conditions in which we do not fix the coloring of the paths exiting at the north boundary in the domain and instead allow it to be free. Let $\tilde Z_n(x_1,\ldots,x_n|t)$ be the partition function for the model with free boundary coloring. We have
\begin{equation}\label{eq:Zfree}
    \tilde Z_n(x_1,\ldots,x_n|t) := \sum_{\sigma\in S_n} \tilde Z^{\sigma}_n(x_1,\ldots,x_n|t).
\end{equation}

Figure \ref{fig:coloredEx} shows configurations approximately sampled from the Gibbs measure in the free-color model for several choices of $t$. Further discussion of the simulations is given in Section \ref{sec:simulations}.

\begin{figure}
    \centering
    \resizebox{0.9\textwidth}{!}{
    \begin{tabular}{cccc}
    \includegraphics[width=0.3\linewidth]{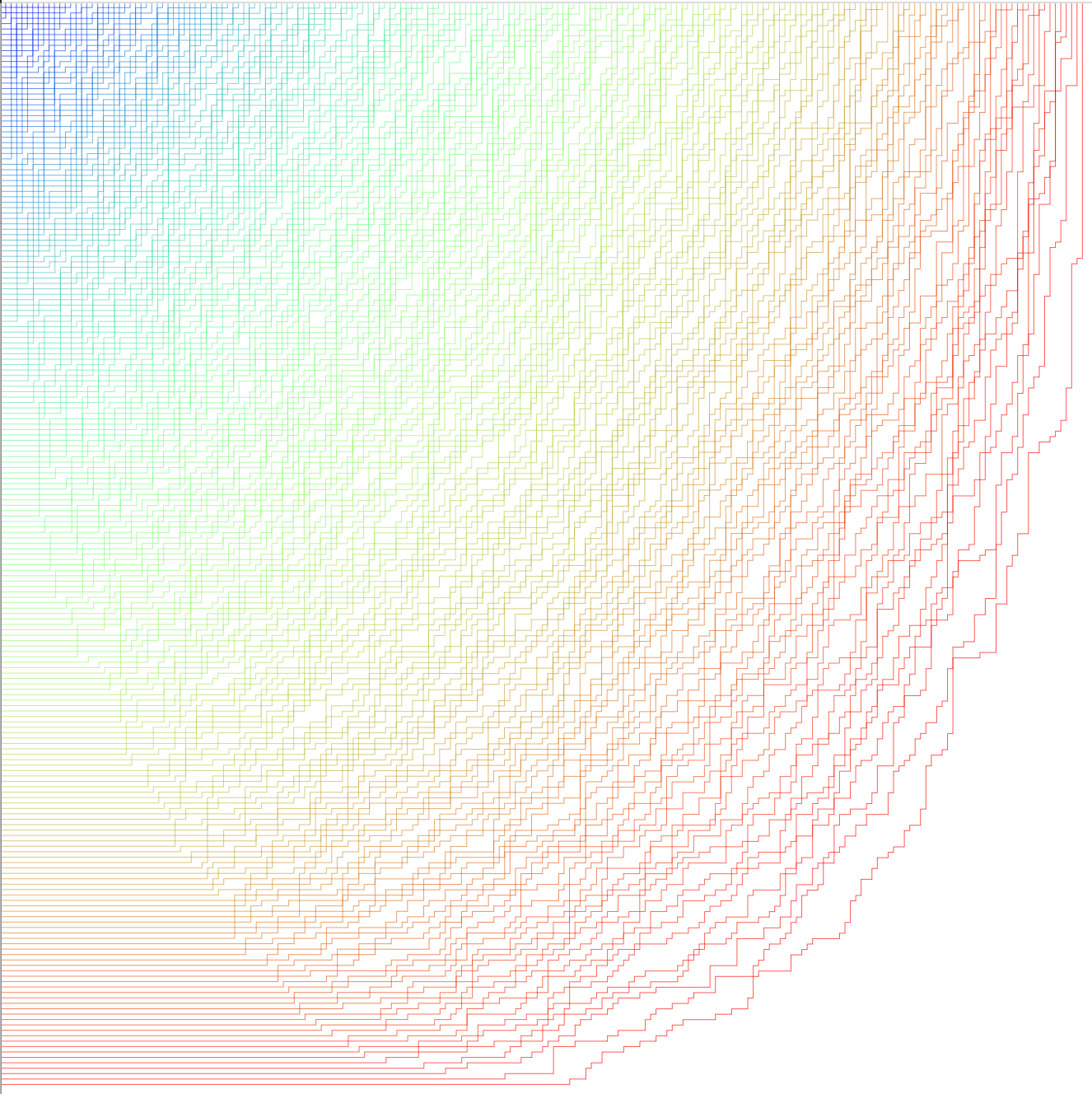} &
    \includegraphics[width=0.3\linewidth]{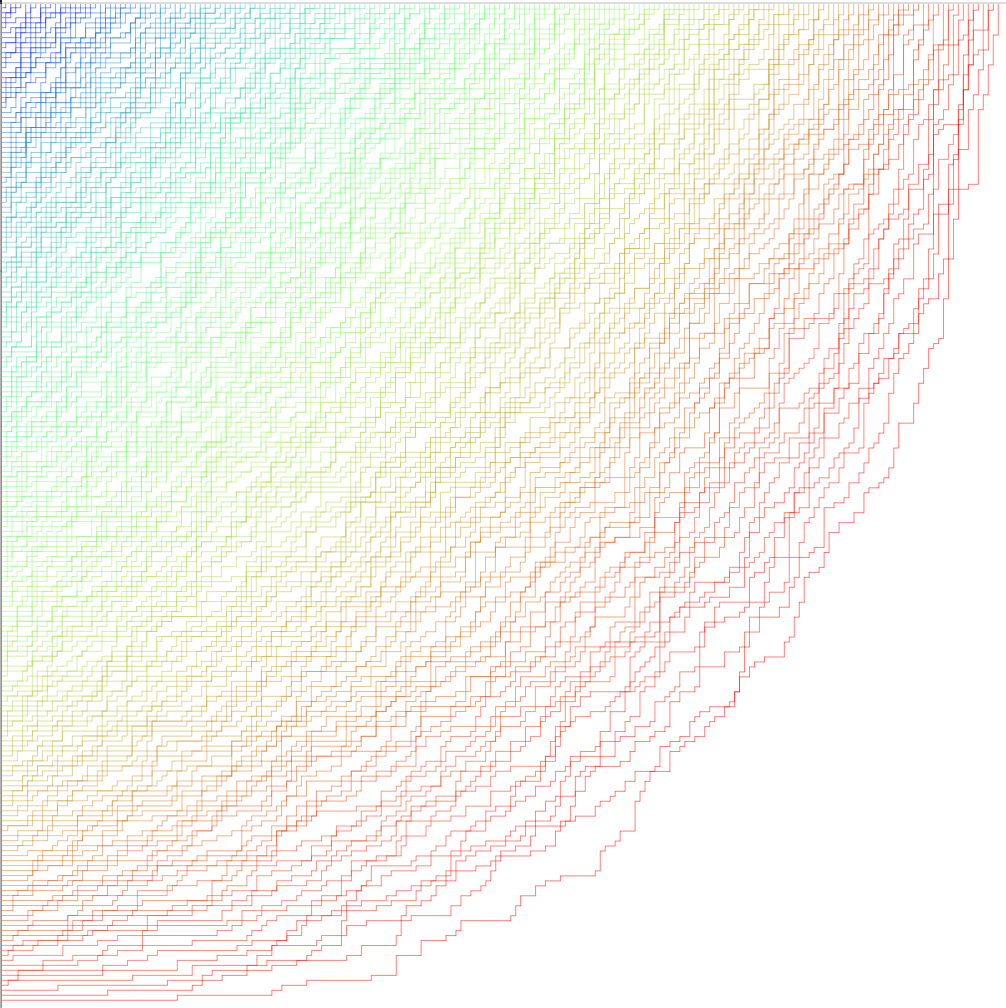} &
    \includegraphics[width=0.3\linewidth]{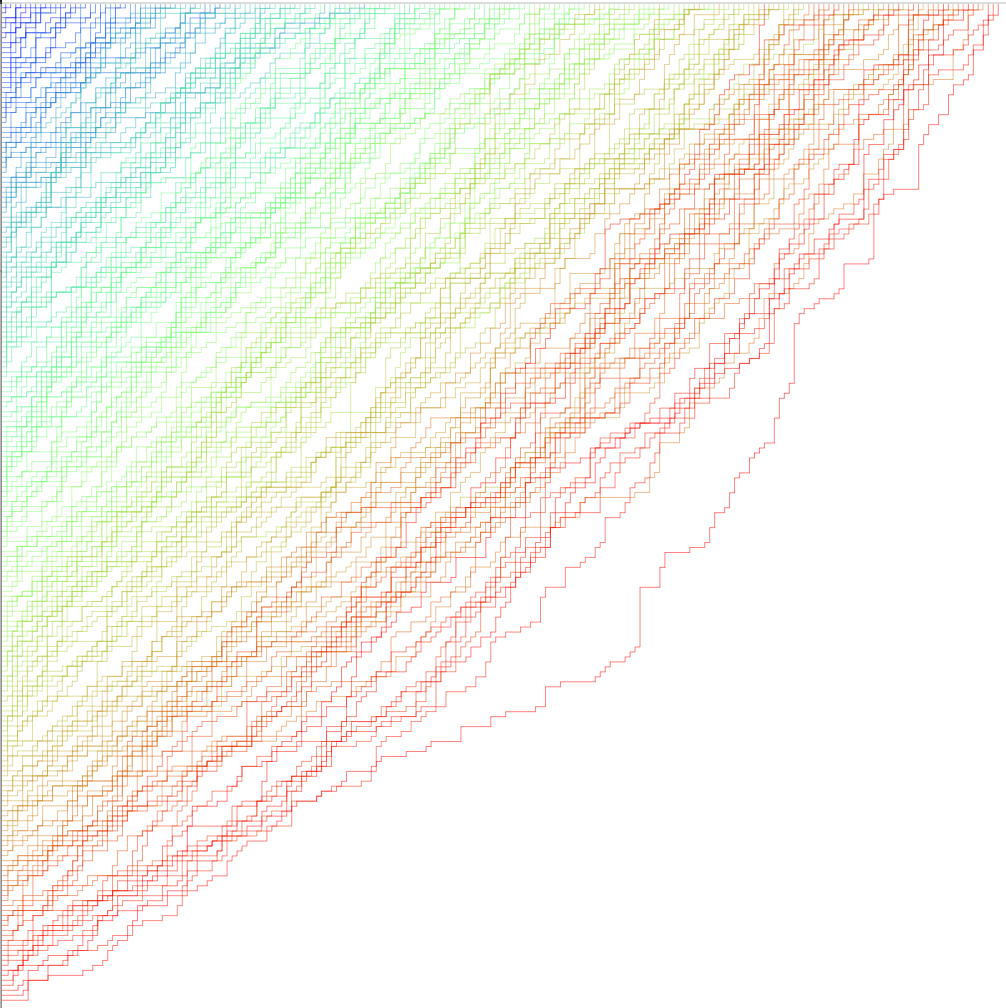} &
    \includegraphics[width=0.3\linewidth]{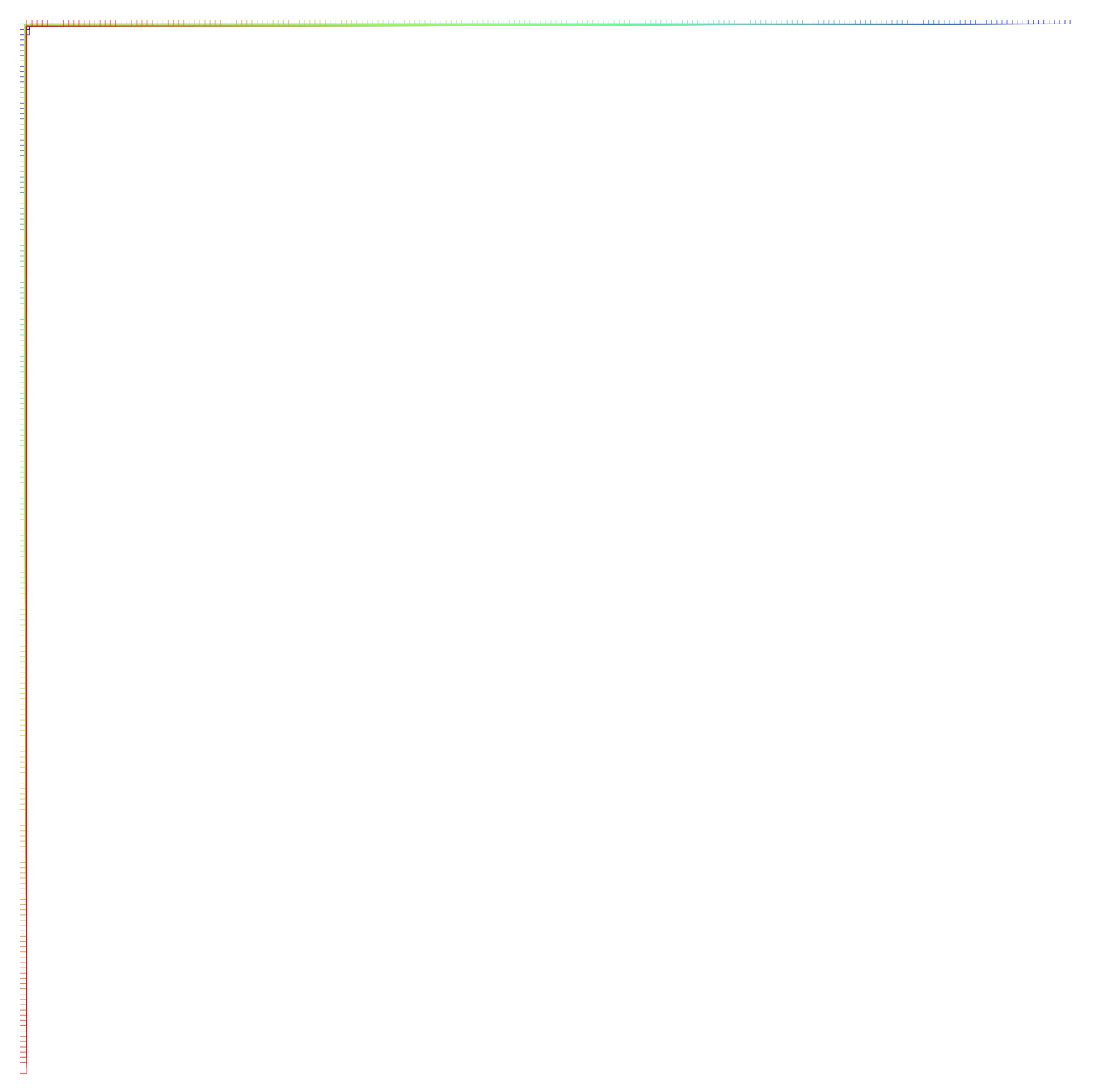}
    \\
    $t=0$ & $t=0.5$ & $t=1$ & $t=2$
    \end{tabular}
    }
    \caption{Simulations for $n=200$ with $x_1=\ldots=x_n=1$.}
    \label{fig:coloredEx}
\end{figure}

We note that for $t=0$ and $t=1$, the partition functions take a simpler form.
\begin{prop}
    For $t=0$, we have
    \[
    \tilde Z_n(x_1,\ldots,x_n|0)=\tilde Z^{\operatorname{id}}_n(x_1,\ldots,x_n|0)
    \]
    where $\operatorname{id}\in S_n$ is the identity permutation.
\end{prop}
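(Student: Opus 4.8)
The plan is to show that $\operatorname{id}$ is the only boundary permutation contributing a nonzero term to \eqref{eq:Zfree} at $t=0$: I will prove $\tilde Z^{\sigma}_n(x_1,\ldots,x_n|0)=0$ for every $\sigma\neq\operatorname{id}$, which suffices since $\operatorname{id}$ is the only element of $S_n$ that preserves the order of $[n]$. Concretely, for each path configuration compatible with a non-identity boundary permutation I will exhibit a vertex of weight $0$. The first step is to read off what $t=0$ does to the weights \eqref{eq:coloredweights}: a vertex has nonzero weight only if $d(s_N,s_E)=0$, and since $d(s_N,s_E)$ counts pairs $i<j$ with $i\in s_E$, $j\in s_E\cup s_N$, this forces $|s_E|\le1$ and, when $s_E=\{c\}$, forces $c$ to be the \emph{largest} color present at the vertex. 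I will use only the contrapositive, $(\star)$: in any nonzero-weight configuration, if colors $a<b$ are both present at a vertex $v$ then $a\notin s_E(v)$, i.e.\ the smaller color does not leave $v$ along the outgoing east edge.

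The second step is to translate the boundary data geometrically. Conservation of colors at each vertex, together with the boundary conditions (color $c$ meets the boundary only at its unique west entry and its unique north exit) and the fact that $s_W,s_S$ are incoming while $s_N,s_E$ are outgoing, shows that each color $c$ traces a single up-right lattice path, entering on the west boundary at row $n+1-c$ and exiting on the north boundary at column $\sigma(c)$. In each column $j$ that it visits, color $c$ occupies a contiguous block of rows and enters that column — from the left, or from the west boundary if $j=1$ — at a row $\alpha^{c}_j$ equal to the row at which it left column $j-1$. Now suppose $\sigma\neq\operatorname{id}$ and fix $a<b$ with $\sigma(a)>\sigma(b)$; set $\alpha_j:=\alpha^{a}_j$ and $\beta_j:=\alpha^{b}_j$, defined for $1\le j\le\sigma(b)$.

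Since $a<b$, color $a$ enters strictly above color $b$: $\alpha_1=n+1-a>n+1-b=\beta_1$. On the other hand $\alpha_{\sigma(b)}<\beta_{\sigma(b)}$, for otherwise color $a$ enters column $\sigma(b)$ inside the block $[\beta_{\sigma(b)},n]$ that color $b$ occupies there (running up to its north exit) and then turns east out of that column from a vertex still in that block, contradicting $(\star)$. Let $j^{\ast}$ be the least index with $\alpha_{j^{\ast}}<\beta_{j^{\ast}}$; then $2\le j^{\ast}\le\sigma(b)$ and $\alpha_{j^{\ast}-1}\ge\beta_{j^{\ast}-1}$. In column $j^{\ast}-1$ both colors run up and then turn east, color $a$ over rows $[\alpha_{j^{\ast}-1},\alpha_{j^{\ast}}]$ and color $b$ over $[\beta_{j^{\ast}-1},\beta_{j^{\ast}}]$; from $\beta_{j^{\ast}-1}\le\alpha_{j^{\ast}-1}\le\alpha_{j^{\ast}}<\beta_{j^{\ast}}$ the row $\alpha_{j^{\ast}}$ lies in color $b$'s block, so at the vertex of column $j^{\ast}-1$ and row $\alpha_{j^{\ast}}$ color $a$ turns east while color $b$ is present — contradicting $(\star)$ once more. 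Hence no nonzero-weight configuration has $\sigma\neq\operatorname{id}$, so $\tilde Z^{\sigma}_n(x_1,\ldots,x_n|0)=0$ for such $\sigma$ and the proposition follows from \eqref{eq:Zfree}.

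The only real difficulty I anticipate is in that last step: making rigorous the intuitively obvious statement that two up-right paths, one starting above-left of and ending to the right of the other, must ``cross'', and pinning the crossing to a specific vertex that violates $(\star)$. The care lies in the bookkeeping of the entry rows $\alpha_j,\beta_j$, in using contiguity of a color's block within a column to be sure color $b$ is really present at the offending vertex, and in handling the terminal column $\sigma(b)$ (where color $b$ exits north, not east) as a separate case. Everything else is routine.
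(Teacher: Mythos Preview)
Your proof is correct and follows essentially the same approach as the paper's: both argue that at $t=0$ a smaller color cannot exit east in the presence of a larger color, and hence two paths cannot ``cross'', forcing the exit permutation to be the identity. The paper dispatches the crossing argument in a single sentence (``otherwise two colors would cross with the smaller color exiting to the east''), whereas you spell it out rigorously via the entry-row sequences $\alpha_j,\beta_j$ and the minimal-index argument; this is a welcome level of detail but not a different idea.
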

\begin{proof}
    When $t=0$ the model no longer allows a color to exit a vertex to the east if larger colors are present in the vertex. In particular, since the paths enter on the west boundary ordered 1 to $n$ from top-to-bottom, they must exit on the north boundary ordered 1 to $n$ from left-to-right as otherwise two colors would cross with the smaller color exiting to the east.
\end{proof}

\begin{prop}\label{prop:t1prod}
    For $t=1$, we have
    \[
    \tilde Z^{\sigma}_n(x_1,\ldots,x_n|1) = \prod_{i=1}^n h_{\sigma_i-1}(x_{n-i+1}, x_{n-i+2},\ldots, x_n)
    \]
    where $h_m$ is the $m$-th complete homogeneous symmetric polynomial.
\end{prop}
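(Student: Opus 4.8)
The plan is to use the observation (flagged in the introduction) that at $t=1$ the colored model decouples into $n$ independent single--color path models, and then to enumerate single--color paths directly.

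\emph{Step 1 (decoupling).} Setting $t=1$ in \eqref{eq:coloredweights} kills the factor $t^{d(s_N,s_E)}$, leaving $\tilde w(s_W,s_S,s_N,s_E;x,1)=x^{|s_E|}$. I would then note that this weight, together with the admissibility conditions $s_S\cap s_W=s_N\cap s_E=\emptyset$ and $s_S\cup s_W=s_N\cup s_E$, is local in the colors: writing $\mathbf 1[c\in s_\bullet]$ for the indicator that color $c$ occupies the relevant edge, one has $x^{|s_E|}=\prod_c x^{\mathbf 1[c\in s_E]}$, and the admissibility condition is exactly the conjunction over $c$ of ``$c$ is absent from the vertex, or enters from exactly one of $\{W,S\}$ and exits to exactly one of $\{N,E\}$''. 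Since, moreover, each color enters only on the west boundary and exits only on the north boundary, a configuration contributing to $\tilde Z^{\sigma}_n$ is precisely an $n$-tuple of single--color monotone (up--right) lattice paths with the prescribed endpoints, with weight the product of the individual path weights. Hence $\tilde Z^{\sigma}_n(x_1,\dots,x_n|1)=\prod_{i=1}^n Z^{(i)}$, where $Z^{(i)}$ is the partition function of color $i$ alone.

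\emph{Step 2 (single--color enumeration).} Color $i$ enters at the $i$-th horizontal edge of the west boundary from the top, i.e.\ at the left edge of row $n-i+1$, and for the permutation $\sigma$ it exits the north boundary at column $\sigma_i$, i.e.\ through the top edge of vertex $(n,\sigma_i)$. Its trajectory is therefore an up--right lattice path running from column $1$ to column $\sigma_i$ and passing through the $i$ rows $n-i+1,n-i+2,\dots,n$; in total it makes $\sigma_i-1$ east--steps, freely distributed among those rows (a row may be used several times or not at all). In the weight $x^{|s_E|}$ only the vertices where the path exits east contribute a nontrivial factor, and such a vertex in row $i'$ contributes $x_{i'}$. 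Thus a path making $a_{i'}\ge0$ east--steps in row $i'$, with $\sum_{i'=n-i+1}^{n}a_{i'}=\sigma_i-1$, has weight $\prod_{i'=n-i+1}^{n}x_{i'}^{a_{i'}}$, and the map (path)$\mapsto(a_{n-i+1},\dots,a_n)$ is a bijection onto such nonnegative compositions; summing the monomials gives $Z^{(i)}=h_{\sigma_i-1}(x_{n-i+1},\dots,x_n)$. Multiplying over $i=1,\dots,n$ yields the claimed formula.

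The enumeration in Step 2 is routine once the paths are known to be independent; the point needing real care is Step 1, i.e.\ checking that at $t=1$ both the vertex weight and the nonzero--weight constraint factor over colors, so that $\tilde Z^{\sigma}_n(\cdot|1)$ really is a product of single--color partition functions and the domain--wall boundary data splits accordingly. It is also worth stating explicitly that the paths are directed (east/north exits only), which is what makes each single--color trajectory monotone and puts the composition bijection on solid footing.
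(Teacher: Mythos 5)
Your proof is correct and follows essentially the same route as the paper's: decoupling the colors at $t=1$ (which the paper asserts in one line and you justify more carefully) and then enumerating a single monotone path by its east-step counts per row, which is exactly the paper's bijection with single-row SSYT in the guise of compositions. No gaps; if anything your Step 1 makes explicit the factorization of both the weight and the admissibility constraint over colors, which the paper leaves implicit.
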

\begin{proof}
Observe that when $t=1$ the different colored paths no longer interact and behave independently from one another. Let $Z_{i,j}(x_1,\ldots,x_n)$ be the partition function for a single path entering an $n\times n$ grid on the west boundary at the $i$-th row from the top and exiting on the top boundary at the $j$-th column from the left. The partition function $Z^{\sigma}_n$ factors as
\[
\tilde Z^{\sigma}_n(x_1,\ldots,x_n|1) = \prod_{i=1}^n Z_{i,\sigma_i}(x_1,\ldots,x_n).
\]
We are left to compute $Z_{i,j}$. Observe that since the path starts in the $i$-th row from the top and can only travel up and right, $Z_{i,j}$ depends only on $x_{n-i+1}, x_{n-i+2},\ldots, x_n$. Note that $h_{i-1}(x_{n-j+1}, \ldots, x_n)$ counts the number of SSYT of a single row of length $i-1$ with entries in $\{n-j+1,\ldots,n\}$. One can construct a bijection between path configurations and tableaux by letting the height at which the path exits each vertex to the east map to the filling in each cell of the tableaux. 
\end{proof}

\subsection{The colorblind model: partition functions}

\begin{figure}
\begin{center}
\includegraphics[width=14cm]{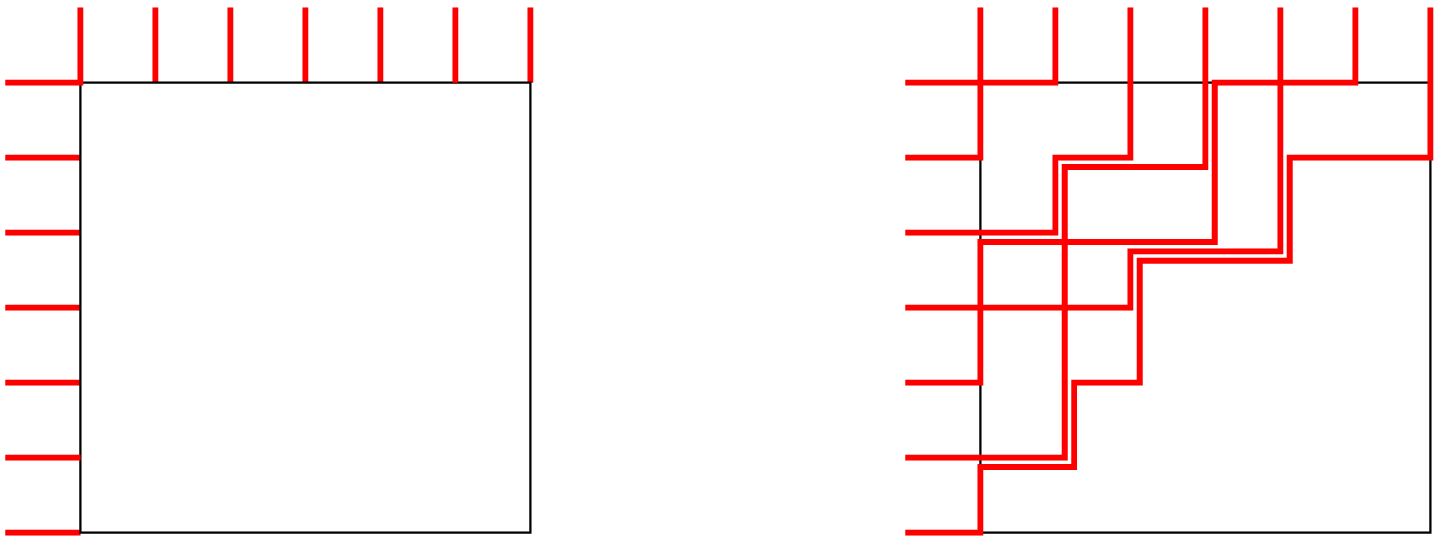}
\end{center}
\caption{\small $n\times n$ square grid with Domain Wall Boundaries (left): paths start on the W edge and end on the N edge. A sample configuration (right).}
\label{fig:colex}
\end{figure}

Now we introduce the colorblind vertex model. We consider the square lattice vertex model of non-intersecting touching paths. Each edge is occupied by a number of parallel path edges $n\in \Z_+$.
The vertex weight is
\begin{equation} \label{weights} w(n_W,n_S,n_N,n_E;x,t):= {n_N+n_E \choose n_E}_{\!\!\! t} \, t^{n_E(n_E-1)/2} \, x^{n_E} \, \delta_{n_W+n_S,n_N+n_E} 
\end{equation}
where $n_W,n_S,n_N,n_E \in \Z_+$ give the number of paths on the west, south, north, and east edge of the vertex, respectively. 

\begin{remark}
    These weights were studied in \cite{garbali2020modified}. There the authors also introduce a multicolored version different from the multicolored weights given in \eqref{eq:coloredweights} depending on an additional parameter $z$. It would be interesting to understand if the multicolored weights in \cite{garbali2020modified} could be obtained from the weights \eqref{eq:coloredweights} from a similar color merging procedure as in Lemma \ref{lem:coloredtouncolored}, although we do not pursue that here.
\end{remark}

We now consider the model on a $n\times n$ square grid with domain-wall boundary conditions, namely such that a single path originates on each W horizontal edge and terminates on each N vertical edge (in red thick lines on Figure \ref{fig:colex} left). A sample configuration for $n=7$ is represented in Figure \ref{fig:colex} (right).
The partition function reads
$$ Z_n(x_1,...,x_n\vert t):= \sum_{{\rm path}\, {\rm configs.}\atop} \prod_{i=1}^n \prod_{j=1}^n w(n_W(i,j),n_S(i,j),n_N(i,j),n_E(i,j);x_i,t) $$
where vertices are labeled by pairs $(i,j)\in [1,n]\times [1,n]$, enumerated from bottom to top and left to right.

To begin we relate this model to the colored model with weights given in \eqref{eq:coloredweights}\footnote{As far as the authors are aware, the connection between the weights in \cite{aggarwal2023colored,corteel2022vertex} and the weights in \cite{garbali2020modified} given in this lemma was not previously known.}.
\begin{lemma}\label{lem:coloredtouncolored}
    Fix $s_S, s_W\subset [n]$ with $|s_S|=n_S$, $|s_W|=n_W$, and $s_S \cap s_W =\emptyset$. We have
\[
w(n_S,n_W,n_N,n_E;x,t) = \sum_{\substack{s_N, s_E \in [n] \\ |s_N|=n_N, |s_E|=n_E}} \tilde w(s_S,s_W,s_N,s_E; x, t).
\]
\end{lemma}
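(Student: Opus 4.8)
The plan is to reduce the identity to a classical combinatorial formula for the Gaussian binomial coefficient. First I would dispense with the conservation law: the right-hand side carries the factor $\delta_{n_W+n_S,\,n_N+n_E}$, while on the left a summand $\tilde w(s_S,s_W,s_N,s_E;x,t)$ is nonzero only when $s_N\cap s_E=\emptyset$ and $s_N\cup s_E=s_S\cup s_W$, which (using $s_S\cap s_W=\emptyset$) forces $n_N+n_E=|s_S\cup s_W|=n_S+n_W$. The reordering of the first two slots of $\tilde w$ and of $w$ relative to their definitions is harmless, since $w$ depends on $n_W,n_S$ only through $n_W+n_S$ and $\tilde w$ not at all on $s_W,s_S$ beyond the support conditions. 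So I may assume $n_N+n_E=n_S+n_W$ and set $T:=s_S\cup s_W$, $N:=|T|=n_N+n_E$.

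Under this assumption the support conditions say exactly that $(s_N,s_E)$ is an ordered partition of $T$ into blocks of sizes $n_N$ and $n_E$; equivalently the sum runs over subsets $s_E\subseteq T$ with $|s_E|=n_E$, with $s_N:=T\setminus s_E$ determined. Since $\tilde w=x^{n_E}t^{d(s_N,s_E)}$ and the right side is $x^{n_E}$ times the remaining factors, the $x^{n_E}$ cancels and it remains to establish
\[
\sum_{\substack{s_E\subseteq T\\ |s_E|=n_E}} t^{\,d(T\setminus s_E,\,s_E)} \;=\; t^{\,n_E(n_E-1)/2}\binom{N}{n_E}_{t}.
\]

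The main step is to recompute $d$ combinatorially. Write $T=\{a_1<a_2<\dots<a_N\}$ and encode $s_E$ by the positions $1\le i_1<i_2<\dots<i_{n_E}\le N$ with $s_E=\{a_{i_1},\dots,a_{i_{n_E}}\}$. Because $s_E\cup s_N=T$, the definition gives $d(s_N,s_E)=\#\{(i,j):i<j,\ i\in s_E,\ j\in T\}=\sum_{k=1}^{n_E}\#\{a\in T:a>a_{i_k}\}=\sum_{k=1}^{n_E}(N-i_k)$. Hence the left-hand side equals $\sum_{1\le i_1<\dots<i_{n_E}\le N} t^{\,n_E N-(i_1+\dots+i_{n_E})}$, and the order-reversing substitution $i_k\mapsto N+1-i_{n_E+1-k}$ turns this into $t^{-n_E}\sum_{1\le j_1<\dots<j_{n_E}\le N} t^{\,j_1+\dots+j_{n_E}}$. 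I would finish with the standard identity $\sum_{1\le j_1<\dots<j_r\le N} t^{\,j_1+\dots+j_r}=t^{\,r(r+1)/2}\binom{N}{r}_t$, proved via the classical bijection $j_k=k+\mu_{r-k+1}$ between such sequences and partitions $\mu$ inside the $r\times(N-r)$ box, which matches $|\mu|$ with $(\sum j_k)-r(r+1)/2$ and uses $\binom{N}{r}_t=\sum_{\mu\subseteq r\times(N-r)}t^{|\mu|}$. Taking $r=n_E$ and simplifying $n_E(n_E+1)/2-n_E=n_E(n_E-1)/2$ gives the claim.

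I do not expect a genuine obstacle: once $d$ is written as $\sum_k(N-i_k)$, the statement is equivalent to a textbook $q$-binomial identity. The only thing to be careful about is bookkeeping — the swapped argument slots, the choice to parametrize by $s_E$ rather than $s_N$, and the direction of the order-reversal in the substitution — so that the exponents of $t$ come out with the correct signs and constant shift.
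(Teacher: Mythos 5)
Your proof is correct, and it reaches the same $q$-binomial identity as the paper but by a different route. The paper's proof sums over \emph{all} pairs $(s_N,s_E)$ at once: since each color $i\in s_E$ contributes $x\,t^{\#\{j\in T:\,j>i\}}$ independently of the other colors, the full sum factors as $\prod_{k=0}^{N-1}(1+xt^{k})$, and the lemma follows by extracting the coefficient of $x^{n_E}$ via the $q$-binomial theorem $\prod_{k=0}^{N-1}(1+xt^{k})=\sum_{i}\binom{N}{i}_{t}t^{i(i-1)/2}x^{i}$. You instead work directly on the fixed-cardinality slice $|s_E|=n_E$, compute $d(s_N,s_E)=\sum_k(N-i_k)$ exactly as the paper implicitly does, and then prove the needed identity $\sum_{j_1<\cdots<j_r\le N}t^{\sum j_k}=t^{r(r+1)/2}\binom{N}{r}_t$ bijectively via partitions in an $r\times(N-r)$ box. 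The two are equivalent (your identity is precisely the degree-$n_E$ piece of the product formula), so the trade-off is purely one of packaging: the paper's version is shorter and makes the "colors act independently" structure explicit, while yours is more self-contained since it does not invoke the $q$-binomial theorem as a black box. Your observations about the harmless reordering of the first two argument slots and about the conservation constraint $n_N+n_E=n_S+n_W$ are correct and worth stating; the only cosmetic slip is that you consistently refer to the sum of $\tilde w$ as the "left-hand side" when in the lemma's display it is the right-hand side, which does not affect the argument.
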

\begin{proof}
    Note that in the colored vertex model only the relative order of the colors present matters to the weight, so we may assume $s_S\cup s_W=\{1,2,\ldots, n_S+n_W\}$. The path of color $i$ can exit the vertex to the north contributing a weight of $1$ to the vertex or it can exit to the east and contribute a weight of $x t^{n_S+n_W-i}$, regardless of the behavior of the other paths. We see that 
    \[
    \begin{aligned}
    \sum_{s_N, s_E \in [n]} \tilde w(s_S,s_W,s_N,s_E; x, t) &\;= \prod_{i=0}^{n_S+n_W-1}(1+xt^{i}) \\
    &\;= \sum_{i=0}^{n_S+s_W} \binom{n_S+n_W}{i}_{\!\! t} t^{i(i-1)/2}x^i
    \end{aligned}
    \]
    where the second equality follows from $q$-binomial theorem (with $q$ replaced by $t$). Restricting to the terms such that $|s_E|=n_E$ (that is, $i=n_E$ in the sum) and recalling that $n_S+n_W=n_N+n_E$ gives the lemma.
\end{proof}

Using Lemma \ref{lem:coloredtouncolored}, we immediately get the following equality between the partition function between the colored and colorblind models.
\begin{thm}\label{thm:coloredtouncolored}
\[
\tilde Z_n(x_1,\ldots,x_n|t) = Z_n(x_1,\ldots,x_n|t).
\]
\end{thm}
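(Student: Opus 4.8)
The plan is to deduce the theorem directly from Lemma \ref{lem:coloredtouncolored} by a vertex-by-vertex comparison of the two partition functions. The key observation is that both $\tilde Z_n$ and $Z_n$ are sums over path configurations with the \emph{same} domain-wall boundary data, weighted by a product of local vertex weights, so it suffices to set up a weight-preserving correspondence between colored configurations (with free top coloring) and colorblind configurations.

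First I would fix a colorblind configuration on the $n\times n$ grid: this assigns to each edge a nonnegative integer (the number of path strands on it), satisfying $n_W+n_S=n_N+n_E$ at every vertex and the domain-wall constraint ($n_W=1$ on each West edge, $n_N=1$ on each North edge, and $0$ on the other outer edges). Given such a configuration, I would sum over all colored configurations that ``refine'' it, i.e.\ colored configurations whose underlying strand counts on every edge agree with the given colorblind numbers. Because the colors entering on the West boundary are forced to be $1,\dots,n$ from top to bottom and the top coloring is free, every colored configuration refines exactly one colorblind configuration, so these refinement classes partition the colored configuration space; summing over the colorblind configuration and then over its refinements reproduces $\tilde Z_n$.

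Next, within a fixed refinement class, the colored weight of a configuration factors as $\prod_{i,j}\tilde w(s_W(i,j),s_S(i,j),s_N(i,j),s_E(i,j);x_i,t)$, and crucially the refinement choices at distinct vertices are \emph{not} independent — the color sets must match up along shared edges. However, the sum over colored refinements can be performed \emph{sequentially} vertex by vertex: proceeding through the grid in an order compatible with the path flow (say bottom-to-top, left-to-right), at each vertex the incoming color sets $s_W,s_S$ are already determined by earlier vertices, and by Lemma \ref{lem:coloredtouncolored} the sum over the allowed outgoing splittings $(s_N,s_E)$ with prescribed sizes $(n_N(i,j),n_E(i,j))$ equals exactly $w(n_W(i,j),n_S(i,j),n_N(i,j),n_E(i,j);x_i,t)$. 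Telescoping this over all vertices yields $\sum_{\text{refinements}} (\text{colored weight}) = \prod_{i,j} w(\dots;x_i,t)$, which is precisely the colorblind weight of the fixed colorblind configuration; summing over all colorblind configurations then gives $Z_n$.

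The main obstacle is making the ``sequential summation'' rigorous: one must check that the outgoing color sets at each vertex are genuinely free to range over all size-$(n_N,n_E)$ splittings of $s_W\cup s_S$ once the incoming sets are fixed, and that no global consistency obstruction arises. This is where the domain-wall boundary conditions and the flow structure matter — because paths only move up and right and the incoming West colors are the fixed set $[n]$, the partial color sets propagate forward consistently and the only local constraint at each vertex is $s_N\sqcup s_E = s_W\sqcup s_S$, exactly the hypothesis of Lemma \ref{lem:coloredtouncolored}. Once this bookkeeping is set up, the result follows immediately; I would present it as a short argument citing the Lemma, perhaps with a remark that the factorization of the colorblind weight into the sequential product is what allows the telescoping.
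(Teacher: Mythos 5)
Your proposal is correct and follows the same route as the paper: fix a colorblind configuration, sum over colored refinements with matching edge multiplicities, and apply Lemma \ref{lem:coloredtouncolored} vertex by vertex. The paper's proof is just a terser version of this (it says ``repeatedly applying Lemma \ref{lem:coloredtouncolored}'' where you spell out the sequential, flow-ordered summation that makes the telescoping work).
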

\begin{proof}
    Fix a path configuration in the colorblind model. Now consider all configurations of paths in the colored model such that the number of paths occupying each edge is the same as that of the colorblind configuration. Repeatedly applying Lemma \ref{lem:coloredtouncolored} shows that the weight in the two models agree. The theorem follows by summing over all the configurations.
\end{proof}
Figure \ref{fig:colortouncoloredEx} gives examples of the one-to-many weight-preserving mapping from a colorblind configuration to a collection of colored configurations used in the proof of Theorem \ref{thm:coloredtouncolored}.

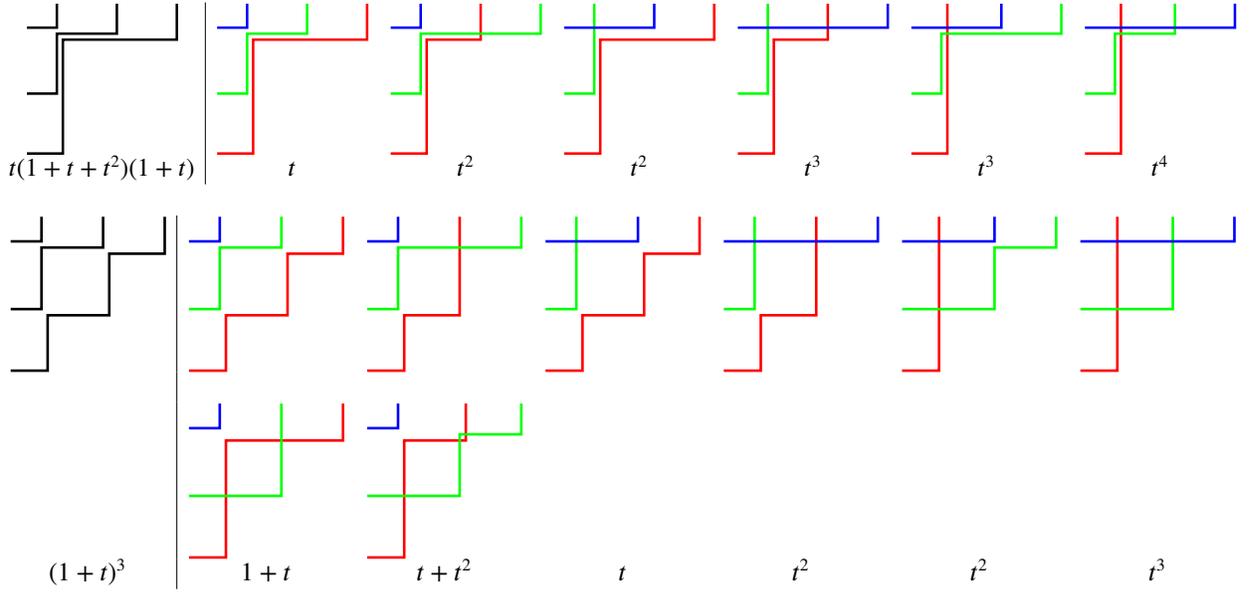
\begin{figure}
    \centering
\[
\resizebox{\textwidth}{!}{
\begin{tabular}{c|cccccc}
\begin{tikzpicture}[baseline=(current bounding box.center)]
\draw[very thick] (0,0.5)--(0.6,0.5)--(0.6,2.4)--(2.5,2.4)--(2.5,3);
\draw[very thick] (0,1.5)--(0.5,1.5)--(0.5,2.5)--(1.5,2.5)--(1.5,3);
\draw[very thick] (0,2.6)--(0.5,2.6)--(0.5,3);
\end{tikzpicture}
&
\begin{tikzpicture}[baseline=(current bounding box.center)]
\draw[very thick, red] (0,0.5)--(0.6,0.5)--(0.6,2.4)--(2.5,2.4)--(2.5,3);
\draw[very thick, green] (0,1.5)--(0.5,1.5)--(0.5,2.5)--(1.5,2.5)--(1.5,3);
\draw[very thick, blue] (0,2.6)--(0.5,2.6)--(0.5,3);
\end{tikzpicture}
&
\begin{tikzpicture}[baseline=(current bounding box.center)]
\draw[very thick, red] (0,0.5)--(0.6,0.5)--(0.6,2.4)--(1.5,2.4)--(1.5,3);
\draw[very thick, green] (0,1.5)--(0.5,1.5)--(0.5,2.5)--(2.5,2.5)--(2.5,3);
\draw[very thick, blue] (0,2.6)--(0.5,2.6)--(0.5,3);
\end{tikzpicture}
&
\begin{tikzpicture}[baseline=(current bounding box.center)]
\draw[very thick, red] (0,0.5)--(0.6,0.5)--(0.6,2.4)--(2.5,2.4)--(2.5,3);
\draw[very thick, green] (0,1.5)--(0.5,1.5)--(0.5,3);
\draw[very thick, blue] (0,2.6)--(1.5,2.6)--(1.5,3);
\end{tikzpicture}
&
\begin{tikzpicture}[baseline=(current bounding box.center)]
\draw[very thick, red] (0,0.5)--(0.6,0.5)--(0.6,2.4)--(1.5,2.4)--(1.5,3);
\draw[very thick, green] (0,1.5)--(0.5,1.5)--(0.5,3);
\draw[very thick, blue] (0,2.6)--(2.5,2.6)--(2.5,3);
\end{tikzpicture}
&
\begin{tikzpicture}[baseline=(current bounding box.center)]
\draw[very thick, red] (0,0.5)--(0.6,0.5)--(0.6,3);
\draw[very thick, green] (0,1.5)--(0.5,1.5)--(0.5,2.5)--(2.5,2.5)--(2.5,3);
\draw[very thick, blue] (0,2.6)--(1.5,2.6)--(1.5,3);
\end{tikzpicture}
&
\begin{tikzpicture}[baseline=(current bounding box.center)]
\draw[very thick, red] (0,0.5)--(0.6,0.5)--(0.6,3);
\draw[very thick, green] (0,1.5)--(0.5,1.5)--(0.5,2.5)--(1.5,2.5)--(1.5,3);
\draw[very thick, blue] (0,2.6)--(2.5,2.6)--(2.5,3);
\end{tikzpicture}
\\
$t(1+t+t^2)(1+t)$ & $t$ & $t^2$ & $t^2$ & $t^3$ & $t^3$ & $t^4$
\end{tabular}
}
\]

\[
\resizebox{\textwidth}{!}{
\begin{tabular}{c|cccccc}
\begin{tikzpicture}[baseline=(current bounding box.center)]
\draw[very thick] (0,0.5)--(0.6,0.5)--(0.6,1.4)--(1.6,1.4)--(1.6,2.4)--(2.5,2.4)--(2.5,3);
\draw[very thick] (0,1.5)--(0.5,1.5)--(0.5,2.5)--(1.5,2.5)--(1.5,3);
\draw[very thick] (0,2.6)--(0.5,2.6)--(0.5,3);
\end{tikzpicture}
&
\begin{tikzpicture}[baseline=(current bounding box.center)]
\draw[very thick, red] (0,0.5)--(0.6,0.5)--(0.6,1.4)--(1.6,1.4)--(1.6,2.4)--(2.5,2.4)--(2.5,3);
\draw[very thick, green] (0,1.5)--(0.5,1.5)--(0.5,2.5)--(1.5,2.5)--(1.5,3);
\draw[very thick, blue] (0,2.6)--(0.5,2.6)--(0.5,3);
\end{tikzpicture}
&
\begin{tikzpicture}[baseline=(current bounding box.center)]
\draw[very thick, red] (0,0.5)--(0.6,0.5)--(0.6,1.4)--(1.5,1.4)--(1.5,3);
\draw[very thick, green] (0,1.5)--(0.5,1.5)--(0.5,2.5)--(2.5,2.5)--(2.5,3);
\draw[very thick, blue] (0,2.6)--(0.5,2.6)--(0.5,3);
\end{tikzpicture}
&
\begin{tikzpicture}[baseline=(current bounding box.center)]
\draw[very thick, red] (0,0.5)--(0.6,0.5)--(0.6,1.4)--(1.6,1.4)--(1.6,2.4)--(2.5,2.4)--(2.5,3);
\draw[very thick, green] (0,1.5)--(0.5,1.5)--(0.5,3);
\draw[very thick, blue] (0,2.6)--(1.5,2.6)--(1.5,3);
\end{tikzpicture}
&
\begin{tikzpicture}[baseline=(current bounding box.center)]
\draw[very thick, red] (0,0.5)--(0.6,0.5)--(0.6,1.4)--(1.5,1.4)--(1.5,3);
\draw[very thick, green] (0,1.5)--(0.5,1.5)--(0.5,3);
\draw[very thick, blue] (0,2.6)--(2.5,2.6)--(2.5,3);
\end{tikzpicture}
&
\begin{tikzpicture}[baseline=(current bounding box.center)]
\draw[very thick, red] (0,0.5)--(0.6,0.5)--(0.6,3);
\draw[very thick, green] (0,1.5)--(1.5,1.5)--(1.5,2.5)--(2.5,2.5)--(2.5,3);
\draw[very thick, blue] (0,2.6)--(1.5,2.6)--(1.5,3);
\end{tikzpicture}
&
\begin{tikzpicture}[baseline=(current bounding box.center)]
\draw[very thick, red] (0,0.5)--(0.6,0.5)--(0.6,3);
\draw[very thick, green] (0,1.5)--(1.5,1.5)--(1.5,3);
\draw[very thick, blue] (0,2.6)--(2.5,2.6)--(2.5,3);
\end{tikzpicture}
\\
\\
& 
\begin{tikzpicture}[baseline=(current bounding box.center)]
\draw[very thick, red] (0,0.5)--(0.6,0.5)--(0.6,2.4)--(2.5,2.4)--(2.5,3);
\draw[very thick, green] (0,1.5)--(1.5,1.5)--(1.5,3);
\draw[very thick, blue] (0,2.6)--(0.5,2.6)--(0.5,3);
\end{tikzpicture}
&
\begin{tikzpicture}[baseline=(current bounding box.center)]
\draw[very thick, red] (0,0.5)--(0.6,0.5)--(0.6,2.4)--(1.6,2.4)--(1.6,3);
\draw[very thick, green] (0,1.5)--(1.5,1.5)--(1.5,2.5)--(2.5,2.5)--(2.5,3);
\draw[very thick, blue] (0,2.6)--(0.5,2.6)--(0.5,3);
\end{tikzpicture}
\\
$(1+t)^3$ & $1+t$ & $t+t^2$ & $t$ & $t^2$ & $t^2$ & $t^3$
\end{tabular}
}
\]
    \caption{Two example of the one-to-many mapping from a colorblind configuration to a collection of colored configurations such that the total weight is preserved. Here $n=3$, $x_1=x_2=x_3=1$, and we order the colors {\color{blue} blue} $<$ {\color{green} green} $<$ {\color{red} red}.}
    \label{fig:colortouncoloredEx}
\end{figure}

\begin{thm}
We have, for $t=1$,
$$ Z_n(x_1,...,x_n\vert 1)={\rm Per}_{1\leq i,j \leq n} \left( h_{j-1}(x_{n-i+1},x_{n-i+2},...,x_n) \right) ,$$
where ${\rm Per}$ stands for the permanent (i.e. determinant without signs) and $h_m(x_1,...,x_k)$ denotes the complete symmetric function with generating series
$$\prod_{i=1}^k \frac{1}{1-u x_i}= \sum_{m\geq 0} u^m h_m(x_1,...,x_k) .$$
\end{thm}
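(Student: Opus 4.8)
The plan is to combine the two theorems already proved in the excerpt. By Theorem~\ref{thm:coloredtouncolored} we have $Z_n(x_1,\dots,x_n\vert 1)=\tilde Z_n(x_1,\dots,x_n\vert 1)$, and by \eqref{eq:Zfree} the latter equals $\sum_{\sigma\in S_n}\tilde Z^\sigma_n(x_1,\dots,x_n\vert 1)$. Proposition~\ref{prop:t1prod} then gives
\[
Z_n(x_1,\dots,x_n\vert 1)=\sum_{\sigma\in S_n}\prod_{i=1}^n h_{\sigma_i-1}(x_{n-i+1},\dots,x_n).
\]
So the only thing left to do is to recognize this sum over permutations as a permanent: setting $M_{ij}:=h_{j-1}(x_{n-i+1},\dots,x_n)$, the definition of the permanent as $\operatorname{Per}(M)=\sum_{\sigma\in S_n}\prod_{i=1}^n M_{i,\sigma_i}$ matches the above display term by term. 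That is essentially the whole proof; it is a one-line assembly.

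First I would invoke Theorem~\ref{thm:coloredtouncolored} to pass from the colorblind to the colored partition function, then expand via \eqref{eq:Zfree} into a sum over $\sigma\in S_n$, then substitute the product formula from Proposition~\ref{prop:t1prod}. Second, I would write out the definition of the permanent of the matrix with $(i,j)$ entry $h_{j-1}(x_{n-i+1},\dots,x_n)$ and observe it coincides with the resulting sum. Finally I would note that this matrix is exactly the one whose entries are displayed in the statement (with $j-1$ playing the role of $\sigma_i-1$ as $\sigma$ ranges over $S_n$, each column index $j$ being hit exactly once per term), and that the generating-series normalization of $h_m$ given in the statement is the standard one used in Proposition~\ref{prop:t1prod}.

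There is no real obstacle here: the statement is a corollary of the two preceding results, and the only ``work'' is bookkeeping the indices so that the $h_{\sigma_i-1}$ appearing in Proposition~\ref{prop:t1prod} line up with the $(i,\sigma_i)$ entry of the claimed matrix rather than, say, its transpose. Since the permanent is invariant under transposition, even getting that bookkeeping slightly off does not change the answer, which makes the argument robust. If anything, the one point worth a sentence of care is confirming that the argument list $x_{n-i+1},\dots,x_n$ depends only on the row index $i$ (not on $\sigma$), so that it can be placed inside a single matrix entry indexed by $(i,j)$; this is immediate from the statement of Proposition~\ref{prop:t1prod}. Thus I expect the proof to be three or four lines long.
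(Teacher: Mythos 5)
Your proposal is correct and follows exactly the paper's own proof: apply Theorem~\ref{thm:coloredtouncolored}, expand via \eqref{eq:Zfree} into a sum over $\sigma\in S_n$, substitute the product formula of Proposition~\ref{prop:t1prod}, and recognize the resulting sum as the permanent. The index bookkeeping you flag is handled the same way in the paper, so there is nothing to add.
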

\begin{proof}
    Starting from Theorem \ref{thm:coloredtouncolored} we have
    \[
    \begin{aligned}
    Z_n(x_1,...,x_n\vert 1) &\;= \tilde Z_n(x_1,...,x_n\vert 1) \\
    &\;= \sum_{\sigma\in S_n}\tilde Z_n^{\sigma}(x_1,...,x_n\vert 1) \\
    &\;= \sum_{\sigma\in S_n} \prod_{i=1}^n h_{\sigma_i-1}(x_{n-i+1}, x_{n-i+2},\ldots, x_n) \\
    &\;= {\rm Per}_{1\leq i,j \leq n} \left( h_{j-1}(x_{n-i+1},x_{n-i+2},...,x_n) \right)
    \end{aligned}
    \]
    where the third equality follows from Proposition \ref{prop:t1prod}.
\end{proof}

\begin{figure}
\begin{center}
\includegraphics[width=14cm]{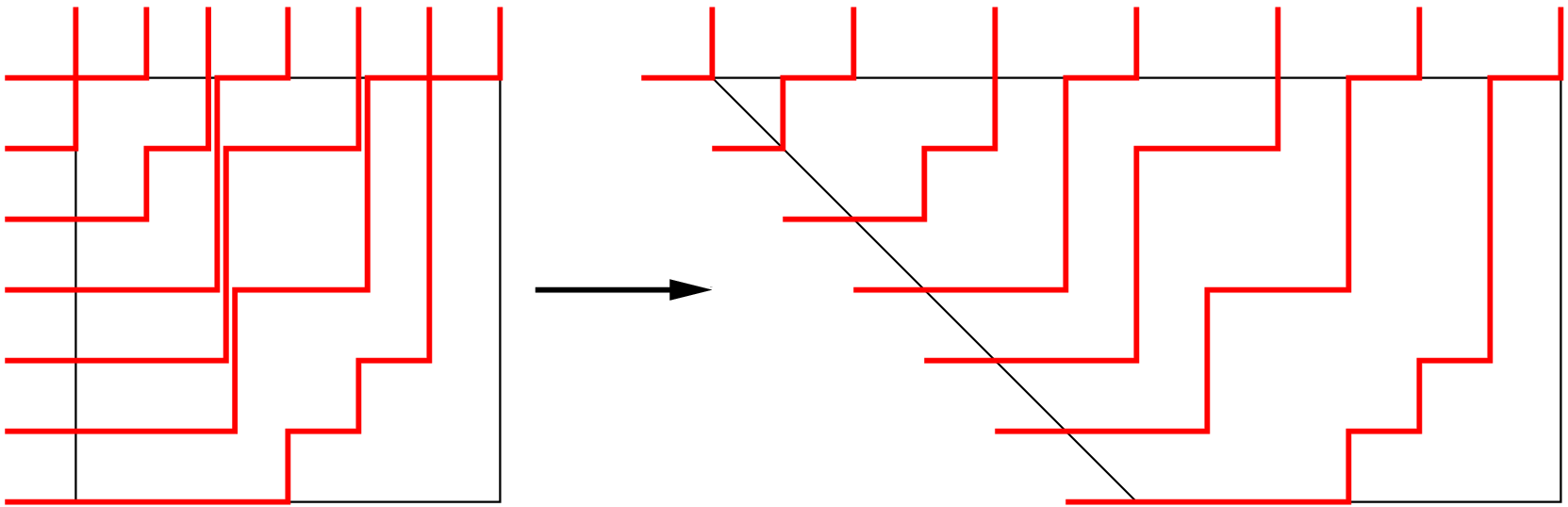}
\end{center}
\caption{\small For $t=0$, the sliding map is a bijection from the touching path configurations (with only vertical multiply occupied edges) 
to strictly non-intersecting paths.}
\label{fig:slide}
\end{figure}

\begin{thm}
We have, for $t=0$,
$$ Z_n(x_1,...,x_n\vert 0)=\prod_{1\leq i< j\leq n} (x_i+x_j) .$$
\end{thm}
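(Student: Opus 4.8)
The plan is to combine Theorem~\ref{thm:coloredtouncolored} (which identifies $Z_n(x_1,\dots,x_n|0)$ with $\tilde Z_n(x_1,\dots,x_n|0)$) with the first Proposition ($\tilde Z_n(x|0)=\tilde Z_n^{\operatorname{id}}(x|0)$), so that the whole computation reduces to evaluating the single colored partition function $\tilde Z_n^{\operatorname{id}}(x_1,\dots,x_n|0)$. In this configuration the $n$ colored paths enter on the west ordered $1,\dots,n$ top to bottom, exit on the north in the \emph{same} order left to right, and at $t=0$ no color may exit east with a larger color present at the vertex. This last constraint means that whenever two paths share a vertex, the smaller-colored path must go north and the larger east; equivalently, in the family of lattice paths the path of color $i$ (starting at west row $i$ from the top, i.e. height $n-i+1$, ending at north column $i$ from the left) can never cross above a lower-colored path, so the paths are forced into a nested configuration. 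I would first argue that the $t=0$, $\operatorname{id}$ configurations are in weight-preserving bijection with \emph{non-intersecting} lattice paths from $\{(0,n-i+1)\}_{i=1}^n$ to $\{(i,n+1)\}_{i=1}^n$ going up and right, where the weight of a path in row-parameter region $x_i$ contributes $x_i$ for each east step in row $i$ — here I should be careful, since at $t=0$ vertical edges can still be multiply occupied, so ``non-intersecting'' really means the sliding map of Figure~\ref{fig:slide}; alternatively one can phrase it directly via the $x$-weight $= x_i^{\#\{\text{east steps of the configuration in row } i\}}$ summed over valid configs.

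Next, I would set up the Lindström–Gessel–Viennot (LGV) determinant for these non-intersecting paths. The single-path generating function from west-entry $i$ to north-exit $j$, weighting an east step in row $k$ by $x_k$, is exactly the one computed in the proof of Proposition~\ref{prop:t1prod}: it equals $h_{j-1}(x_{n-i+1},\dots,x_n)$ (the path starting at height $n-i+1$ accumulates east-step weights only from rows $n-i+1,\dots,n$, and the number of east steps before reaching column $j$ is $j-1$). Because the endpoints are in ``opposite order'' relative to what LGV wants — paths that don't intersect must connect source $i$ to sink $i$, which is the identity permutation — the signed sum collapses and $\tilde Z_n^{\operatorname{id}}(x|0) = \det_{1\le i,j\le n}\big(h_{j-1}(x_{n-i+1},\dots,x_n)\big)$. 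So the problem becomes: show this determinant equals $\prod_{1\le i<j\le n}(x_i+x_j)$.

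The main obstacle — and the real content of the proof — is evaluating that determinant. I would do this by row/column manipulation exploiting the nested structure of the arguments: the $i$-th row involves $h_{j-1}(x_{n-i+1},\dots,x_n)$ and the $(i{+}1)$-th row involves $h_{j-1}(x_{n-i},x_{n-i+1},\dots,x_n)$, which differ by one extra variable $x_{n-i}$. Using the standard identity $h_m(y_0,y_1,\dots) - y_0\, h_{m-1}(y_0,y_1,\dots) = h_m(y_1,\dots)$ one can subtract $x_{n-i}$ times one row from the adjacent row to reduce entries; iterating, the determinant telescopes. Concretely, I expect that after these reductions the matrix becomes (up to triangular factors) something like a matrix whose $(i,j)$ entry is $e$-type or a simple monomial-symmetric object in $x_{n-i+1},\dots,x_n$, and that an induction on $n$ closes the argument: expanding along the last row (whose entries are $h_{j-1}(x_n) = x_n^{j-1}$, a geometric/Vandermonde row) or peeling off the variable $x_1$ should produce the factor $\prod_{j=2}^n (x_1+x_j)$ times $\det_{1\le i,j\le n-1}\big(h_{j-1}(x_{n-i+1},\dots,x_n)\big)$, giving the product formula by induction. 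An alternative route to the same determinant evaluation: recognize $Z_n(x|0)$ combinatorially as counting the nested configurations with the stated $x$-weight and check directly that this generating function satisfies the recursion $Z_n = \prod_{j=2}^n(x_1+x_j)\cdot Z_{n-1}(x_2,\dots,x_n)$ by analyzing the options for the bottommost path (color $1$), whose first vertex in row $n$ either goes straight up, contributing no $x_1$, or travels east past some columns; summing the geometric possibilities and matching with the removal of variable $x_1$ should yield the factor $\prod_{j\ge 2}(x_1+x_j)$. I would present whichever of these is cleanest, but in all cases the determinant/recursion evaluation is the crux, while the reductions to it (Theorem~\ref{thm:coloredtouncolored}, the $\operatorname{id}$ Proposition, and LGV) are essentially bookkeeping.
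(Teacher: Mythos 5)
Your overall route --- reduce to the $t=0$ model, pass to non-intersecting paths, and evaluate a Lindstr\"om--Gessel--Viennot determinant --- is the same as the paper's, but the execution of the key step is incorrect. You set up the LGV determinant with the \emph{original} sources and sinks (west entry $i$ to north exit in column $j$), giving entries $h_{j-1}(x_{n-i+1},\dots,x_n)$. That determinant counts only the \emph{vertex-disjoint} path families, which form a strict subset of the $t=0$ touching configurations: at $t=0$ vertical edges, and hence vertices, may still be multiply occupied. Already for $n=2$ your determinant is
$\det\left(\begin{smallmatrix}1 & x_2\\ 1 & x_1+x_2\end{smallmatrix}\right)=x_1$,
whereas $Z_2(x_1,x_2\vert 0)=x_1+x_2$; the missing term $x_2$ comes from the configuration in which both paths pass through the top-left vertex (one entering from the west and exiting north, the other entering from the south and exiting east). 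You flag exactly this danger (``non-intersecting really means the sliding map'') but then do not let the sliding map act on the endpoints. The sliding map translates the $i$-th path from the top by $i-1$ steps to the right; this is precisely what renders the families vertex-disjoint, but it also moves the exit of the $i$-th path from column $i$ to column $2i-1$. With the shifted sinks the LGV entries become complete homogeneous functions of degree $2j-i$ rather than $j-1$, and it is \emph{that} determinant which, after row manipulations, is the Jacobi--Trudi determinant of the staircase Schur function and equals $\prod_{i<j}(x_i+x_j)$. As written, the second half of your argument is devoted to proving the identity $\det\big(h_{j-1}(x_{n-i+1},\dots,x_n)\big)=\prod_{i<j}(x_i+x_j)$, which is false already at $n=2$.

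Your alternative recursion $Z_n=\prod_{j\geq 2}(x_1+x_j)\,Z_{n-1}(x_2,\dots,x_n)$ would indeed suffice, but it is only asserted: peeling off the bottommost path is not a local operation here, because that path can share vertices and vertical edges with all the others, so this route too needs the sliding map (or an equivalent device) to be made precise. In short, the reduction steps you call ``bookkeeping'' are fine, but the crux --- the correct identification of the determinant --- is exactly where the proposal goes wrong, and the sliding map you mention in passing is the ingredient that fixes it.
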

\begin{proof}
Note first that as $t=0$, only the configurations with horizontal edges occupied by {\it at most} one path can contribute to the partition function, due to the factor $t^{n_E(n_E-1)/2}$. However, multiple vertical edges can be taken apart by sliding the paths horizontally to the right as follows:
the $i$-th path from above is slid by $i-1$ steps to the right (see Figure \ref{fig:slide} for an illustration).
This provides a bijection between the DWBC configurations at $t=0$ and the families of $n$ strictly non-intersecting lattice paths (NILP) starting at the points $(i,-i)$ and ending at $(2i,0)$, $i=0,1,...,n-1$. Moreover, the path weights are simply a weight $x_i$ per horizontal step at row $i$ from the bottom.
Such non-intersecting paths are easily enumerated, using the Gessel-Viennot theorem, by the following determinant. Let $z_{i,j}$ be the partition function of  single
paths starting at $(i,-i)$ and ending at $(2j,0)$. We have for $0\leq i,j \leq n-1$,
$$ z_{i,j}= \sum_{n_1,n_{2},...,n_{i} \geq 0\atop \sum_a n_a = 2j-i}  \prod_{a=i}^{n} x_{n+1-a}^{n_{n+1-a}} = 
\left.\prod_{a=n-i+1}^n \frac{1}{1-u x_a}  \right\vert_{u^{2j-i}} =h_{2j-i}(x_{n-i+1},x_{n-i+2},...,x_n).$$ 
Applying the Gessel-Viennot formula leads to the partition function
\begin{align*}
Z_n(x_1,...,x_n\vert 0)&=\det_{0\leq i,j\leq n-1}\left( h_{2j-i}(x_{n-i+1},x_{n-i+2},...,x_n)\right)=\det_{0\leq i,j\leq n-1}\left( h_{2j-i}(x_{1},x_{2},...,x_n)\right)\\
&=s_{\lambda_n}(x_1,...,x_n)=\prod_{1\leq i<j \leq n}(x_i+x_j) 
\end{align*}
where we have first used row manipulations and the identity
$$h_k(x_1,...,x_n)=\sum_{j=0}^k h_j(x_1,...,x_{i}) h_{k-j}(x_{i+1},x_{i+2},...,x_n), $$
and then
used the Jacobi-Trudi identity to identify the partition function with the $GL_n$ character (Schur function) of the staircase diagram $\lambda_n=(n,n-1,...,1)$ 
representation.
\end{proof}

\begin{remark}\label{rem:t0equiv}
Note that for $t=0$, there is a unique way to color each colorblind configuration to get a colored configuration with non-zero weight. This gives a weight-preserving bijection at the level of configurations between the colored and colorblind models. Thus it is equivalent to talk about either the colored or colorblind model when $t=0$.
\end{remark}

\subsection{Generalizations for $t=0$}\label{secgenpath}

\begin{figure}
\begin{center}
\includegraphics[width=13cm]{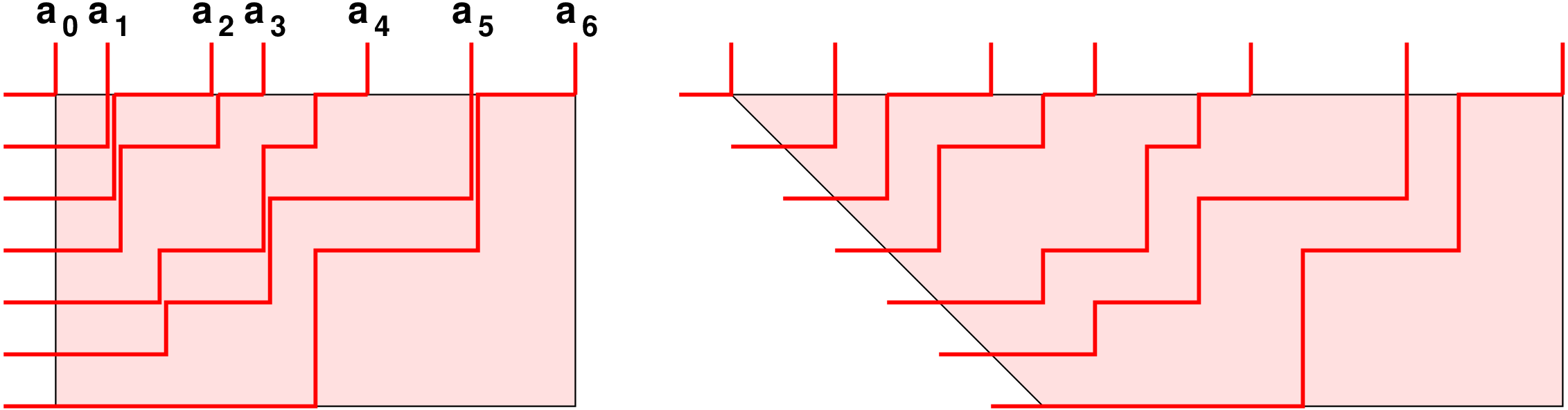}
\end{center}
\caption{\small Top: Sample configuration of the touching path model on an $(n+1)\times (a_n+1)$ square grid with Domain Wall Boundaries, with $n=6$: paths start on the W edge and end on the N edge at points $(a_i,0)$. Bottom: the NILP obtained by the sliding map bijection: the $i$-th path starts at $(i,-i)$ and ends at $(a_i+i,0)$.}
\label{fig:newcolex}
\end{figure}

Based on the above bijection with NILP using the sliding map, the partition functions for a whole family of generalizations of the 
DWBC above can be exactly computed.
Consider strictly increasing integer sequences $0=a_0<a_1<\cdots <a_{n}$. We consider the touching path model above with 
weights \eqref{weights} (and all $x_i=1$ for simplicity)
with generalized boundary conditions on a rectangular grid of size $(n+1)\times (a_{n}+1)$, where the path starting points are at 
positions $(0,-i)$, $i=0,1,...,n$
and the path endpoints are at positions $(a_i,0)$, $i=0,1,...,n$. We denote by $Z_n(a_0,a_1,...,a_{n})$ the corresponding 
partition function. Under the sliding map, the configurations are mapped to those of the NILP with starting points $(i,-i)$, $i=0,1,2,...,n$
and endpoints $a_i+i$, $i=0,1,...,n$, with partition function $Z_n^{NILP}(a_0,a_1+1,a_2+2,...,a_{n}+n)$ computed in 
\cite{DFGUI}. We deduce the following theorem.

\begin{thm}\label{pfthm}
We have
\begin{equation}
Z_n(a_0,a_1,...,a_{n})=Z_n^{NILP}(a_0,a_1+1,a_2+2,...,a_{n}+n)=\frac{\Delta(a_0,a_1+1,a_2+2,...,a_n+n)}{\Delta(0,1,2,...,n)},
\end{equation}
in terms of the Vandermonde determinant $\Delta(x_0,x_1,...,x_n)=\prod_{0\leq i<j\leq n} (x_j-x_i)$.
\end{thm}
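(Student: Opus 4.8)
The plan is to prove the statement in two steps, mirroring the two equalities in the display: first, use the sliding map to identify $Z_n(a_0,\ldots,a_n)$ with a count of strictly non-intersecting lattice paths (NILP), and second, evaluate that NILP count. As the text notes, the second object is exactly the one computed in \cite{DFGUI}, so that reference can be quoted; but it can also be obtained directly from the Lindström--Gessel--Viennot (LGV) lemma followed by a routine generalized-Vandermonde determinant evaluation, which is the self-contained route I would spell out.

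\emph{Step 1 (sliding map).} I would set things up exactly as in the preceding theorem's proof for the $a_i=i$ case. Since $t=0$, only configurations whose horizontal edges carry at most one path survive (the $t^{n_E(n_E-1)/2}$ factor), and with all $x_i=1$ every surviving configuration has weight $1$, because each local weight $\binom{n_N+n_E}{n_E}_t\,t^{n_E(n_E-1)/2}x^{n_E}$ with $n_E\in\{0,1\}$ equals $1$ at $t=0$, $x=1$. Thus $Z_n(a_0,\ldots,a_n)$ simply counts these configurations. Now define the map shifting the $i$-th path ($i=0$ the topmost), together with all its edges, by $(x,y)\mapsto(x+i,y)$. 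One argues: (a) distinct paths share no horizontal edge and never cross, so after shifting path $i$ by $i$ and path $i'>i$ by $i'$, a shared vertical edge or vertex at column $c$ of paths $i<i'$ lands at columns $c+i\ne c+i'$, and the vertical ordering of the two paths is preserved, so no crossing is created — the image is a vertex-disjoint family; (b) the map is inverted by the reverse shift, from which the original multiplicities are recovered; (c) the starting points $(0,-i)$ go to $(i,-i)$ and the endpoints $(a_i,0)$ go to $(a_i+i,0)$. This yields $Z_n(a_0,\ldots,a_n)=Z_n^{NILP}(a_0,a_1+1,\ldots,a_n+n)$.

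\emph{Step 2 (NILP count).} Apply LGV with sources $A_i=(i,-i)$ and sinks $B_j=(b_j,0)$, $b_j:=a_j+j$, for $0\le i,j\le n$. The sinks lie on a horizontal line with strictly increasing abscissas, and the sources lie on a down-right staircase, so for any non-identity permutation $\pi$ the corresponding up/right paths are forced to cross (at height $-i$ the $\pi(i')$-path with $i'>i$ lies strictly right of the source of the $\pi(i)$-path, so it would have to end strictly right as well); hence only the identity contributes. The number of up/right lattice paths from $A_i$ to $B_j$ is $\binom{(b_j-i)+i}{i}=\binom{b_j}{i}$, so
\[
Z_n^{NILP}(a_0,a_1+1,\ldots,a_n+n)=\det_{0\le i,j\le n}\binom{a_j+j}{i}.
\]
Since $\binom{X}{i}$ is a polynomial in $X$ of degree $i$ with leading coefficient $1/i!$, the matrix $\big(\binom{b_j}{i}\big)_{i,j}$ factors as a lower-triangular matrix with diagonal $1/0!,1/1!,\ldots,1/n!$ times the Vandermonde matrix $(b_j^{\,i})_{i,j}$, so the determinant equals $\big(\prod_{i=0}^n \tfrac1{i!}\big)\Delta(b_0,\ldots,b_n)=\Delta(a_0,a_1+1,\ldots,a_n+n)/\Delta(0,1,\ldots,n)$, using $\prod_{i=0}^n i!=\Delta(0,1,\ldots,n)$. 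Combining the two steps gives the theorem.

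\emph{Main obstacle.} The only non-formal point is Step 1: verifying that the sliding map is genuinely a bijection from the $t=0$ touching-path configurations (vertical multiplicities arbitrary, horizontal multiplicities $\le 1$) onto strictly non-intersecting paths with the shifted endpoints, in this generalized rectangular geometry. One must confirm that no two shifted paths collide, that the shifted NILP, which now occupy the wider column range $0$ through $a_n+n$, are unconstrained on the E and S sides so nothing is lost near the boundary, and that the reverse shift always returns a legal touching-path configuration. Everything downstream — the LGV lemma, the determinant evaluation, or simply invoking \cite{DFGUI} — is routine.
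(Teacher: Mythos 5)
Your proof is correct and follows essentially the paper's own route: the paper likewise deduces the theorem by applying the sliding map (already set up in the proof of the $t=0$ partition function) to reduce to the NILP partition function with shifted endpoints $a_i+i$, and then quotes the Vandermonde-ratio formula from \cite{DFGUI}, which is exactly your LGV computation of $\det_{0\le i,j\le n}\binom{a_j+j}{i}$. Your Step 2 merely makes that citation self-contained, and the care you devote to the bijectivity of the sliding map matches (indeed exceeds) the level of detail the paper itself provides.
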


\section{Limit shape: the case $t=0$}
In this section, we consider the model of  $t=0$ touching paths (touching paths for short) with generalized boundary conditions as in Section \ref{secgenpath} above, and compute the limit shape for large $n$ by use of the tangent method.

\subsection{General solution I: SE and SW portions of the arctic curve}

\begin{figure}
\begin{center}
\includegraphics[width=13cm]{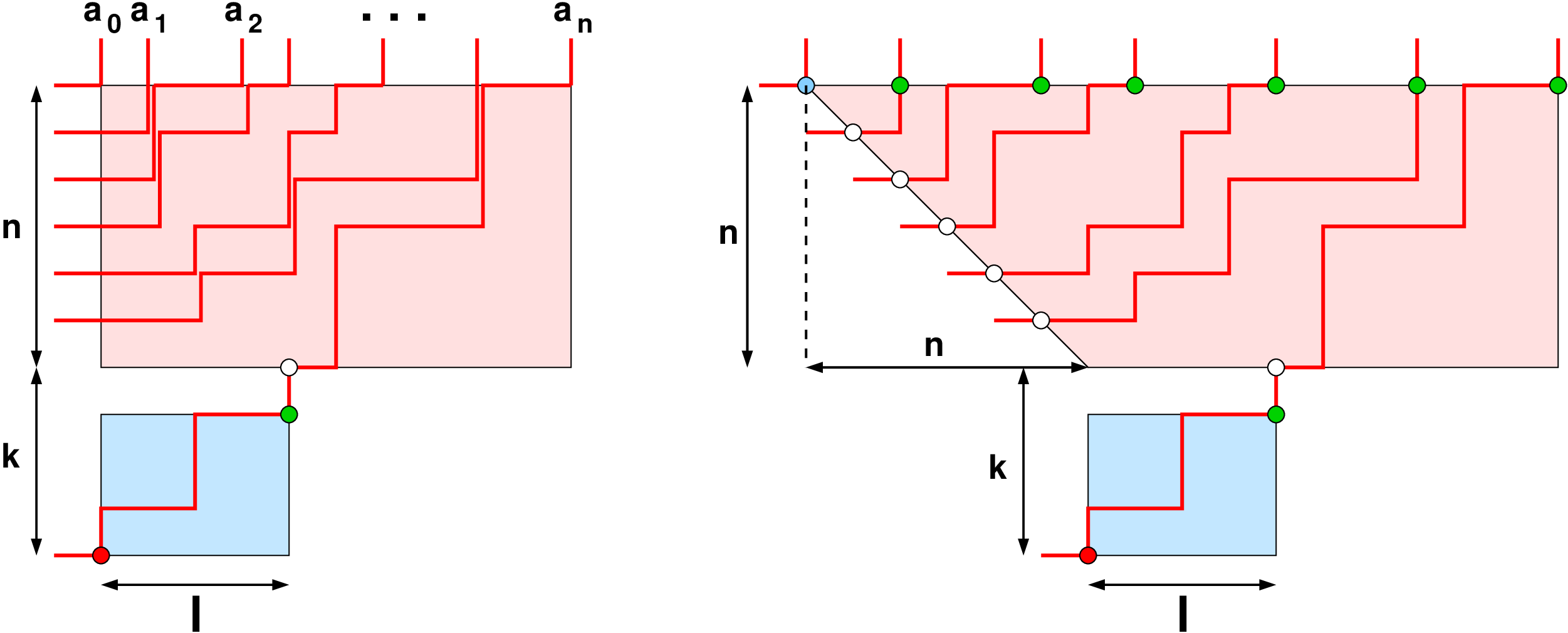}
\end{center}
\caption{\small  The tangent method for touching paths (left) and the bijective image for NILP by the sliding map (right).}
\label{fig:tangent}
\end{figure}

Recall that the tangent method, illustrated in Figure \ref{fig:tangent}, consists of moving away the starting point of the $n$-th path (say by a vertical translation by $(0,-k)$), so that the new total partition function $Z'_{n,k}(a_0,...,a_n)$ is decomposed into the sum
\begin{equation}\label{tgt}Z'_{n,k}(a_0,...,a_n)=\sum_{\ell} Y_{k,\ell} \, H_{n,\ell}(a_0,...,a_n) ,\end{equation}
where $Y_{k,\ell}$ is the partition function of a single path starting at point $(0,-n-k)$ and first touching the line $y=-n$ at the point $(\ell,-n)$ (blue rectangle in Figure \ref{fig:tangent}, left), and $H_{n,\ell}(a_0,...,a_n)$ is the partition function with starting points $(0,-i)$, $i=0,1,...,n-1$ and $(\ell,-n)$ and endpoints at $(a_i,0)$, $i=0,1,...,n$ 
(pink rectangle in Figure \ref{fig:tangent}, left). 

We now consider the large $n$ limit in which $k=n z$, $\ell = n w$, and the $a_i$ are distributed according to a function $\al(x)$, $x\in [0,1]$, with
$a_i\sim n \al(i/n)$.

From the sliding map bijection to NILP, equation \eqref{tgt} can be rephrased in NILP terms, with
$Y_{k,\ell}=Y_{k,\ell}^{NILP}$ and $H_{n,\ell}(a_0,...,a_n)=H_{n,\ell}^{NILP}(b_0,b_1,...,b_n)$, with $b_i=a_i+i$. The NILP versions correspond to the partition functions, respectively, in the blue and pink regions on the right of Figure \ref{fig:tangent}. As a result, the tangent method yields the exact same result for the SE portion of arctic curve of both models. More precisely, as the bottom path is translated by $(n,0)$, the SE portion of the arctic curve (limit rescaled by a factor $1/n$) for NILP is the $(x,y)\to (x+1,y)$ translate of that for the touching paths. Note also that the distribution of endpoints of the NILP is 
$\beta(x)=\al(x)+x$, for $x\in [0,1]$. 

\begin{figure}
\begin{center}
\includegraphics[width=15cm]{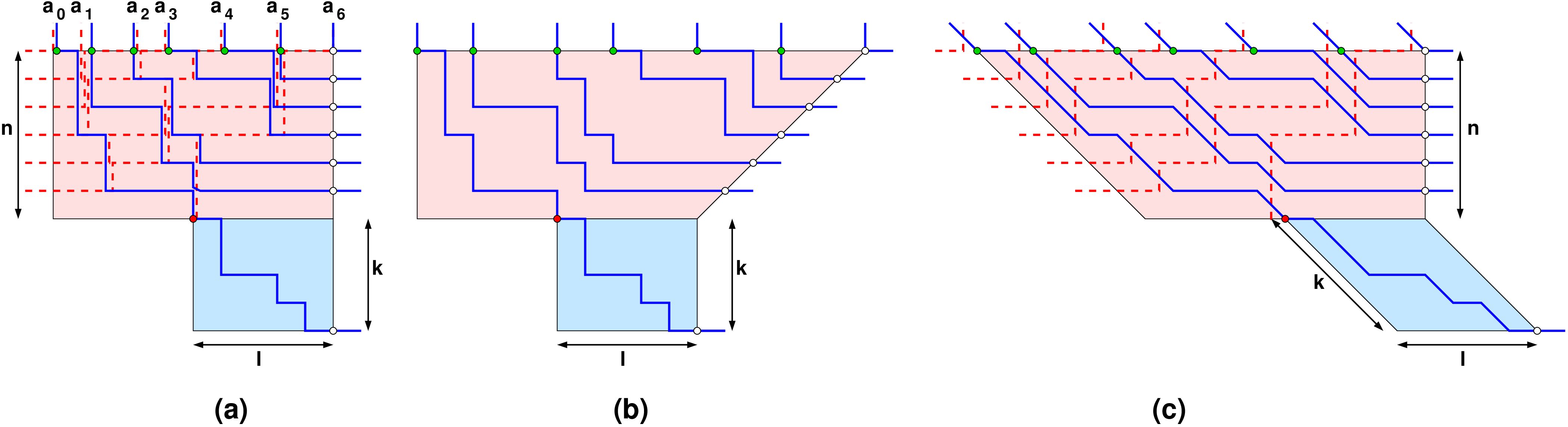}
\end{center}
\caption{\small  The tangent method for the SW branch of the arctic curve. (a) $t=0$ new touching paths (in solid blue line) dual to the original ones (dashed red lines) and (b) their bijective NILP image under the sliding map, reexpressed in (c) as the duals of the original NILP.}
\label{fig:othernewcolex}
\end{figure}

The SW portion of arctic curve is obtained analogously by considering an equivalent description of the configurations within the pink rectangle of Figure \ref{fig:tangent} (left),
using a different (dual) set of paths. Indeed, as illustrated in Figure \ref{fig:othernewcolex} (a), starting from a configuration of touching paths, let us draw new horizontal path steps on all the un-occupied horizontal edges, and erase all the former horizontal steps, and then reconnect
vertical steps to those while keeping the touching yet non-intersecting condition: we obtain a new family of touching paths going left and up (the blue paths in Figure \ref{fig:othernewcolex} (a)), and starting from the E border of the rectangle at points $(a_n,-i)$, $i=0, 1,...,n-1$, and at the same point $(\ell,-n)$ on the S boundary. The blue rectangle is now associated to a single path starting at
$(a_n,-n-k)$ and touching the line $y=-n$ for the first time at the point $(\ell,-n)$. Via the sliding map, the configuration is equivalent to that of NILP displayed in Figure \ref{fig:othernewcolex} (b), itself equivalent to the NILP represented in  Figure \ref{fig:othernewcolex} (c), in clear bijection with the original NILP (in red dashed lines).
Note, however, that the NILP boundary conditions of Figure \ref{fig:othernewcolex} (b) are the same as the original one, up to a vertical reflection, under which the endpoints become
$c_i= b_n-b_{n-i}=a_n-a_{n-i}+i$.
We deduce that the pink partition function in Figure \ref{fig:othernewcolex} (b) is nothing but $H_{n,\ell}^{NILP}(c_0,c_1,..,c_n)=H_{n,\ell}(d_0,d_1,...,d_n)$, with
$d_i=c_i-i=a_n-a_{n-i}$, while the blue partition function remains $Y_{k,\ell}^{NILP}$. As a consequence, the SW portion of arctic curve for the $t=0$ (dual) touching paths is identical to the vertical reflection of the SE portion for the original touching paths,
up to a reflection of the endpoints, i.e. a change of distribution $\al\to \delta$, where
$$ \delta(x)=\al(1)-\al(1-x) .$$

Both SE and SW portions of arctic curve were derived in \cite{DFGUI} for arbitrary endpoint distribution $\beta$.
\begin{thm}[\cite{DFGUI}]\label{NILPthm}
Let 
$$x(t)=e^{-\int_0^1 \frac{du}{t-\beta(u)} }, $$
then the arctic curve reads parametrically:
\begin{eqnarray*}X^{NILP}(t)&=t -\frac{x(t)}{x'(t)}\Big(1-x(t)\Big)\\
Y^{NILP}(t)&=-\frac{1}{x'(t)} \Big(1-x(t)\Big)^2
\end{eqnarray*}
where   $t\in [\beta(1),\infty)$ for the SE branch and $t\in (-\infty,0]$ for the SW branch.
\end{thm}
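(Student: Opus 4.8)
We sketch how the tangent method of \cite{DFGUI} yields Theorem~\ref{NILPthm}. The plan is to work on the NILP side of the sliding bijection, where by \eqref{tgt} one has $Z'_{n,k}=\sum_{\ell}Y^{NILP}_{k,\ell}\,H^{NILP}_{n,\ell}(b_0,\dots,b_n)$ with $b_i=a_i+i\sim n\beta(i/n)$, and to analyze this sum in the regime $k=nz$, $\ell=nw$. One extracts the exponential growth rate of each factor, optimizes over the entry point $w$ by a saddle point, and finally takes the envelope of the resulting one-parameter family of straight segments traced by the displaced path. Throughout, the geometric premise of the tangent method — that outside the liquid region the displaced path is, after rescaling by $1/n$, a straight segment which is tangent to the arctic curve at the point where it re-enters that region — is taken for granted, as is standard for free-fermionic NILP. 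For the factor $Y^{NILP}_{k,\ell}$: once the displaced path has descended from $(0,-n-k)$ to its first contact $(\ell,-n)$ with the line $y=-n$, it is, up to a sub-exponential last-passage correction, a count of monotone lattice paths, so by Stirling
\[
\tfrac1n\log Y^{NILP}_{nz,nw}\ \longrightarrow\ \sigma_Y(z,w):=(w+z)\log(w+z)-w\log w-z\log z ,
\]
with most likely realization the segment from $(0,-1-z)$ to $(w,-1)$ of slope $z/w$; we parametrize this candidate tangent line by the abscissa $t$ at which it meets the endpoint line $y=0$.

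\textbf{The bulk factor.} By the Lindström--Gessel--Viennot lemma, $H^{NILP}_{n,\ell}(b_0,\dots,b_n)$ is a determinant of single-path counts; exactly as in the computation behind Theorem~\ref{pfthm} it collapses to a Vandermonde-type ratio in the endpoints $b_0<\dots<b_n$, one column of which carries the dependence on the displaced path's entry abscissa $\ell$. Expanding along that column and passing to the continuum turns the $\ell$-dependence of $\tfrac1n\log H^{NILP}_{n,\ell}$ into a functional of $\beta$, and its steepest-descent analysis — this is precisely the boundary one-point function of \cite{DFGUI}, equivalently the continuum limit of the ratio of consecutive refined partition functions — produces a parameter $t$ together with
\[
x(t)=e^{-\int_0^1\frac{du}{t-\beta(u)}},\qquad \sigma_H'(w)=\log x(t),\qquad w=w(t)=t-\frac{x(t)}{1-x(t)},
\]
where $\sigma_H(w):=\lim_n\tfrac1n\log H^{NILP}_{n,nw}$. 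The ranges $t\in[\beta(1),\infty)$ and $t\in(-\infty,0]$ record the two sides from which the displaced path can leave the domain, giving the SE and SW branches respectively; on the SW side one first applies the dual reformulation of Figure~\ref{fig:othernewcolex}, which only amounts to replacing $\beta$ by its reflected version.

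\textbf{Saddle point and envelope.} Stationarity of $\sigma_Y(z,w)+\sigma_H(w)$ in $w$ gives $\log\frac{w+z}{w}+\log x(t)=0$, hence $z/w=\tfrac{1-x(t)}{x(t)}$: the tangent segment $L_t$ therefore passes through $(w(t),-1)$ with slope $\tfrac{1-x(t)}{x(t)}$, and one checks that it indeed meets $y=0$ at $X=t$, consistent with the chosen parametrization. Solving the envelope system $\{L_t,\ \partial_tL_t=0\}$ and substituting $w(t)=t-x(t)/(1-x(t))$, all the $x''$ terms cancel and one is left, after elementary algebra, with
\[
X^{NILP}(t)=t-\frac{x(t)}{x'(t)}\bigl(1-x(t)\bigr),\qquad Y^{NILP}(t)=-\frac{1}{x'(t)}\bigl(1-x(t)\bigr)^2 ,
\]
which is the assertion of the theorem.

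\textbf{Main obstacle.} The substantive work is the bulk step: carrying out the asymptotics of the LGV/Vandermonde determinant and identifying $x(t)$ for a general, only piecewise smooth, endpoint density $\beta$ — in particular accommodating macroscopic gaps (segments with no exit points) and frozen exits (atoms of $\beta$) — and justifying the steepest-descent contour deformation uniformly across those features. It is exactly this lack of smoothness of $\beta$ that forces the resulting arctic curve to be only piecewise analytic. Verifying the geometric hypotheses of the tangent method (asymptotic straightness and tangency of the displaced path) is, by comparison, routine in this free-fermionic setting.
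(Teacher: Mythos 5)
Your sketch is essentially the derivation of \cite{DFGUI} that the paper imports without proof, and it mirrors the computation the paper itself carries out in Section \ref{gapsec} and Remark \ref{rmk:infgap} — Stirling asymptotics for the free segment, Vandermonde/one-point-function asymptotics producing $x(t)$, a saddle point in the contact abscissa equivalent to $\frac{z}{w}=\frac{1-x(t)}{x(t)}$, and the envelope of the resulting family of lines through $(t,0)$ — so it is the same approach, and the algebra (in particular $w(t)=t-\frac{x(t)}{1-x(t)}$ and the envelope formulas) checks out. One small caveat: your closing claim that non-smoothness of $\beta$ ``forces the arctic curve to be only piecewise analytic'' is misplaced for the NILP itself, whose SE and SW branches are analytic continuations of one another (as the paper notes immediately after the theorem); piecewise analyticity is a feature of the touching-path curves obtained from these via the sliding map.
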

Note that, as expected for free fermion models, the curve is analytic, as the two portions are continuations of one-another.
This allowed us to conjecture that the arctic curve equation of Theorem \ref{NILPthm} is still valid over the whole range $t\in \R$.

This immediately translates to touching paths with arbitrary endpoint distribution $\alpha: [0,1]\to [0,\al(1)]$.
\begin{thm}\label{mainthm}
The arctic curve for touching paths reads parametrically as follows. Let 
$$x_0(t)=e^{-\int_0^1 \frac{du}{t-u-\al(u)} },\quad x_1(t)= e^{-\int_0^1 \frac{du}{t-u-\delta(u)} }=\frac{1}{x_0(\al(1)+1-t)},$$
with $\delta(u)=\al(1)-\al(1-u)$ as above.
For $t\in[\al(1)+1,\infty)$, the SE portion is parametrized by
\begin{align}X_{SE}(t) &=t-1 -\frac{x_0(t)}{x_0'(t)}\Big(1-x_0(t)\Big)\nonumber \\
Y_{SE}(t) &=-\frac{1}{x_0'(t)} \Big(1-x_0(t)\Big)^2.\label{SEtouch}
\end{align}
For $t\in[\al(1)+1,\infty)$, the SW portion is parametrized by
\begin{align}X_{SW}(t) &=\al(1)+1-t+\frac{x_1(t)}{x_1'(t)}\Big(1-x_1(t)\Big)\nonumber \\
Y_{SW}(t) &=-\frac{1}{x_1'(t)} \Big(1-x_1(t)\Big)^2 .\label{SWtouch}
\end{align}
\end{thm}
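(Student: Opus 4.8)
The plan is to obtain the theorem by transporting Theorem \ref{NILPthm} through the two bijective reductions already set up above: the sliding map (Section \ref{secgenpath}) for the SE branch, and the dual-path construction of Figure \ref{fig:othernewcolex} for the SW branch. The key observation making this painless is that each reduction acts on the path configurations by a rigid motion of the plane --- a horizontal translation for SE, a reflection in a vertical line for SW --- and that both the arctic curve and the tangent-method envelope producing it transform equivariantly under such motions. So no new one-point function or envelope computation is needed: the whole argument is a change of the endpoint density followed by a rigid motion, plus one change of variables to recognize $x_1$.

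First I would handle the SE portion. By the sliding map, the $t=0$ touching-path ensemble with endpoint density $\al$ is in bijection with the NILP ensemble with endpoint density $\beta(u)=u+\al(u)$; and, as recorded in the paragraph preceding Theorem \ref{NILPthm}, the SE arctic curve of the touching paths is the $(x,y)\mapsto(x-1,y)$ translate of the SE arctic curve of these NILP (the shift by $1$ being the rescaled version of the $(n,0)$ displacement of the last path in the sliding map). Substituting $\beta(u)=u+\al(u)$ into Theorem \ref{NILPthm} turns $x(t)$ into exactly $x_0(t)$ and turns the SE range $t\in[\beta(1),\infty)$ into $t\in[\al(1)+1,\infty)$, using $\al(0)=0$ (which holds since $a_0=0$). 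Applying the $-1$ shift to $X^{NILP}$ and leaving $Y^{NILP}$ unchanged gives precisely \eqref{SEtouch}.

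Next I would do the SW portion. Here the point is that the SW curve is \emph{not} obtained from the SW branch of Theorem \ref{NILPthm}: the dual-path construction of Figure \ref{fig:othernewcolex} re-describes the region carrying the SW boundary as a second $t=0$ touching-path ensemble, on the same rectangle, emanating from the E boundary, with endpoint density $\delta(u)=\al(1)-\al(1-u)$, and whose geometry differs from the original ensemble by the reflection $x\mapsto\al(1)-x$ in the line $x=\al(1)/2$. Consequently $X_{SW}(t)=\al(1)-X_{SE}^{\delta}(t)$ and $Y_{SW}(t)=Y_{SE}^{\delta}(t)$, where $X_{SE}^{\delta},Y_{SE}^{\delta}$ denote the formula \eqref{SEtouch} read with $\al$ replaced by $\delta$; note $\delta(1)=\al(1)-\al(0)=\al(1)$, so the function ``$x_0$ for $\delta$'' is exactly $x_1$ and the range is again $[\delta(1)+1,\infty)=[\al(1)+1,\infty)$. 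Performing the reflection reproduces \eqref{SWtouch}. The stated identity $x_1(t)=1/x_0(\al(1)+1-t)$ is then a one-line change of variables: setting $v=1-u$ in $\int_0^1\frac{du}{t-u-\delta(u)}$ and using $\delta(1-v)=\al(1)-\al(v)$ rewrites the denominator as $-\big((\al(1)+1-t)-v-\al(v)\big)$, so $x_1(t)=\exp\!\big(\int_0^1\frac{dv}{(\al(1)+1-t)-v-\al(v)}\big)=1/x_0(\al(1)+1-t)$.

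The main obstacle is not any single computation but getting all the affine data right simultaneously: that the SE shift is $(-1,0)$ and not, say, $(+1,0)$ or $(-1,\pm 1)$; that the SW transformation is the left--right reflection $x\mapsto\al(1)-x$ rather than an up--down one; and that the two induced densities $u+\al$ and $\delta$ plug into Theorem \ref{NILPthm} so as to reproduce the precise $x_0$, $x_1$, and parameter ranges in the statement rather than shifted versions thereof. Everything downstream of that --- the substitutions above --- is routine.
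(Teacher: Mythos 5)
Your proposal is correct and follows essentially the same route as the paper: the SE branch is obtained by pushing Theorem \ref{NILPthm} through the sliding map with $\beta(u)=u+\al(u)$ and the rescaled $(-1,0)$ translation, and the SW branch by the dual-path construction, which the paper likewise reduces to "the vertical reflection of the SE formula with $\al$ replaced by $\delta$," together with the same change of variables $v=1-u$ identifying $x_1(t)=1/x_0(\al(1)+1-t)$. The affine bookkeeping you flag as the main risk (sign of the shift, left--right versus up--down reflection, ranges of $t$) all checks out against the paper's computation.
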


\begin{remark}\label{remshear}
Using the relation $x_1(t)=x_0(\al(1)+1-t)^{-1}$, we may rewrite, using the variable $\tilde t=\al(1)+1-t$:
\begin{align*}
X_{SW}(t)&=\tilde t-\frac{1}{x_0'(\tilde t)}\Big(1-x_0(\tilde t)\Big)=X_{SE}(\tilde t)+Y_{SE}(\tilde t) \\
Y_{SW}(t)&= -\frac{1}{x_0'(\tilde t)} \Big(1-x_0(\tilde t)\Big)^2 =Y_{SE}(\tilde t)
\end{align*}
with $\tilde t\in (-\infty,0]$. Comparing with the result of Theorem \ref{NILPthm}, we see that this is the image of the 
$SE$ branch of the arctic curve of the vertical reflection of the NILP model (i.e. the reflection of its $SW$ branch), under the shear transformation $(x,y)\to (x+y,y)$. 
A simple explanation for this fact comes from the tangent method: the leftmost (blue) path in Figure \ref{fig:othernewcolex} (c) is indeed mapped to the leftmost (blue) path in 
Figure \ref{fig:othernewcolex} (b) under the same shear transformation. This is an example of the ``shear phenomenon" for certain arctic curves, that a portion of the curve is the shear 
transform of the analytic continuation of another.
\end{remark}

\begin{remark}
The arctic curve is non-analytic at the point $\lim_{t\to\infty} (X_{SE}(t),Y_{SE}(t))=(\int_0^1\al(u)du,-1)$. The result of Theorem \ref{mainthm}
therefore only applies to the SE and SW branches of arctic curve. In the case of macroscopic gaps in the distribution, more work is required, but the same
mix of shear phenomenon and plain translation applies locally (see Section \ref{gapsec} below).
\end{remark}

We now show some examples of direct applications of Theorem \ref{mainthm}.

\begin{figure}
\begin{center}
\includegraphics[width=8cm]{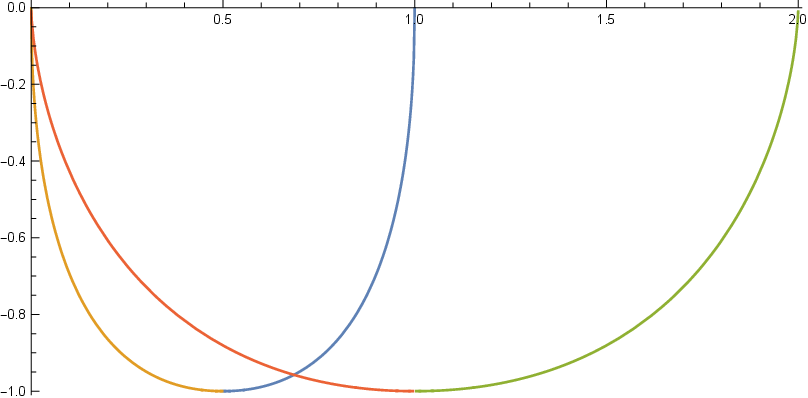}
\end{center}
\caption{\small  Arctic curves for the cases $p=1$ and $p=2$ of  Example \ref{twopex}.}
\label{fig:twop}
\end{figure}

\begin{example}[Uniform spacing]\label{twopex}
The case of regularly spaced points $a_i= p i$, for $p\in \Z_{>0}$
corresponds to a distribution  $\al(x)=p x=\delta(x)$, leading to
$$x_0(t)=x_1(t)= \left( 1-\frac{p+1}{t}\right)^{\frac{1}{p+1}}. $$
As a consequence, the arctic curve is symmetric w.r.t. the vertical line $x=\frac{p}{2}$.
The arctic curve reads, after a change of variables $t=(p+1)/(1-u^{p+1})$ for $u\in [0,1)$:
\begin{align*}
X_{SW}(u)&=\frac{p+1}{1-u^{p+1}}\left( 1- \frac{(p+1)(1-u)u^{p+1}}{1-u^{p+1}}\right)-1= p-X_{SE}(u) \\
Y_{SW}(u)&=-\left(\frac{(p+1)(1-u)}{1-u^{p+1}}\right)^2 u^p=Y_{SE}(u)
\end{align*}
In particular, for $p=1$, the arctic curve is made of two pieces of parabolas:
\begin{align*} 
(P1):& (y+2x)^2-4(y+1)=0 \qquad ( x\in [0,\frac12],\ y\leq 0)\\
(P2):& (y-2x+2)^2-4(y+1)=0\qquad ( x\in [\frac12,1],\ y\leq 0)
\end{align*}
while for $p=2$, it is made of two pieces of quartics:
\begin{align*} 
(Q1):& {\scriptstyle (3 x^2-3 x y+y^2-9x+2y)^2-4y(3 x^2-3 x y+y^2-9x+2y)+4y^2(y+1)=0}\qquad ( x\in [0,1],\ y\leq 0)\\
(Q2):& {\scriptstyle (3 x^2+3 x y+y^2-3x-4y-6)^2)^2-4y(3 x^2+3 x y+y^2-3x-4y-6)+4y^2(y+1)=0} \qquad ( x\in [1,2],\ y\leq 0)
\end{align*}
These two cases are depicted in Figure \ref{fig:twop}.
\end{example}

\subsection{General solution II: the case of a macroscopic gap}\label{gapsec}

\begin{figure}
\begin{center}
\includegraphics[width=10.cm]{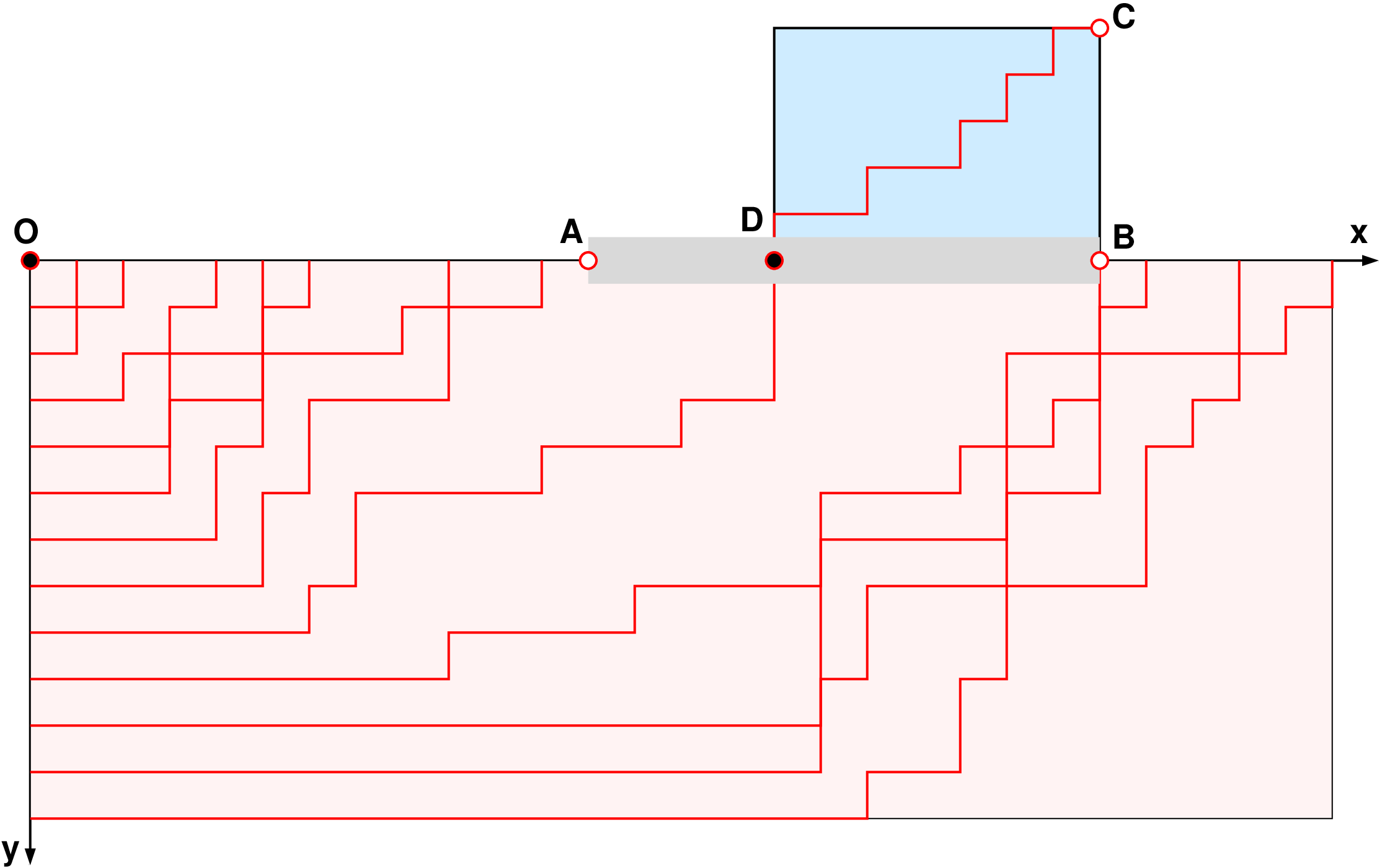}
\end{center}
\caption{\small  Tangent method for the case of a gapped  distribution of (red) touching path endpoints. 
The gap $[A,B]$ is shaded in gray. The left endpoint of the gap $A$ is moved to a position $C$, forcing an exit point $D$ from the original domain (in pink). The blue domain corresponds to the partition function of a single free path from $D$ to $C$.}
\label{fig:onegap}
\end{figure}

We now consider the general case of a gap in the distribution of endpoints as illustrated in Figure \ref{fig:onegap}, i.e. an absence of endpoints say between points  $a_k$ and $a_{k+1}$, with $a_{k+1}-a_k=m$. In the large $n$ scaling 
limit where $\kappa=k/n$, $\mu=m/n$, we have a distribution of endpoints
\begin{equation}\al(u)=\left\{ \begin{matrix}
\al_1(u) & u\in [0,\kappa] \\
\al_2(u) & u \in [\kappa,1]
\end{matrix} \right. , \ \ \al_2(\kappa)=\al_1(\kappa)+\mu.\label{distrigap}
\end{equation}

Let us concentrate on the portion of arctic curve created by the gap. We apply the tangent method as in Figure \ref{fig:onegap}, by moving the path endpoint from position $A=(a_k,0)$ to $C=(a_k+m,s)$
for some $s>0$. The path exits the original setting at some point $D=(a_k+r,0)$, with $r\in [0, m]$.
The new partition function (in pink in Figure \ref{fig:onegap}) divided by the one without the move reads, using Theorem \ref{pfthm},
\begin{align*}
H_{n,r}=&\;\frac{Z_{n,r}}{Z_{n,0}} \\
=&\; \exp\left\{- \sum_{i=0}^{k-1}{\rm Log}\Big(\frac{a_i+i -r-a_k-k}{a_i+i-a_k-k}\Big)-\sum_{i=k+1}^n {\rm Log}\Big(\frac{a_i+i -r-a_k-k}{a_i+i-a_k-k}\Big) \right\} \\
\sim&\; e^{n S_0(\rho)}
\end{align*}
with
\begin{align*}
S_0(\rho)&\;= \int_0^\kappa du\,  {\rm Log}\Big(\frac{\al_1(u)+u-\rho-\al_1(\kappa)-\kappa}{\al_1(u)+u-\al_1(\kappa)-\kappa}\Big) 
+\int_{\kappa}^1 du\,  {\rm Log}\Big(\frac{\al_2(u)+u-\rho-\al_1(\kappa)-\kappa}{\al_2(u)+u-\al_1(\kappa)-\kappa}\Big)  .
\end{align*}
After exiting at position $D=(a_k+r,0)$ the path continues until it reaches the endpoint $C=(a_k+m,s)$. With the scaling $r=\rho n,k=\kappa n,m=\mu n$ and $s=z n$,
the corresponding single path partition function reads
\begin{align*}
Y_{r,s}&= {m-r+s\choose s} \sim e^{n S_1(\rho)}
\end{align*}
with
\begin{align*}
S_1(\rho)&=(\mu-\rho+z)\,{\rm Log}(\mu-\rho+z)-(\mu-\rho)\,{\rm Log}(\mu-\rho)-z\,{\rm Log}(z). 
\end{align*}
The leading contribution to the total partition function $\sum_r \frac{Z_{n,r}}{Z_{n,0}}Y_{r,s}$ is attained at the saddle-point of the total action $S(\rho)=S_0(\rho)+S_1(\rho)$,
namely the most likely value of $\rho$ is determined by $\partial_\rho S(\rho)=0$:
$$\int_0^\kappa  \frac{du}{\rho+\al_1(\kappa)+\kappa-\al_1(u)-u} 
+\int_{\kappa}^1 \frac{du}{\rho+\al_1(\kappa)+\kappa-\al_2(u)-u}+{\rm Log}\Big( \frac{\mu-\rho}{\mu-\rho+z}\Big) =0 .$$
Introducing the notation
$$ t= \kappa+\rho+\al_1(\kappa),\ \ x_0(t)=e^{-\int_0^1 \frac{du}{t-u-\al(u)}}=e^{-\int_0^\kappa\frac{du}{t-u-\al_1(u)}-\int_\kappa^1\frac{du}{t-u-\al_2(u)}} =\frac{\mu-\rho}{\mu-\rho+z}, $$
the tangent to the arctic curve is the line through $(\al_1(\kappa)+\rho,0)=(t-\kappa,0)$ and $(\al_1(\kappa)+\mu,z)$, with equation
$\frac{\mu-\rho}{z}Y- X+t-\kappa=0$. Noting that $\frac{\mu-\rho}{z}=x_0(t)/(1-x_0(t))$, the envelope of this family of lines
is the parametric curve given by
\begin{equation}
\frac{x_0(t)}{1-x_0(t)} Y -X +t-\kappa =0,\ \ \frac{x_0'(t)}{(1-x_0(t))^2} Y+1=0 \ \ \Rightarrow\ \ \left\{ \begin{matrix} X_{SW}^{\rm gap}&=t-\kappa -\frac{x_0(t)(1-x_0(t))}{x_0'(t)} \\
Y_{SW}^{\rm gap}&=-\frac{(1-x_0(t))^2}{x_0'(t)}\hfill \end{matrix}\right. . \label{SWgapeq}
\end{equation}
We note that this is a simple translation by $(-\kappa,0)$ of the SW gap portion of the arctic curve for the corresponding NILP.  The range of $t$ is determined by imposing $z>0$, i.e. $x_0(t)<1$.
This gives the segment $t\in [\kappa+\al_1(\kappa),t^*]$, where $t^*<\kappa+\mu+\al_1(\kappa)=\kappa+\al_2(\kappa)$ is the unique point
$t\in [\kappa+\al_1(\kappa),\kappa+\al_1(\kappa)+\mu]$ such that $x_0(t)=1$.
The existence and uniqueness of such a point is easily seen by noting that $x_0(t)=1$ iff
$$F(t)= \int_0^\kappa\frac{du}{t-u-\al_1(u)}+\int_\kappa^1\frac{du}{t-u-\al_2(u)} =0,$$
and $F(t)$ is strictly decreasing from $+\infty$ to $-\infty$ on the segment $[\kappa+\al_1(\kappa),\kappa+\al_1(\kappa)+\mu]$.

\begin{figure}
\begin{center}
\includegraphics[width=10.cm]{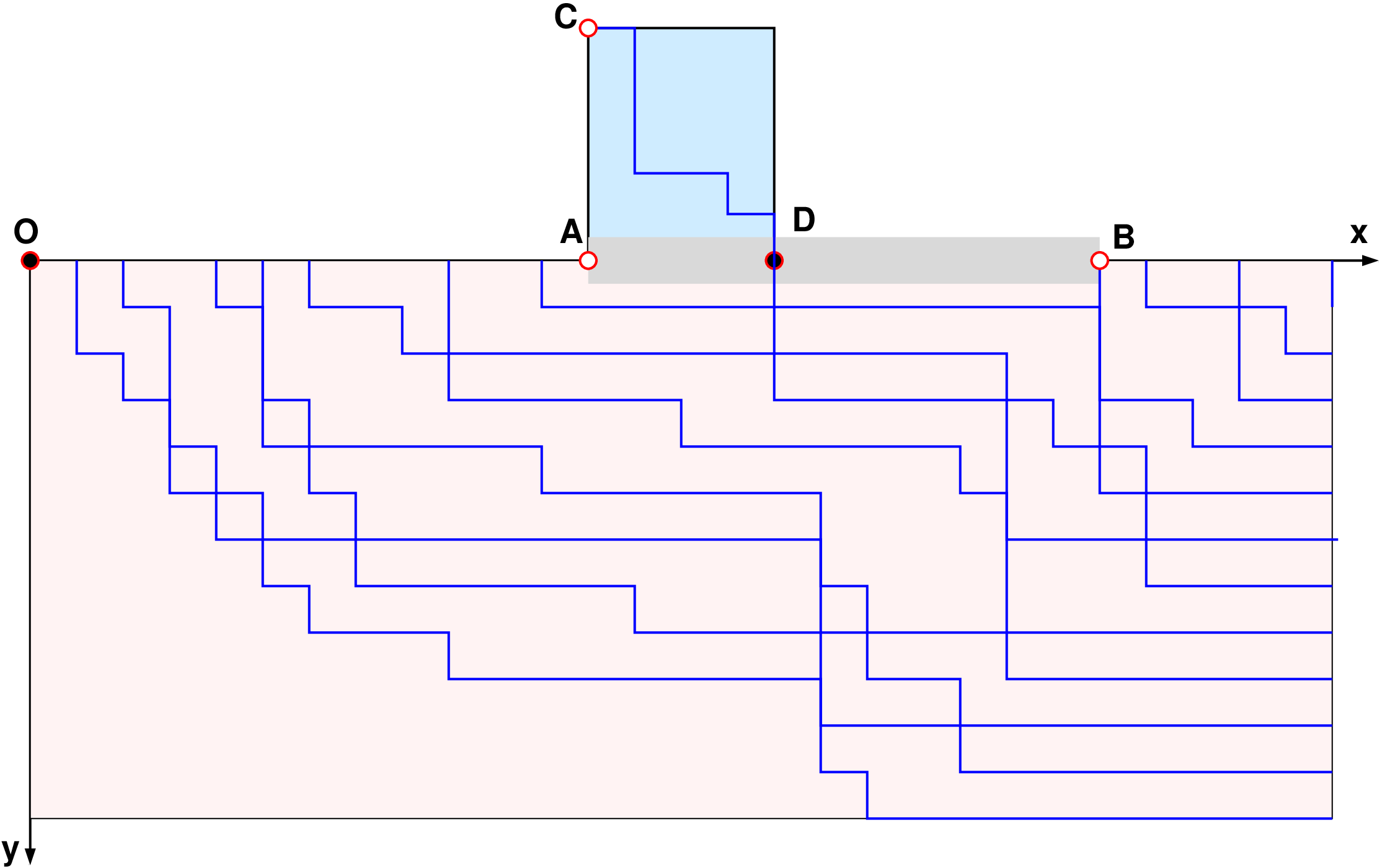}
\end{center}
\caption{\small  Tangent method for the case of a gapped  distribution of (blue) dual touching path endpoints. 
The left endpoint of the gap $A$ is moved to a position $C$, forcing an exit point $D$ from the original domain (in pink). The blue domain corresponds to the partition function of a single free dual path from $D$ to $C$. The reflection of this picture maps to one similar to Figure \ref{fig:onegap}, upon 
reflection of endpoint positions (i.e. $\al(u)\to \delta(u)$) and of $A,B,C,D$.}
\label{fig:onegapdual}
\end{figure}

Similarly, we may obtain the $SE$ gap portion of the arctic curve by considering, as before, the dual paths (with same vertical steps, but flipped horizontal ones) and then reflecting the picture w.r.t. a vertical line, to apply again the tangent method. 
As mentioned before, the reflection of endpoint positions changes $x_0(t)\to x_1(t)= 1/x_0(1+\al(1)-t)$, with $\al(1)=\al_2(1)$, as well as $\kappa\to 1-\kappa$ and results in 
\begin{equation} \left\{ \begin{matrix} X_{SE}^{\rm gap}&=\al(1)-t+1-\kappa +\frac{x_1(t)(1-x_1(t))}{x_1'(t)} \\
Y_{SE}^{\rm gap}&=-\frac{(1-x_1(t))^2}{x_1'(t)} \hfill\end{matrix}\right.\qquad (t\in[t^*,\kappa+\mu+\al_1(\kappa)]).
\label{SEgapeq}
\end{equation}
Note that in both cases $t$ is the intercept between the tangent line and the gap segment.

\begin{remark}
Comparing to the corresponding NILP arctic curve, for which we have the following gap portions:
\begin{align*}
&\left\{ \begin{matrix} X_{SW}^{\rm NILP,gap}(t)&=t -\frac{x_0(t)(1-x_0(t))}{x_0'(t)} \\
Y_{SW}^{\rm NILP,gap}(t)&=-\frac{(1-x_0(t))^2}{x_0'(t)}\hfill \end{matrix}\right. \  (t\in [\kappa+\al_1(\kappa),t^*]),\\
&\left\{ \begin{matrix} X_{SE}^{\rm NILP,gap}(t)&=t -\frac{x_0(t)(1-x_0(t))}{x_0'(t)} \\
Y_{SE}^{\rm NILP,gap}(t)&=-\frac{(1-x_0(t))^2}{x_0'(t)}\hfill \end{matrix}\right. \  (t \in [t^*,\kappa+\al_1(\kappa)+\mu]).
\end{align*}
We see that the touching path $SW$ gap branch is just the translation of the NILP $SW$ gap branch by $(-\kappa,0)$, while the 
touching path $SE$ gap branch is a shear transform of that of the NILP, also translated by the same amount:
\begin{align*}
(X_{SW}^{\rm gap}(t),Y_{SW}^{\rm gap}(t))&=(X_{SW}^{\rm NILP,gap}(t)-\kappa,Y_{SW}^{\rm gap}(t)),\\
 (X_{SE}^{\rm gap}(t),Y_{SE}^{\rm gap}(t))&=
(X_{SE}^{\rm NILP,gap}(\tilde t)+Y_{SE}^{\rm NILP,gap}(\tilde t)-\kappa,Y_{SE}^{\rm NILP,gap}(\tilde t))
\end{align*}
where we have used again the relation $x_1(t)=1/x_0(\tilde t)$, $\tilde t =\al(1)+1-t$.
The reason for these two different connections is rooted in the two corresponding different applications of the tangent method, see also Remark \ref{remshear}.

The first (SE gap branch) uses the original paths: moving the end of the rightmost path at the left of the gap can be done in both touching path and NILP situations, and the setting for the tangent method is essentially the same, except for a global sliding of the gap to the right by the quantity $k=\kappa n$, at the same time as the new path endpoint and exit point. To produce the SE gap curve for the touching paths, the NILP SE gap curve must therefore be translated back by an opposite quantity, which explains the above translation by $(-\kappa,0)$.

The second (SW gap branch) uses the dual paths (in which horizontal steps are flipped, but vertical steps preserved and recombined) in both touching and NILP situations. As seen above, the tangent method applied to these paths is essentially the same in both touching and NILP situations, however
the leftmost path (exiting inside the gap) is now mapped from the NILP to the touching case via the shear transform $(x,y)\mapsto (x+y,y)$, composed with  the global translation of the gap by $(-\kappa,0)$.
\end{remark}

\begin{remark}\label{rmk:infgap}
The gap solution of this section allows to recover the result of Theorem \ref{mainthm} above as a particular case. 
Indeed, we may consider first the case when $\kappa=0$, i.e. we start with a gap of size $\mu$, followed by a
distribution $\al_2$ of endpoints. Applying the tangent method with the dual paths gives access to the SW branch of arctic
curve, which can be viewed as the gap's SE branch. More precisely, eq. \eqref{SEgapeq} reduces to \eqref{SWtouch}
for $\kappa=0$ and $\al_2=\al$. The ranges of parameter $t$ match if we send the gap size $\mu\to \infty$.

Next, we consider the case $\kappa=1$, i.e. we have a distribution $\al_1$ of endpoints, followed by a gap of size $\mu$.
Applying the tangent method with the usual paths gives access to the SE branch of arctic
curve, which can be viewed as the gap's SW branch. More precisely, eq. \eqref{SWgapeq} reduces to \eqref{SEtouch}
for $\kappa=1$ and $\al_1=\al$. As before, the ranges of parameter $t$ match if we send the gap size $\mu\to \infty$.
\end{remark}

\begin{figure}
\begin{center}
\includegraphics[width=11cm]{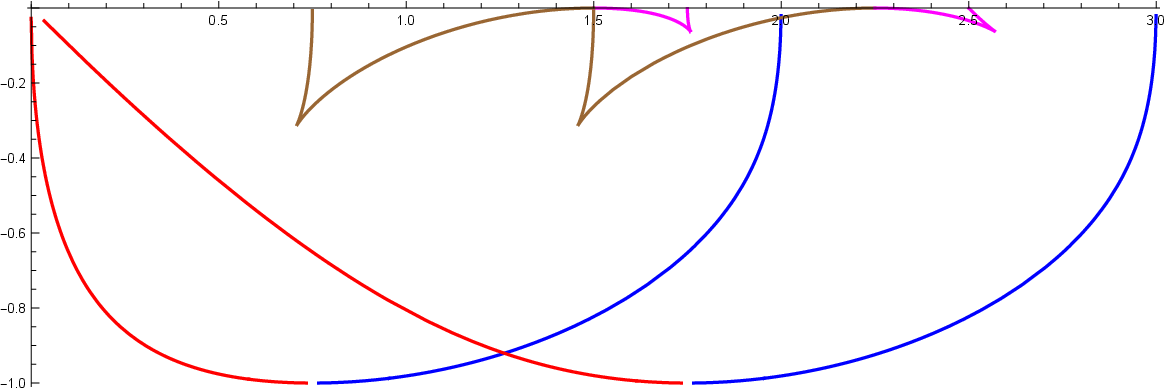}
\end{center}
\caption{\small  Arctic curves for the case of Example \ref{gappedex}, with $\kappa=\frac{3}{4}$ and $\mu=1$. Note the mapping of SE and SW branches between the NILP curve (right) and the touching path curve (left): while the left side of the gap is simply translated by $-\kappa=-3/4$ (mapping of SW gap branches), the right side is both sheared and translated by $-\kappa$ (mapping of SE gap branches).
The opposite holds for the outer branches (SE translated by $-1$ and SW simply sheared).}
\label{fig:gapped}
\end{figure}

\begin{example}[Effect of a gap]\label{gappedex}
 We consider a uniform distribution of endpoints on two segments $[0,k]$ and $[k+m,n+m]$, separated by a gap of size $m$, namely $a_i=i$, $i=0,1,...,k$ and $a_{k+i}=k+m+i$, $i=1,2,...,n-k$. With the scaling parameters $\kappa=k/n, \mu=m/n$, this corresponds to the large $n$ distribution of endpoints
$$ \al(u)=\left\{\begin{matrix}
u, & u \in [0,\kappa] \hfill \\
u+\mu, & u \in [\kappa,1] \hfill 
\end{matrix} \right. . $$
We deduce that
$$ x_0(t)= \sqrt{\frac{1-\frac{2\kappa}{t}}{1-\frac{2(1-\kappa)}{2+\mu-t}}}. $$
The left SW and right SE branches of arctic curve are given by Theorem \ref{mainthm}. We have
\begin{align*}&(SW):  \ \ \left\{\begin{matrix} X_{SW}(t) &= X^{\rm NILP}_{SW}(t)+Y^{\rm NILP}_{SW}(t) \hfill \\
Y_{SW}(t) &= Y^{\rm NILP}_{SW}(t)\hfill \end{matrix}
\right. \qquad t\in (-\infty,0],  \\
&(SE): \ \  \left\{\begin{matrix} X_{SE}(t) &= X^{\rm NILP}_{SW}(t)-1 \hfill \\
Y^{\rm gap}_{SE}(t) &= Y^{\rm NILP}_{SE}(t)\hfill \end{matrix}
\right. \qquad (t\in [2+\mu,\infty)). 
\end{align*}

For the gap branches, we have respectively
\begin{align*}&(SW \ {\rm gap}): \ \  \left\{\begin{matrix} X^{\rm gap}_{SW}(t) &= X^{\rm gap, NILP}_{SW}(t)-\kappa \hfill \\
Y^{\rm gap}_{SW}(t) &= Y^{\rm gap, NILP}_{SW}(t)\hfill \end{matrix}
\right. \qquad \qquad  \qquad t\in [2\kappa,\kappa(2+\mu)] \\
&(SE \ {\rm gap}):  \ \ \left\{\begin{matrix} X^{\rm gap}_{SE}(t) &= X^{\rm gap, NILP}_{SE}(t)+Y^{\rm gap, NILP}_{SE}(t)-\kappa \hfill \\
Y^{\rm gap}_{SE}(t) &= Y^{\rm gap, NILP}_{SE}(t)\hfill \end{matrix}
\right. \qquad t\in [\kappa(2+\mu),2\kappa+\mu]  
\end{align*}
where we have identified $t^*=\kappa(2+\mu)$ from the explicit expression for
$$
Y^{\rm NILP}(t)= -\frac{(1-x_0(t))^2}{x_0'(t)} 
= \frac{-4 (t-\kappa(2+\mu))^2}{t^2-2 t\kappa(2+\mu)+\kappa(2+\mu)(2\kappa+\mu)} \ 
\frac{\sqrt{1-\frac{2\kappa}{t}} 
\sqrt{1-\frac{2(1-\kappa)}{2+\mu-t}}}{\left(\sqrt{1-\frac{2\kappa}{t}} +
\sqrt{1-\frac{2(1-\kappa)}{2+\mu-t}}\right)^2} .
$$
This is illustrated in Figure \ref{fig:gapped}, where we have represented both NILP and touching path arctic curves. Note that the relationship between NILP and touching paths is ``reversed" for the gap, as the translation maps the SW branches and the shear composed with translation maps the SE branches.
\end{example}

The one-gap argument can be repeated for each gap individually  in the case of several gaps in the distribution of endpoints. 
The various shears/translations will depend on the positions of the gaps.

\begin{figure}
\begin{center}
\includegraphics[width=13cm]{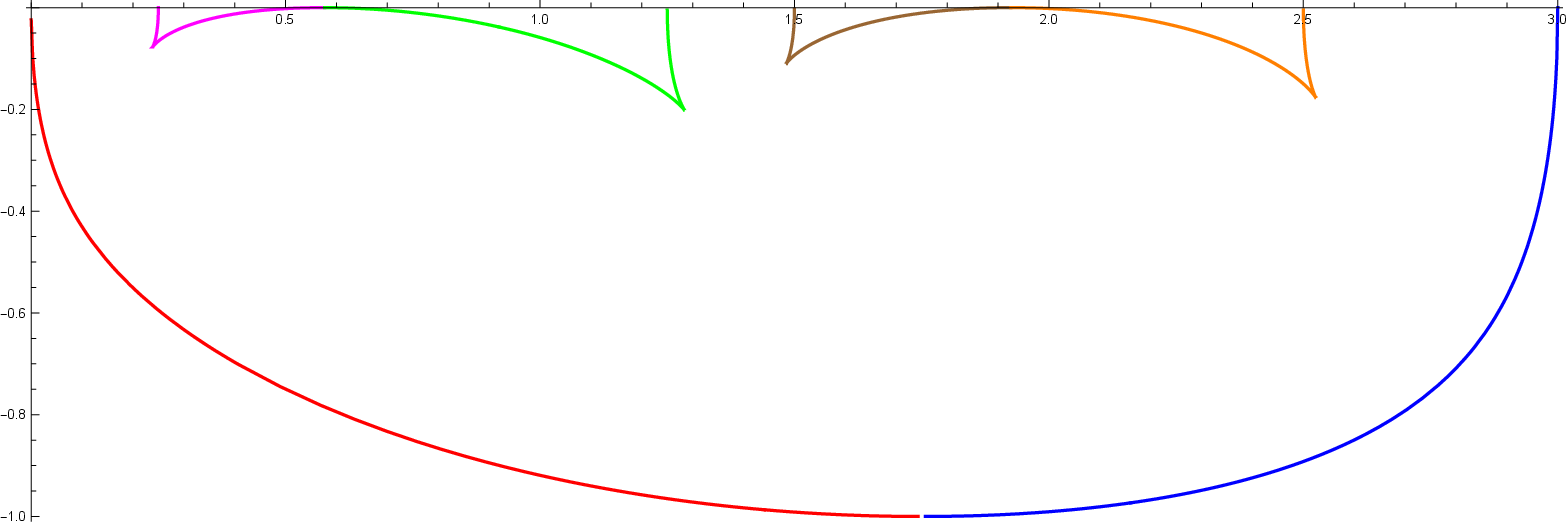}
\end{center}
\caption{\small  Arctic curve for the case of $r=2$ gaps of Example \ref{gappedex}, with $\kappa_1=\frac{1}{4},\kappa_2=\frac12$, and $\mu_1=\mu_2=1$.}
\label{fig:doublegap}
\end{figure}

\begin{example}[Effect of several gaps]\label{doublegapex}
We consider a distribution $\al$ which is  uniform on $r+1$ segments separated by $r$ gaps, of the form
$$ \al(u)= u +\mu_1+\mu_2+\cdots+\mu_i , \quad (u\in [\kappa_i,\kappa_{i+1}]), 1\leq i\leq r, $$
with $0\leq \kappa_1<\kappa_2 <\cdots <\kappa_{r+1}=1$ and $\mu_i>0$, and $\al(1)=\mu_1+\cdots+\mu_{r}+1$.
This leads to
$$ x_0(t)= \prod_{i=0}^r \sqrt{\frac{1+\frac{2 \kappa_{i+1}}{\mu_1+\cdots+\mu_i-t}}{1+\frac{2 \kappa_{i}}{\mu_1+\cdots+\mu_i-t}} },$$
with the convention that $\kappa_0=0$.
Repeating the analysis above for each single gap, we find that the arctic curve for the touching paths is entirely determined by that of the corresponding NILP, 
$(X(t),Y(t))=(X^{NILP}(t),Y^{NILP}(t))$ (from Theorem \ref{NILPthm}, with $\beta(u)=\al(u)+u$),  as follows:
\begin{itemize} 
\item{Left branch:}
$(X_0(t),Y_0(t))= (X(t)+Y(t),Y(t)),\ \  t\in (-\infty,0] $
\item{Left $i$-th gap:}
$(X_{2i-1}(t),Y_{2i-1}(t))= (X(t)-\kappa_i,Y(t)),\ \  t\in [2\kappa_i+\mu_1+\cdots +\mu_{i-1},t_i^*] $
\item{Right $i$-th gap:}
$(X_{2i}(t),Y_{2i}(t))= (X(t)+Y(t)-\kappa_i,Y(t)),\ \  t\in [t_i^*,2\kappa_i+\mu_1+\cdots +\mu_i] $
\item{Right branch:} 
$(X_{2r+1}(t),Y_{2r+1}(t))= (X(t)-1,Y(t)),\ \  t\in [1+\al(1),\infty) $
\end{itemize}
Here $t_i^*$ are the double zeros of the function $Y(t)$ (tangency points of the NILP arctic curve to the x axis). We represent in Figure \ref{fig:doublegap} the case of $r=2$ gaps with three uniform exit point segments, with $\kappa_1=\frac{1}{4},\kappa_2=\frac12,\kappa_3=1$, and $\mu_1=\mu_2=1$.
Here we find $t_1^*=\frac{13-\sqrt{41}}{8},t_2^*=\frac{13+\sqrt{41}}{8}$.
\end{example}

\subsection{General solution III: frozen regions}

In this section we handle the case of frozen regions. By frozen regions we mean the existence of points at which a macroscopic number of paths exit. Indeed, such multiply occupied exit points for the touching paths turn into 
segments where each integer point is an exit point in the corresponding NILP: such segments were coined as ``frozen" in \cite{DFGUI}. The previous arguments do not seem to allow to explore the top branches of the arctic curve for the touching paths.

Below we present some conjectures for these portions of the arctic curve via an example.

\begin{figure}
\begin{center}
\begin{minipage}{0.33\textwidth}
        \centering
        \includegraphics[width=4.5cm]{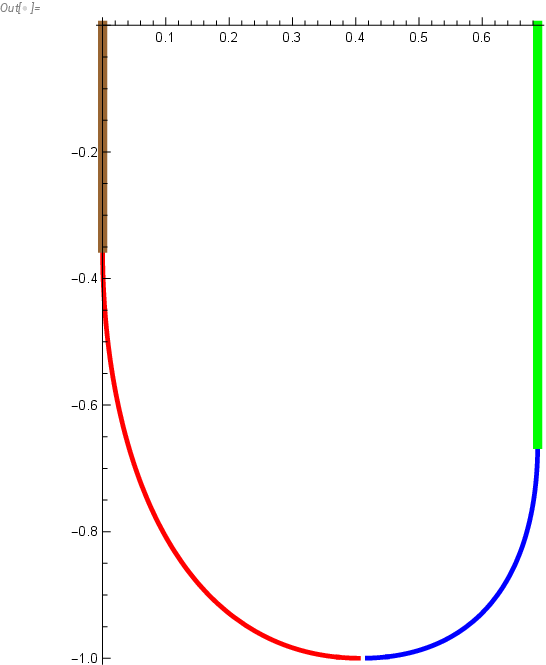} 
    \end{minipage}\hfill
     \begin{minipage}{0.66\textwidth}
        \centering
        \includegraphics[width=9.cm]{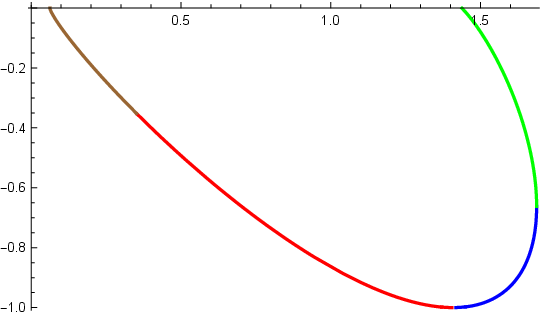}  
    \end{minipage}
\end{center}
\caption{\small Left: SW and SE branches of the arctic curve for the case of Example \ref{frozex}, with $\kappa=\frac{1}{16}$ and $\lambda=\frac{1}{4}$. Both SE and SW branches meet for $t\to \infty$ at the point
$(X_\infty,Y_\infty)=(209/512,-1)$.  The remaining NE and NW branches of the NILP configurations (in brown and green) are conjectured to map to
vertical segments.}
\label{fig:frozen}
\end{figure}

\begin{example}[Freezing on left and right]\label{frozex}
Assume that the first path exit point has a $k$-fold multiple edge while the last has an $\ell$-fold one, and each point in-between is an exit point, with respectively $k/n=\kappa$ and $\ell/n=\lambda$. This turns into two frozen regions of the NILP model, where a path exits at each integer point of  two segments $[1,k]$ and $[2n-k-2\ell+1,2n-k-\ell]$, while every other point of the segment
$[k+1,2n-k-2\ell]$ is an exit point as well. With the scaling limit $k=\kappa n,\ell=\lambda n$ as $n\to \infty$, the original distribution of endpoints is given by
$$ \al(x)=\left\{ \begin{matrix}0 & x\in [0,\kappa]\\
x-\kappa  & x\in [\kappa, 1-\lambda] \\
1-\kappa-\lambda & x\in [1-\lambda,1]\end{matrix}\right. \ \ \ \Rightarrow \ \ \ x_0(t)=\left(1-\frac{\kappa}{t}\right) \frac{t-2+\lambda+\kappa}{t-2+2\lambda-\kappa}
\sqrt{\frac{t-2+2\lambda+\kappa}{t-\kappa}} .$$
Note that the SW and SE branches of the arctic curve have finite endpoints $(X_0,Y_0)$ and $(X_1,Y_1)$ respectively, both obtained in the limit $t\to 2-\kappa-\lambda$,
while the meeting point $(X_\infty,Y_\infty)$ between the SE and SW branches of the arctic curve is obtained in the  $t\to \infty$ limit, with
\begin{align*}
(X_0,Y_0)&=(0,-(2-\kappa-\lambda)\sqrt{\frac{\lambda}{2-2\kappa-\lambda}} ), \ \ (X_\infty,Y_\infty)=(\frac12 (1-\kappa-\lambda)(1-\kappa+\lambda), -1),\\
 (X_1,Y_1)&=(1-\kappa-\lambda,-(2-\kappa-\lambda)\sqrt{\frac{\kappa}{2-\kappa-2\lambda}} ).\end{align*}
We display the SW and SE branches of the arctic curve for the case $\kappa=\frac{1}{16}$ and $\lambda=\frac{1}{4}$ in Figure \ref{fig:frozen} (left). Comparing with the NILP case (right), it is natural to expect that the NW and NE branches collapse onto vertical segments under the inverse of the sliding map (which we may understand as a ``squeezing" map),
whereas the SE branch is simply translated, and the SW branch is sheared as usual.
\end{example}

\section{The case of $q$-weighted paths}\label{sec:qweights}

\subsection{Partition function}

There is a natural q-deformation of the counting problem of previous sections for NILP, which includes an extra weight per path 
$p$ of the form $q^{Area(p)}$, where
$Area(p)$ denotes the area "under" the path $p$, obtained by locally weighting each vertical step $(i,j)-(i-1,j)$ by $q$ to the number of unit boxes
to its left until the W vertical boundary, namely $q^j$.  
Under the inverse sliding map to touching path configurations, the area weight of each individual path is unchanged, allowing for the enumeration of $q$-weighted 
touching paths, with partition function $Z_n(a_0,a_1,...,a_n\vert q)=Z_n^{NILP}(a_0,a_1+1,a_2+2,...,a_n+n\vert q)$
where the latter denotes the $q$-weighted NILP partition function,
given by the natural q-deformation of Theorem \ref{pfthm} \cite{DFG2}, leading to the following theorem. 

\begin{thm}\label{qpf}
The partition function for the $q$-weighted touching paths reads
$$Z_n(a_0,a_1,...,a_n\vert q)=Z_n^{NILP}(a_0,a_1+1,a_2+2,...,a_n+n\vert q)=q^{c_n}\frac{\Delta(q^{a_0},q^{a_1+1},q^{a_2+2},...,q^{a_n+n})}{\Delta(1,q,q^2,...,q^n)}, $$
where $c_n$ is independent of the $a_i$'s\footnote{The precise value of $c_n$ can be found in \cite{DFG2}, however in the following we will only use ratios of partition functions, independent of this normalization factor.}.
\end{thm}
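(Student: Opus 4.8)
The plan is to combine the sliding-map bijection of Section~\ref{secgenpath} with a $q$-deformed Gessel-Viennot formula, thereby reducing the statement to the $q$-weighted NILP computation already carried out in \cite{DFG2}.

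First I would revisit the sliding map used in the proof of Theorem~\ref{pfthm} and track what it does to the area weight $q^{\operatorname{Area}(p)}$. The map translates the $i$-th path horizontally by a fixed number of steps, carrying each of its vertical steps, together with the row of unit boxes to the left of that step, rigidly to the right. Since the number of vertical steps of the $i$-th path is determined by $i$ alone (it equals the height the path must climb, which does not depend on the $a_j$), the area of each individual path changes by a quantity depending only on $i$; hence the weight of \emph{every} configuration gets multiplied by one and the same power of $q$, independent of the configuration and of the $a_j$. Absorbing this constant (trivial if area is measured from the appropriate boundary on each side of the bijection, as in the text) yields the first equality $Z_n(a_0,\ldots,a_n\vert q)=Z_n^{NILP}(a_0,a_1+1,\ldots,a_n+n\vert q)$, and reduces the problem to evaluating the $q$-weighted NILP partition function with sources $(i,-i)$ and sinks $b_i:=a_i+i$, $i=0,\ldots,n$.

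Next I would apply the $q$-analogue of the Gessel-Viennot theorem. The area weight is local---each lattice step receives a factor depending only on its position, not on the rest of the configuration---so the standard tail-swap sign-reversing involution is weight preserving and $Z_n^{NILP}(b_0,\ldots,b_n\vert q)=\det_{0\le i,j\le n}\big(\zeta_{i,j}(q)\big)$, where $\zeta_{i,j}(q)$ is the area-generating function of a single lattice path from $(i,-i)$ to $(b_j,0)$. A direct count identifies $\zeta_{i,j}(q)$, up to a power of $q$, with the Gaussian binomial $\binom{b_j}{i}_q$, which as a function of the single variable $q^{b_j}$ is a polynomial of degree exactly $i$ whose leading coefficient is an explicit $i$-dependent (but $j$-independent) constant. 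Pulling the row prefactors out of the determinant and doing column reductions turns $\det_{i,j}(\zeta_{i,j}(q))$ into a scalar multiple of $\Delta(q^{b_0},\ldots,q^{b_n})=\Delta(q^{a_0},q^{a_1+1},\ldots,q^{a_n+n})$; the scalar is a power of $q$ times a ratio of $q$-factorials, and matching the staircase case $b_i=i$ (where the NILP consists of a single configuration, of minimal area) pins it down to $q^{c_n}/\Delta(1,q,\ldots,q^n)$ with $c_n$ independent of the $a_i$. This is precisely the $q$-deformation of Theorem~\ref{pfthm} performed in \cite{DFG2}, which I would cite rather than reprove.

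I expect the one genuinely new point, hence the one to handle with care, to be the compatibility of the sliding map with the $q$-weights in the first step: one must confirm that the number of vertical steps of the $i$-th path is a function of $i$ only, so that the horizontal shift contributes a configuration-independent power of $q$, and keep track of which West boundary the area is measured from on each side so that the displayed equality holds without a stray prefactor. Everything downstream is a routine determinant manipulation already available in \cite{DFG2}.
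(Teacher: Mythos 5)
Your proposal is correct and follows essentially the same route as the paper: the sliding map transfers the area weight to reduce the $q$-weighted touching-path count to the $q$-weighted NILP partition function, whose Vandermonde evaluation is quoted from \cite{DFG2} rather than reproved. If anything you are more careful than the text, which asserts that the area of each individual path is literally unchanged under the (inverse) sliding map, whereas, as you note, the $i$-th path's area shifts by the configuration-independent amount $i^2$ (it has exactly $i$ vertical steps, each translated by $i$), a constant harmlessly absorbed into the prefactor $q^{c_n}$.
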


\subsection{$q$-Tangent method: geodesics}

As we only used Theorem \ref{pfthm} as our starting point for the tangent method in the previous section, we
may now repeat verbatim the reasoning of Section \ref{gapsec} in the case of the q-weighted touching paths, using Theorem \ref{qpf} instead.
However, a subtlety arises: the tangent method relies on determining a family of "free path" curves tangent to the arctic curve. The
most likely free trajectories are lines for the unweighted case $q=1$, but are different curves for $q\neq 1$.
To determine these, consider the model of a single free path from say $(0,0)$ to $(\ell,p)$, with a weight $q^x$ for each vertical step at
abscissa $x$. The partition function reads
$$ Z_{\ell,p}= {p+\ell \choose p}_{\!\!\! q}, \quad {p+\ell \choose p}_{\!\!\! q}=\prod_{i=1}^p \frac{1-q^{i+\ell}}{1-q^i}.$$
To determine the most likely configuration of such paths, we cut them by a vertical line $x=c$ into two pieces, one from $(0,0)$
to say $(c,d)$ and one from $(c,d)$ to $(\ell,p)$. The most likely position $d=d^*(c)$ of the path on the $x=c$ line is obtained by maximizing
the corresponding contributions to $Z_{\ell,p}$: $Z_{c,d} Z_{\ell-c,p-d} q^{c(p-d)} $, where the last factor accounts for the area of 
the rectangle of size $c\times (p-d)$, not counted in $Z_{\ell-c,p-d}$. In the scaling limit where $(\ell,p,c,d)\sim n(\lambda,z,x,y)$, and 
$q\to q^{1/n}$,
this reads
\begin{align*} &{n (x+y) \choose n y}_{\!\!\!q^{1/n}} \, {n (\lambda+z-x-y) \choose n (z-y)}_{\!\!\!q^{1/n}} \,(q^{1/n})^{n^2 x(z-y)} \sim 
e^{n S(\lambda,z,x,y)} 
\end{align*}
with
\begin{align*}
&S(\lambda,z,x,y)= \int_{0}^y du\, {\rm Log}\left( \frac{q^{x+u}-1}{q^u -1}\right)
+\int_0^{z-y} du\, {\rm Log}\left( \frac{q^{\lambda-x+u}-1}{q^u -1}\right)+x(z-y) {\rm Log}(q).
\end{align*}

The largest contribution to the partition function is determined by the saddle-point solution $\partial_{y}S(\lambda,z,x,y)=0$, namely
\begin{equation}\frac{q^{x+y}-1}{q^y-1} \, \frac{q^{z-y}-1}{q^{\lambda+z-x-y}-1}\, q^{-x}=1\ \  \Rightarrow \ \ q^x \frac{q^z-1}{q^{\lambda+z}-1} +q^{z-y} \frac{q^\lambda-1}{q^{\lambda+z}-1} =1 .\label{geod}
\end{equation}
This is the equation of the geodesic (most likely trajectory) of a q-weighted path from $(0,0)$ to $(\lambda,z)$ (in rescaled variables).

\subsection{$q$-Tangent method: arctic curves}

We may now repeat the derivation of Section \ref{gapsec} in the q-deformed case. Recall, from Remark \ref{rmk:infgap}, that we can recover the SE and SW portions of the arctic curve by treating them as portions of infinitely large gaps. 

Starting with the same endpoint distribution \eqref{distrigap},
we move the rightmost path endpoint at the left of the gap by an amount $r$ into the gap, and compute the ratio of the partition functions
with moved point and with unmoved.
For simplicity, let us first assume that $q\geq 1$.
With obvious notations 
we obtain, taking the large $n$ scaling limit $q\to \bar q=q^{1/n}$, $r\to n \rho$, $a_i\to n \al(i/n)$, the ratio of partition functions reads,
using Theorem \ref{qpf},
\begin{align*}
H_{n,r}^{\bar q}=\frac{Z_{n,r}^{\bar q}}{Z_{n,0}^{\bar q}}&= \exp\left\{- \sum_{i=0}^{k-1}{\rm Log}\Big(\frac{\bar q^{a_i+i} -\bar q^{r+a_k+k}}{\bar q^{a_i+i} -\bar q^{a_k+k}}\Big)-\sum_{i=k+1}^n {\rm Log}\Big(\frac{\bar q^{a_i+i} -\bar q^{r+a_k+k}}{\bar q^{a_i+i} -\bar q^{a_k+k}}\Big) \right\} \ \sim e^{n S_0^q(\rho)} 
\end{align*}
with
\begin{align*}
S_0^q(\rho)&= \int_0^\kappa du\,  {\rm Log}\Big(\frac{q^{\al_1(u)+u}-q^{\rho+\al_1(\kappa)+\kappa}}{q^{\al_1(u)+u}-q^{\al_1(\kappa)+\kappa}}\Big) 
+\int_{\kappa}^1 du\,  {\rm Log}\Big(\frac{q^{\al_2(u)+u}-q^{\rho+\al_1(\kappa)+\kappa}}{q^{\al_2(u)+u}-q^{\al_1(\kappa)+\kappa}}\Big)  .
\end{align*}
Asymptotically for large $n$, after exiting at position $(a_k+r,0)\sim (n(\al_1(\kappa)+\rho),0)$ the path continues (along a geodesic) until it reaches the endpoint $(a_k+m,s)\sim n(\al_1(\kappa)+\mu,z)$, with the same scaling variables as in Section \ref{gapsec}. The contribution of the single path partition function reads (including the area of the rectangle of size $(a_k+r)\times p$ not counted by the q-binomial)
\begin{align*}
Y_{r,s}^{\bar q}&=  q^{p(a_k+r)}\, {m-r+s\choose s}_{\!\!\!\bar q} \sim e^{n S_1^q(\rho)}\\
S_1^q(\rho)&=z(\al_1(\kappa)+\rho) {\rm Log}(q)+ \int_0^{\mu-\rho} du\, {\rm Log}\left(\frac{q^{z+u}-1}{q^u-1} \right)  
\end{align*}
The leading contribution to the total partition function $\sum_r \frac{Z_{n,r}^{\bar q}}{Z_{n,0}^{\bar q}}Y_{r,s}^{\bar q}$ 
is attained at the saddle-point of the total action $S(\rho)=S_0(\rho)+S_1(\rho)$,
namely the most likely value of $\rho$ is determined by $\partial_\rho S(\rho)=0$:
$${\rm Log}(q)\,q^{\rho+\al_1(\kappa)+\kappa}\left\{\int_0^\kappa  \frac{du}{q^{\rho+\al_1(\kappa)+\kappa}-q^{\al_1(u)+u} }
+\int_{\kappa}^1 \frac{du}{q^{\rho+\al_1(\kappa)+\kappa}-q^{\al_2(u)+u}}\right\}+{\rm Log}\Big( q^z\frac{q^{\mu-\rho}-1}{q^{\mu-\rho+z}-1} \Big) =0 .$$
Introducing the functions and variables $x_0^q(t),t$ below, we have
$$ t= q^{\kappa+\rho+\al_1(\kappa)},\ \ x_0^q(t)=q^{-t \int_0^1 \frac{du}{t-q^{u+\al(u)}}}=q^{-t \left\{\int_0^\kappa\frac{du}{t-q^{u+\al_1(u)}}+\int_\kappa^1\frac{du}{t-q^{u+\al_2(u)}}\right\}} =q^z \frac{q^{\mu-\rho}-1}{q^{\mu-\rho+z}-1} , $$
hence
$$ q^{\mu -\rho}=\frac{1-q^{-z}x_0^q(t)}{1-x_0^q(t)} .$$

We now plug this into the geodesic equation \eqref{geod} with $\lambda=\mu-\rho$ to get
\begin{equation}(1-x_0^{q}(t)) \, q^x + x_0^{q}(t)\, q^{-y} =1 .\label{pregeod} \end{equation}
Note that we chose the origin $(x,y)=(0,0)$ at the exit point of the tangent path. In order to translate to the original coordinates, we must change 
$x\to x-\al_1(\kappa)-\rho=x-\theta+\kappa$, or $q^x\to q^{x+\kappa}/t$, with $t=q^{\theta}$.
Finally, the family of tangent geodesics to the arctic curves reads
\begin{equation}(1-x_0^{q}(t)) \, q^{\kappa+x} +t  x_0^{q}(t)\, q^{-y} =t .\label{tangeod}
\end{equation}

The envelope of this family is the SW portion of the gap arctic curve is given by
\begin{equation}\left\{ \begin{matrix} q^{X_{SW}^{q-gap}(t)}&= q^{-\kappa} \frac{t^2 (x_0^q)'(t)}{t(x_0^q)'(t) +x_0^q(t)\big(1-x_0^q(t)\big)}\hfill \\
q^{Y_{SW}^{q-gap}(t)}&= \frac{t(x_0^q)'(t) +x_0^q(t)\big(1-x_0^q(t)\big)}{t (x_0^q)'(t)+1-x_0^q(t)}\hfill \end{matrix}
\right. .\label{qgapSW}
\end{equation}
The range of the parameter $t$ is such that $\theta\in [\kappa+\al_1(\kappa),\theta^*]$, where $t^*=q^{\theta^*}$ is the unique root of $x_0^q(t^*)=1$,
with $\theta^*\in [\al_1(\kappa)+\kappa,\al_1(\kappa)+\kappa+\mu]$.

As before, the SE branch of the gap is obtained by considering the same setting but in terms of the dual paths, going from the E boundary to the N one. 
After reflecting the picture w.r.t. a vertical line,
the new distribution of endpoints is as before $\delta(u)=\al(1)-\al(1-u)$.
Recall that vertical steps at abscissa $i$ in original (touching) path
receive the weight $q^i$. For dual path vertical steps at abscissa $i$, after the reflection of the configuration the weight becomes 
$q^{a_n-i}$. The ratio of partition functions with displaced endpoint to that without becomes
$$ \tilde H_{n,r}=H_{n,r}^{\bar q}\Big\vert_{q\to q^{-1}\atop \al\to \delta} \sim e^{n \tilde S_0^q}. $$
Indeed all factors $q^{a_n}$ cancel between numerator and denominator, so the net effect is to change q-weights from $q\to q^{-1}$,
while the reflected distribution changes from $\al\to \delta$ (which also includes $\kappa\to 1-\kappa$ and $\rho\to \mu-\rho$ or $r\to m-r$).
The new function $x_0^q$ (now denoted $x_1^q$)  for the reflected configuration, with 
$\tilde t= t\vert_{q\to q^{-1},\al\to \delta}=t q^{-\al(1)-1}$, is given by
\begin{equation}\label{newx} 
x_1^q(t)= x_0^q(t)\Big\vert_{q\to q^{-1}\atop \al\to \delta} = q^{{\tilde t} \int_0^1 \frac{du}{\tilde t -q^{-\delta(u)-u} }}=\frac{1}{x_0^q(t)}.
\end{equation}
When applying the tangent method, we may reflect the free exiting path as well, upon also changing the q-weight from $q\to q^{-1}$,
up to an overall factor of ${\bar q}^{n z \al(1)}$ that does not affect the saddle-point equations. As a net result, we obtain the {\it same} 
geodesic equation \eqref{pregeod} in which we must perform the changes $x_0^q(t)\to x_1^q(t)=1/x_0^q(t)$, $q\to q^{-1}$, and $x\to -x$
to account for the vertical reflection. Finally, we must translate the origin of the frame in the same manner by taking $x\to x+\kappa-\theta$,
leading to the family of tangent geodesics
\begin{equation}(1-x_1^{q}(t)) \, q^{\kappa+x} +t  x_1^{q}(t)\, q^{y} =t .\label{tangeodnew}
\end{equation}
The envelope of this family is the SE portion of the gap arctic curve
\begin{equation} \left\{ \begin{matrix} q^{X_{SE}^{q-gap}(t)}&=q^{-\kappa}  \frac{t^2(x_1^q)'(t)}{t(x_1^q)'(t) +x_1^q(t)\big(1-x_1^q(t)\big)}\hfill \\
q^{Y_{SE}^{q-gap}(t)}&= \frac{t (x_1^q)'(t)+1-x_1^q(t)}{t(x_1^q)'(t) +x_1^q(t)\big(1-x_1^q(t)\big)}\hfill \end{matrix}
\right. .
\label{qgapSE}
\end{equation}
The range of the parameter $t$ is such that $\theta\in [\theta^*,\kappa+\al_1(\kappa)+\mu]$, where $t^*=q^{\theta^*}$ as above. 
Using the relation \eqref{newx}, and noting that the same will hold in the case of many separated gaps, we summarize the portions of arctic 
curve at the vicinity of any gap in the following theorem.

\begin{thm}
The existence of gaps of the form $[\al(\kappa),\al(\kappa)+\mu]$ 
in the distribution $\al$ gives rise to two extra portions of arctic curve to the E and W of each gap, with parametric equations
\begin{align}
&\left\{ \begin{matrix} q^{X_{SW}^{q-gap}(t)}&= q^{-\kappa} \frac{t^2 (x_0^q)'(t)}{t(x_0^q)'(t) +x_0^q(t)\big(1-x_0^q(t)\big)}\hfill \\
q^{Y_{SW}^{q-gap}(t)}&= \frac{t(x_0^q)'(t) +x_0^q(t)\big(1-x_0^q(t)\big)}{t (x_0^q)'(t)+1-x_0^q(t)}\hfill \end{matrix}
\right. \quad t\in [q^{\al(\kappa)+\kappa},t^*] ,\label{newqgapSW}\\
&\left\{ \begin{matrix} q^{X_{SE}^{q-gap}(t)}&=q^{-\kappa}  \frac{t^2(x_0^q)'(t)}{t(x_0^q)'(t) +1-x_0^q(t)}\hfill \\
q^{Y_{SE}^{q-gap}(t)}&= \frac{t(x_0^q)'(t) +x_0^q(t)\big(1-x_0^q(t)\big)}{t (x_0^q)'(t)+1-x_0^q(t)}\hfill \end{matrix}
\right.  \quad t\in [t^*,q^{\al(\kappa)+\kappa+\mu}] ,
\label{newqgapSE}
\end{align}
where
$$x_0^q(t)=q^{-t\int_0^1 \frac{du}{t-q^{\al(u)+u}}} ,$$
and $t^*$ is the double root of the equation $x_0^q(t)=1$ for $t\in [q^{\al(\kappa)+\kappa},q^{\al(\kappa)+\kappa+\mu}]$. 
\end{thm}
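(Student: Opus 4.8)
The plan is to assemble the ingredients produced by the $q$-tangent-method analysis carried out just above the statement, and then perform the two bookkeeping steps that remain: taking envelopes and eliminating $x_1^q$ in favour of $x_0^q$. Starting from Theorem~\ref{qpf} for the $q$-weighted partition function, the large-$n$ saddle-point computation of $S_0^q(\rho)+S_1^q(\rho)$ together with the geodesic equation \eqref{geod} already yields the one-parameter family of tangent geodesics \eqref{tangeod} to the SW portion of arctic curve near a single gap (in terms of $x_0^q(t)$), and, after the dual-path construction and the reflection identity \eqref{newx}, the family \eqref{tangeodnew} to the SE portion (in terms of $x_1^q(t)$). The theorem is the consolidated form of the envelopes of these two families, uniformly expressed through $x_0^q$.

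\textbf{Step 1: the two envelopes.} Each family has the form $A(t)\,q^{\kappa+x}+B(t)\,q^{\mp y}=C(t)$, i.e.\ it is affine in the pair $(q^{X},q^{\mp Y})$; differentiating the defining relation in $t$ and solving the resulting $2\times 2$ linear system by Cramer's rule expresses $q^{X}$ and $q^{\mp Y}$ as explicit rational functions of $t$, of $x_0^q(t)$ (resp.\ $x_1^q(t)$), and of its derivative. Carrying this out for \eqref{tangeod} reproduces \eqref{qgapSW}, and for \eqref{tangeodnew} reproduces \eqref{qgapSE}; this is a short bilinear computation. The one point that must be tracked is the sign of the $Y$-exponent: the SW family carries $q^{-y}$ while the SE family carries $q^{+y}$, the difference originating in the vertical reflection used to set up the dual-path picture, and it is precisely this sign that produces the opposite arrangement of numerator and denominator seen between the $Y$-formulas of \eqref{qgapSW} and \eqref{qgapSE}.

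\textbf{Step 2: eliminating $x_1^q$, ranges, and the nature of $t^*$.} By \eqref{newx} one has $x_1^q(t)=1/x_0^q(t)$, hence $(x_1^q)'(t)=-(x_0^q)'(t)/x_0^q(t)^2$. Substituting into \eqref{qgapSE} and cancelling the common factor $x_0^q(t)^{-2}$ (and an overall sign) from numerator and denominator turns $q^{X_{SE}^{q-gap}}$ and $q^{Y_{SE}^{q-gap}}$ into the expressions in \eqref{newqgapSE}, involving $x_0^q$ alone; equation \eqref{newqgapSW} is just \eqref{qgapSW} restated with $\al(\kappa)=\al_1(\kappa)$. For the ranges: the saddle value $\rho$ lies in $[0,\mu]$, equivalently $t=q^{\theta}$ with $\theta\in[\al(\kappa)+\kappa,\al(\kappa)+\kappa+\mu]$, while positivity of the displaced-endpoint height ($z>0$) forces $x_0^q(t)<1$ on the SW branch and $x_0^q(t)>1$ on the SE branch, the crossover at $t^{*}$ with $x_0^q(t^{*})=1$. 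Existence and uniqueness of $t^{*}$ in the interior of the $\theta$-segment follow exactly as for $q=1$: $x_0^q(t)=1$ is equivalent to $F^q(t):=\int_0^\kappa\frac{du}{t-q^{\al_1(u)+u}}+\int_\kappa^1\frac{du}{t-q^{\al_2(u)+u}}=0$, and $F^q$ is strictly monotone in $t$ (its $t$-derivative has fixed sign), running from $+\infty$ to $-\infty$ across the segment because $u\mapsto u+\al(u)$ is increasing on each piece. That $t^{*}$ is a ``double root'' is the tangency of the curve to the $x$-axis: from \eqref{newqgapSW} one gets at once $q^{Y_{SW}^{q-gap}}(t)-1=-(1-x_0^q(t))^{2}/\bigl(t(x_0^q)'(t)+1-x_0^q(t)\bigr)$, with the analogous identity on the SE branch, so $Y$ vanishes to second order precisely when $x_0^q(t)=1$. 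Finally, the several-gap statement follows by applying the displaced-endpoint argument to each gap separately---each such move is local to its own gap, with $\kappa$ its left coordinate and $\mu$ its width---so the identical formulas hold, with $x_0^q$ now built from the full distribution $\al$.

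\textbf{Main obstacle.} The delicate point is not assembling the formulas but the standing assumption $q\ge 1$ under which the saddle-point analysis was conducted: for $q<1$ the interval $[q^{\al(\kappa)+\kappa},q^{\al(\kappa)+\kappa+\mu}]$ is traversed in the opposite direction and the dominant-balance inequalities used for $S_0^q$, $S_1^q$ and for the monotonicity of $F^q$ reverse. The cleanest remedy is to note that both sides of \eqref{newqgapSW}--\eqref{newqgapSE} are analytic in $q$ for $q>0$ and to extend from $q\ge 1$ by analytic continuation; alternatively one redoes the sign bookkeeping for $q<1$ directly, which is routine but must be spelled out to claim the theorem for all $q>0$.
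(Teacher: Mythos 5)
Your proposal is correct and follows essentially the same route as the paper: the theorem is indeed just the consolidation of the envelopes of the two tangent-geodesic families \eqref{tangeod} and \eqref{tangeodnew} derived immediately beforehand, with $x_1^q$ eliminated via $x_1^q(t)=1/x_0^q(t)$ from \eqref{newx}, and your Cramer's-rule envelope computation and the substitution $(x_1^q)'=-(x_0^q)'/(x_0^q)^2$ reproduce \eqref{newqgapSW}--\eqref{newqgapSE} exactly. Your added remarks on the double-zero of $Y$ at $t^*$ (via $q^{Y}-1=-(1-x_0^q)^2/(t(x_0^q)'+1-x_0^q)$), on existence and uniqueness of $t^*$, and on the $q<1$ case match what the paper does (it handles $q\le 1$ by the substitution $q\to q^{-1}$ after Theorem~\ref{qnogapthm}).
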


\begin{remark}
In the q-weighted NILP version after application of the sliding map, the arctic curve is analytic and reads:
$$\left\{ \begin{matrix} q^{X^{q-NILP}(t)}&=\frac{t^2 (x_0^q)'(t)}{t(x_0^q)'(t) +x_0^q(t)\big(1-x_0^q(t)\big)}\hfill\\
q^{Y^{q-NILP}(t)}&= \frac{t(x_0^q)'(t) +x_0^q(t)\big(1-x_0^q(t)\big)}{t (x_0^q)'(t)+1-x_0^q(t)}\hfill \end{matrix}\right.
,\quad t\in \R\setminus Im(q^\beta),$$
where $Im(\beta)$ is the range of the NILP endpoint distribution function $\beta(u)=\al(u)+u$.
We observe the same "translation/shear phenomenon" as in the non-q-deformed case, namely that
\begin{align*}X_{SE}^{q-gap}(t) &=X^{q-NILP}(t)+Y^{q-NILP}(t)-\kappa,\quad Y_{SE}^{q-gap}(t)=Y^{q-NILP}(t),\\
X_{SW}^{q-gap}(t) &=X^{q-NILP}(t)-\kappa,\quad Y_{SW}^{q-gap}(t)=Y^{q-NILP}(t).
\end{align*}
This also induces a shear relation between the two touching path curves:
$$X_{SE}^{q-gap}(t) =X_{SW}^{q-gap}(t)+Y_{SW}^{qNILP-gap}(t),\quad Y_{SE}^{q-gap}(t)=Y_{SW}^{q-gap}(t).$$
\end{remark}

As in the non q-deformed case, we recover the SW and SE branches of arctic curve in the general case, by considering either $\kappa=0$ or $\kappa=1$
and subsequently sending $\mu\to\infty$.
This results in the following theorem.

\begin{thm}\label{qnogapthm}
The SW and SE branches of arctic curve for q-weighted touching paths with a distribution $\al(x)$ of endpoints on the N boundary
reads
\begin{align}
&\left\{\begin{matrix}q^{X_{SW}^q(t)}&=\frac{t^2(x_0^q)'(t)}{t(x_0^q)'(t) +1-x_0^q(t)}\hfill \\
q^{Y_{SW}^{q}(t)}&= \frac{t(x_0^q)'(t) +x_0^q(t)\big(1-x_0^q(t)\big)}{t (x_0^q)'(t)+1-x_0^q(t)} \hfill \end{matrix}\right. ,\quad\quad t\in   (-\infty,1]\\
&\left\{\begin{matrix}q^{X_{SE}^q(t)}&=q^{-1} \frac{t^2 (x_0^q)'(t)}{t(x_0^q)'(t) +x_0^q(t)\big(1-x_0^q(t)\big)} \hfill \\
q^{Y_{SE}^q(t)}&= \frac{t(x_0^q)'(t) +x_0^q(t)\big(1-x_0^q(t)\big)}{t (x_0^q)'(t)+1-x_0^q(t)}\hfill \end{matrix}\right. ,\quad t\in [q^{\al(1)+1},\infty).
\end{align}
\end{thm}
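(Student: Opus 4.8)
The plan is to obtain both branches as degenerate limits of the gap result of the previous theorem, following the same strategy that Remark \ref{rmk:infgap} uses to recover Theorem \ref{mainthm} from the gap formulas \eqref{SWgapeq}--\eqref{SEgapeq} in the case $q=1$. Geometrically, an infinitely wide macroscopic gap sitting at one end of the $N$-boundary is indistinguishable from the absence of a gap: the portion of arctic curve generated ``to the east'' of an infinitely wide leftmost gap is the full SW branch, and the portion ``to the west'' of an infinitely wide rightmost gap is the full SE branch. In the notation of the previous theorem, the first situation is obtained by setting $\kappa=0$ in \eqref{newqgapSE} and the second by setting $\kappa=1$ in \eqref{newqgapSW}, in each case followed by $\mu\to\infty$.

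The matching of the $X$- and $Y$-formulas is then essentially free of computation. Setting $\kappa=0$ in \eqref{newqgapSE} kills the prefactor $q^{-\kappa}$ and the parametric pair $(q^{X_{SE}^{q-gap}},q^{Y_{SE}^{q-gap}})$ reduces verbatim to the claimed $(q^{X_{SW}^q},q^{Y_{SW}^q})$; setting $\kappa=1$ in \eqref{newqgapSW} leaves a surviving factor $q^{-1}$ and turns $(q^{X_{SW}^{q-gap}},q^{Y_{SW}^{q-gap}})$ into the claimed $(q^{X_{SE}^q},q^{Y_{SE}^q})$. The one point to verify here is that the function $x_0^q(t)=q^{-t\int_0^1 du/(t-q^{\al(u)+u})}$ built from the gapped distribution converges, as the gap is pushed to the boundary and widened, to the function built from the plain distribution $\al$ on $[0,1]$: for $\kappa=1$ the segment carrying the gap shrinks to a point and the integral over it simply disappears, while for $\kappa=0$ one uses the reflection $q\mapsto q^{-1}$, $\al\mapsto\delta$ (with $\delta(u)=\al(1)-\al(1-u)$) already invoked to pass from \eqref{newqgapSW} to \eqref{newqgapSE}, together with the identity $x_1^q(t)=1/x_0^q(t)$.

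The step I expect to be genuinely delicate is the bookkeeping of the parameter range. For the gapped curves these were $t\in[q^{\al(\kappa)+\kappa},t^*]$ and $t\in[t^*,q^{\al(\kappa)+\kappa+\mu}]$, with $t^*$ pinned by $x_0^q(t^*)=1$, and one must show that in the limit $\mu\to\infty$ these degenerate to $t\in(-\infty,1]$ for the SW branch and $t\in[q^{\al(1)+1},\infty)$ for the SE branch. On the SE side ($\kappa=1$) this is the more transparent case: both $t^*$ and the upper endpoint $q^{\al(1)+1+\mu}$ run off to infinity while the lower endpoint stays at $q^{\al(1)+1}$. On the SW side ($\kappa=0$) one has to carry along the reflection $q\mapsto q^{-1}$, $\al\mapsto\delta$ and the accompanying shift of $t$ hidden in $x_1^q(t)=1/x_0^q(t)$, and check that the resulting window opens up to all of $(-\infty,1]$. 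Finally, since every expression in sight is a rational function of $t$, $q$ and $x_0^q(t)$, the hypothesis $q\geq 1$ under which \eqref{newqgapSW}--\eqref{newqgapSE} were derived may be dropped at the end by analytic continuation in $q$.
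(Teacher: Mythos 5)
Your proposal is correct and is essentially identical to the paper's own argument: the paper derives the theorem in one sentence by taking $\kappa=0$ in the SE-gap formula and $\kappa=1$ in the SW-gap formula and sending $\mu\to\infty$, exactly the degeneration you describe (with the same matching of prefactors $q^{-\kappa}$ and the same caveat that the $t$-ranges and the $q\le 1$ case are handled only informally, the latter via the substitution $q\to q^{-1}$, $\al\to\delta$ rather than analytic continuation).
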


So far we have only considered the case $q\geq 1$. To reach the inverse domain $q\leq 1$, one simply has to apply the transformation $q\to q^{-1}$,
under which  segments are mapped according to $[q^a,q^b]\to [q^{-b},q^{-a}]$, and we must take $x_0^q\to \bar x_0^q$, where
$$\bar x_0^q(t)= q^{t \int_0^1\frac{du}{t-q^{-\al(u)-u} }}=q^{\bar t\int_0^1 \frac{du}{\bar t - q^{\delta(u)+u}}},\qquad \bar t =q^{\al(1)+1}\, t.$$

\subsection{Examples}
\subsubsection{Uniform distribution, no gap}\label{sec:unif}

\begin{figure}
\begin{center}
\includegraphics[width=7.cm]{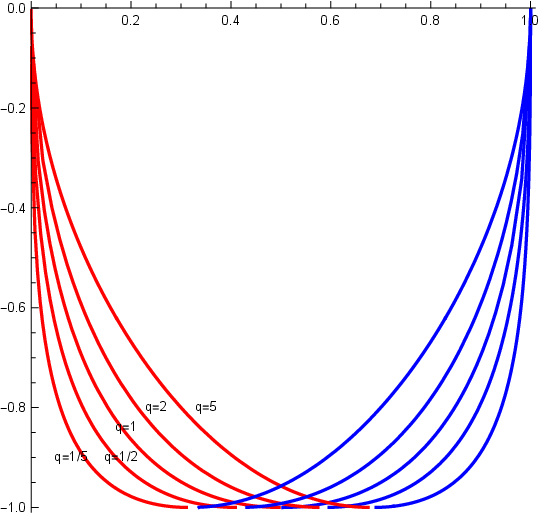}
\end{center}
\caption{\small  Arctic curve for q-weighted touching paths in the case of uniform distribution $\al(u)=u$, for $q=1/5,1/2,1,2,5$.}
\label{fig:qnogap}
\end{figure}

We display in Figure \ref{fig:qnogap} the arctic curve of a typical gapless case, as given by Theorem \ref{qnogapthm}, with
\[
\al(u)=u  \ \ \Rightarrow \ \ x_0^q(t) = \left( \frac{1-tq^{-2}}{1-t}\right)^{1/2}
\]
as a function of $q=\frac15,\frac12,1,2,5$.

\subsubsection{Uniform distribution, one gap}\label{sec:onegap}

We consider the one-gap case with
$$\al(u)=\left\{ \begin{matrix} 
u & u\in [0,\kappa] \\
u+\mu & u\in [\kappa,1]
\end{matrix}\right. \ \ \Rightarrow \ \ x_0^q(t)=q^{-t \int_0^1 \frac{du}{t-q^{\al(u)+u}}}=\left(\frac{1-t q^{-2\kappa}}{1-t}\frac{1-t q^{-2-\mu}}{1-t q^{-2\kappa-\mu}}\right)^{1/2} . $$

\begin{figure}
\begin{center}
\includegraphics[width=10cm]{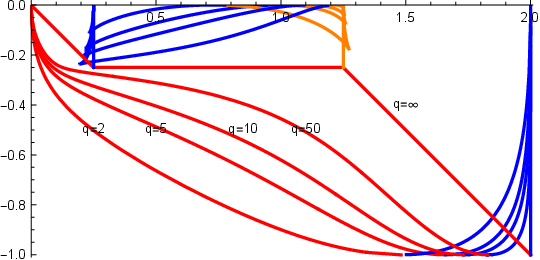}
\end{center}
\caption{\small  Arctic curve for q-weighted touching paths in the case of a gapped otherwise uniform distribution, for $q=2,5,10,50$ and $\infty$.}
\label{fig:qonegap}
\end{figure}

We display in Figure \ref{fig:qonegap} the arctic curve for the case $\kappa=\frac14$ and $\mu=1$ as a function of $q=2,5,10,50$  and $\infty$.
The latter corresponds to the tropical limit of paths with maximal area, leading to a piecewise linear  arctic curve.

\subsubsection{Uniform distribution, one multiple outlet}\label{sec:oneclump}

\begin{figure}
\begin{center}
\includegraphics[width=7.cm]{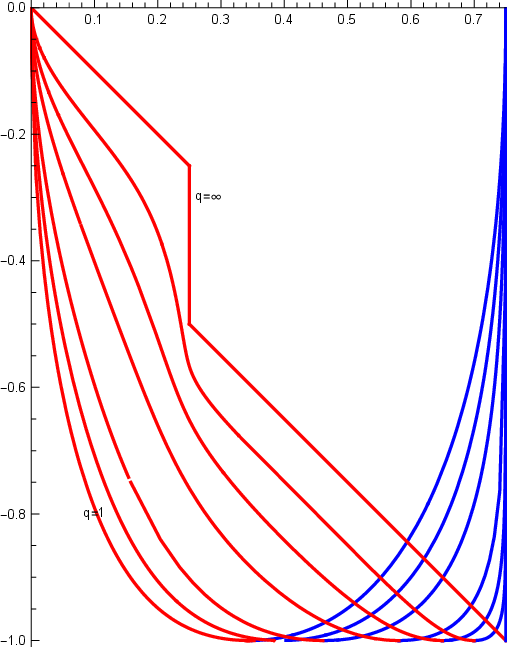}
\end{center}
\caption{\small  Arctic curve for q-weighted touching paths in the case of a uniform distribution with a multiple exit vertex with $\kappa=\lambda=1/4$, for $q$ between 1 and $\infty$.}
\label{fig:qbunchmiddle}
\end{figure}

We consider the situation of a freezing outlet, multiply  occupied by $\ell=\lambda n$ paths, at position $k=\kappa n$, with distribution
$$\al(u)=\left\{ \begin{matrix}u & u\in [0,\kappa]\\
\kappa & u\in [\kappa,\kappa+\lambda]\\
u-\lambda & u\in [\kappa+\lambda,1]\end{matrix} \right. \ \ \Rightarrow \ \ x_0^q(t)= \left(\frac{1-q^{-2\kappa-\lambda}t}{1-t}\frac{1-q^{-2+\lambda}t}{1-q^{-2\kappa}t}\right)^{1/2}.$$
We display in Figure \ref{fig:qbunchmiddle} the arctic curve for the case $\kappa=\lambda=\frac14$, for a few values of $q$ between $1$ and $\infty$.
The limit $q\to \infty$ is the tropical limit of largest area paths, whose arctic curve is piecewise linear.

\subsubsection{Uniform distribution, one multiple outlet next to a gap}\label{sec:clumpgap}

\begin{figure}
\begin{center}
\includegraphics[width=8.cm]{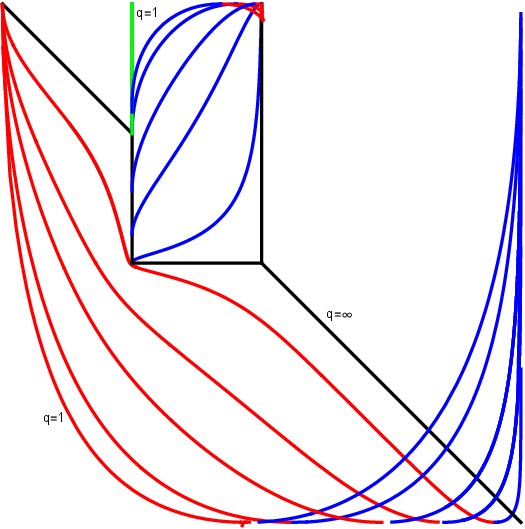}
\end{center}
\caption{\small  Arctic curve for q-weighted touching paths in the case of a uniform distribution with a multiple exit vertex with $\kappa=\lambda=1/4$, followed by a gap of size $\mu=1/4$, for $q$ between 1 and $\infty$. The green segment corresponds to the closing of the frozen region under the inverse sliding map.}
\label{fig:qbunchfreeze}
\end{figure}

We consider the situation of a freezing outlet, multiply  occupied by $\ell=\lambda n$ paths, at position $k=\kappa n$, immediately followed by a gap of size $m=\mu n$, with distribution
$$\al(u)=\left\{ \begin{matrix}u & u\in [0,\kappa]\\
\kappa & u\in [\kappa,\kappa+\lambda]\\
u-\lambda+\mu & u\in [\kappa+\lambda,1]\end{matrix} \right. \ \ \Rightarrow \ \ 
x_0^q(t)= \frac{1-q^{-2\kappa-\lambda}t}{1-q^{-2\kappa}t}\left(\frac{1-q^{-2\kappa}t}{1-t}\frac{1-q^{-2+\lambda-\mu}t}{1-q^{-2\kappa-\lambda-\mu}t}\right)^{1/2}.$$
We display in Figure \ref{fig:qbunchfreeze} the arctic curve for the case $\kappa=\lambda=\frac14$, for a few values of $q$ between $1$ and $\infty$.
The limit $q\to \infty$ is the tropical limit of largest area paths, whose arctic curve is piecewise linear.

\section{Numerical Simulations}

\subsection{Random Sampling}\label{sec:simulations}

In this section we describe the algorithm used to generate random path configurations approximately sampled from the Gibbs measure of the vertex models, that is, with probability proportional to the weight of the configuration. Random sampling algorithms are well-utilized in the study of vertex models, see for example \cite{allison2005numerical,keating2018random,lyberg2017density,propp1998coupling} for a non-exhaustive list. Below we describe briefly describe the well-know Metropolis-Hastings algorithm \cite{hastings1970,metropolis1953} and then highlight the details specific to our model. We allow general $t$ and do not restrict ourselves to $t=0$. After describing the algorithm we provide several examples of generated random configurations.

As mentioned above, we utilize Metropolis-Hastings algorithm in our simulations. Briefly, given a finite state space $X$ and target distribution $\pi$ on this state space, the Metropolis-Hastings algorithm is a way to construct a Markov chain on $X$ with $\pi$ as its stationary distribution. To begin one needs, as a base Markov chain, any irreducible Markov chain on $X$. Let $q(x,y)$ be the probability to transition from $x$ to $y$, $x,y\in X$, for the base chain. Using this base chain, we construct our desired MArkov chain as follows: Suppose we are at state $x$, we propose a jump to state $y$ with probability $q(x,y)$. This proposal is then accepted with probability
\[
\min \left\{1,\frac{\pi(y)q(y,x)}{\pi(x)q(x,y)}\right\}
\]
and rejected otherwise. Note that if the base chain is symmetric, then the $q$ dependence cancels. This process defines a new irreducible Markov chain on $X$ with transition probabilities
\[
p(x,y) = 
\begin{cases}
    q(x,y) \min \left\{1,\frac{\pi(y)q(y,x)}{\pi(x)q(x,y)}\right\}, & x\ne y \\
    1-\sum_{z:z\ne x}p(x,z), & x=y
\end{cases}.
\]
One can check that this Markov chain has $\pi$ as its stationary distribution by showing it satisfies the detailed-balance equations. By running this Markov chain for a large number of steps, one can approximately sample for the desired target distribution.

Recall that our vertex model consists of $n$ paths colored $1$ to $n$ with exactly one path of each color. The paths enter a square grid of $n$ rows with the $i$-th color path entering in the $i$-th row from the top. The path exit the top boundary of the grid at prescribed columns. Label the columns of the vertex model $1$ to $m$ from left to right with $m$ denoting the rightmost column at which a path exits the top boundary. We fix the columns at which paths exit and the multiplicity at each column, but allow the color of the paths at each exit location to vary.  

For the vertex model our target distribution is clear. We would like to sample from the Gibbs measure
\begin{equation}\label{eq:Gibbs}
    \pi(\mathcal{C}) = \frac{\tilde w(\mathcal{C})q^{Area(\mathcal{C})}}{\tilde Z_n(x_1,\ldots,x_n|t,q)}
\end{equation}
where for any path configuration $\mathcal{C}$ we have $\tilde w(\mathcal{C})$ is the weight of the colored path configuration using the weights in \eqref{eq:coloredweights}, $Area(\mathcal{C})$ counts the sum of the areas of each path as described in Section \ref{sec:qweights}, and $\tilde Z_n$ is the partition function of the model given in \eqref{eq:Zfree} now including the area weights.

To construct a base chain, one looks for local moves under which the space of configurations is connected. For models of up-right paths on the rectangular domains we consider, if the  boundary conditions are fixed, it is known that the space of configurations is connected under corner flips of the type shown in the left image of Figure \ref{fig:cornerflip}. In order to allow faster traversal through the state space, we allow larger flips which we now describe.

Fix a color $c$ and a column $k$. If the path of color $c$ does not exit a vertex to the east in column $k$ then a flip is not possible. Otherwise, let $h$ be the height at which the path exits the column to the east (counting the rows from top to bottom), $h_1$ the height the path exits column $k-1$ to the east, and $h_0$ the height the path exits column $k+1$ to the east. By convention we set $h_1=c$ if $k-1=0$ and $h_0=1$ if the path exits column $k+1$ at the top boundary. To execute a flip we choose a height $h'$ uniformly from $\{h_0,h_0+1,\ldots,h_1\}$ and set the height at which the path exits column $k$ to $h'$. See the right image in Figure \ref{fig:cornerflip} for an example. In terms of the weight, the new configuration loses a factor of $x_h$ but gains a factor of $x_{h'}$. The area weight changes by a factor of $q^{h'-h}$. As for the $t$-weight, one must check at each modified vertex in columns $k$ and $k+1$ how the factors of $t$ are affected as it depends on the rest of the paths in the configuration.

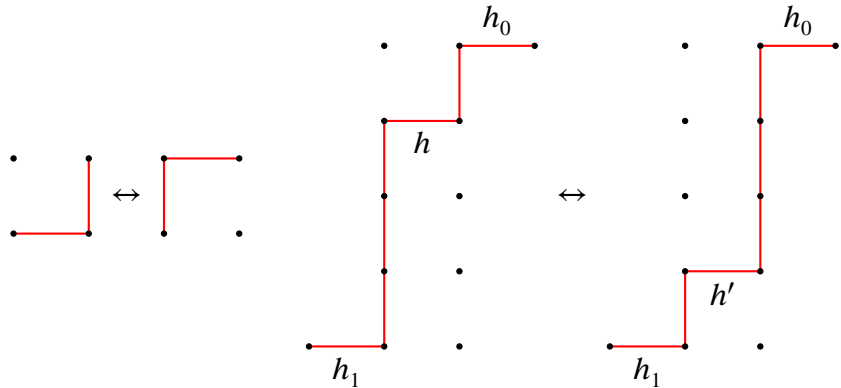
\begin{figure}
    \centering
    $
    \begin{tikzpicture}[baseline=(current bounding box).center]
        \draw[red, thick] (0,0)--(1,0)--(1,1);
        \draw[fill=black] (0,0) circle (1pt);
        \draw[fill=black] (1,0) circle (1pt);
        \draw[fill=black] (0,1) circle (1pt);
        \draw[fill=black] (1,1) circle (1pt);

        \node at (1.5,0.5) {$\leftrightarrow$};
        \begin{scope}[shift={(2,0)}]
            \draw[red, thick] (0,0)--(0,1)--(1,1);
            \draw[fill=black] (0,0) circle (1pt);
            \draw[fill=black] (1,0) circle (1pt);
            \draw[fill=black] (0,1) circle (1pt);
            \draw[fill=black] (1,1) circle (1pt);
        \end{scope}
    \end{tikzpicture}
    \qquad
    \begin{tikzpicture}[baseline=(current bounding box).center]
        \draw[red, thick] (0,0)--(1,0)--(1,3)--(2,3)--(2,4)--(3,4);
        \draw[fill=black] (0,0) circle (1pt);
        \draw[fill=black] (1,0) circle (1pt);
        \draw[fill=black] (1,1) circle (1pt);
        \draw[fill=black] (1,2) circle (1pt);
        \draw[fill=black] (1,3) circle (1pt);
        \draw[fill=black] (1,4) circle (1pt);
        \draw[fill=black] (2,4) circle (1pt);
        \draw[fill=black] (2,3) circle (1pt);
        \draw[fill=black] (2,2) circle (1pt);
        \draw[fill=black] (2,1) circle (1pt);
        \draw[fill=black] (2,0) circle (1pt);
        \draw[fill=black] (3,4) circle (1pt);
        \node[below] at (0.5,0) {$h_1$};
        \node[above] at (2.5,4) {$h_0$};
        \node[below] at (1.5,3) {$h$};

        \node at (3.5,2) {$\leftrightarrow$};
        \begin{scope}[shift={(4,0)}]
            \draw[red, thick] (0,0)--(1,0)--(1,1)--(2,1)--(2,4)--(3,4);
            \draw[fill=black] (0,0) circle (1pt);
            \draw[fill=black] (1,0) circle (1pt);
            \draw[fill=black] (1,1) circle (1pt);
            \draw[fill=black] (1,2) circle (1pt);
            \draw[fill=black] (1,3) circle (1pt);
            \draw[fill=black] (1,4) circle (1pt);
            \draw[fill=black] (2,4) circle (1pt);
            \draw[fill=black] (2,3) circle (1pt);
            \draw[fill=black] (2,2) circle (1pt);
            \draw[fill=black] (2,1) circle (1pt);
            \draw[fill=black] (2,0) circle (1pt);
            \draw[fill=black] (3,4) circle (1pt);
            \node[below] at (0.5,0) {$h_1$};
            \node[above] at (2.5,4) {$h_0$};
            \node[below] at (1.5,1) {$h'$};
        \end{scope}
    \end{tikzpicture}
    $
    \caption{Right: A basic corner flip. Left: A large corner flip in which the new height is chosen uniformly from $\{h_0,h_0+1,\ldots,h_1\}$.}
    \label{fig:cornerflip}
\end{figure}

The models we consider, however, do not have fixed boundary conditions as we consider all possible colorings of the top boundary. To handle this we introduce a new non-local move we call a \emph{color swap}. Consider paths of color $c$ and $c+1$. Suppose that in column $k$ the paths are at the same vertex $v$ with one of the paths exiting to the east and the other to the north. The paths will either meet again at vertex $v'$ in some column $\ell>k$ or never meet again before exiting at the top boundary of the domain. In either case, we swap the colors the two paths from where they exit $v$ to where they enter $v'$, or from where they exit $v$ to where they leave the domain. See Figure \ref{fig:colorswap} for an example. Note that in the latter case, we have changed the coloring of the top boundary. This type of boundary color swap was used previously \cite{corteel2022vertex,keating2021equivalences} in a non-algorithmic context to prove certain identities between partitions functions. 

Note that a color swap does not change the the locations at which paths exit vertices to the east and thus the configurations before and after the swap will have the same $x$-weight and the same area weight. In fact, when one restricts to swaps between colors $c$ and $c+1$, the only change in weight comes the $t$-weight at vertex $v$. If the smaller color is the one exiting $v$ to the north, then the swap increases the power of $t$ by one, otherwise the power of $t$ decreases by one.

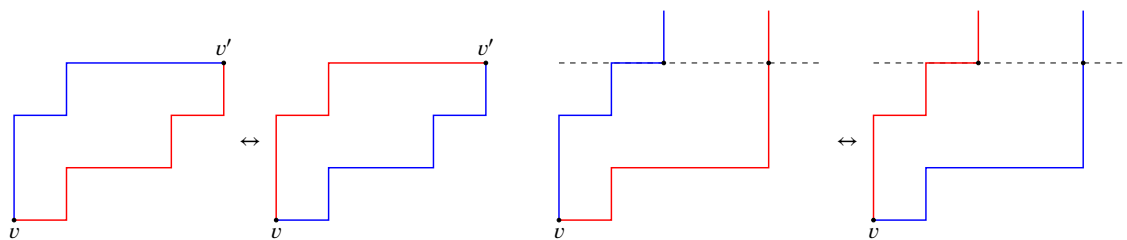
\begin{figure}
    \centering
    \resizebox{0.9\textwidth}{!}{$
    \begin{tikzpicture}
        \draw[blue, thick] (0,0)--(0,2)--(1,2)--(1,3)--(4,3);
        \draw[red, thick] (0,0)--(1,0)--(1,1)--(3,1)--(3,2)--(4,2)--(4,3);
        \draw[fill=black] (0,0) circle (1pt);
        \draw[fill=black] (4,3) circle (1pt);
        \node[below] at (0,0) {$v$};
        \node[above] at (4,3) {$v'$};
        \node at (4.5,1.5) {$\leftrightarrow$};
        \begin{scope}[shift={(5,0)}]
            \draw[red, thick] (0,0)--(0,2)--(1,2)--(1,3)--(4,3);
            \draw[blue, thick] (0,0)--(1,0)--(1,1)--(3,1)--(3,2)--(4,2)--(4,3);
            \draw[fill=black] (0,0) circle (1pt);
            \draw[fill=black] (4,3) circle (1pt);
            \node[below] at (0,0) {$v$};
            \node[above] at (4,3) {$v'$};
        \end{scope}
    \end{tikzpicture}
    \qquad 
    \begin{tikzpicture}
        \draw[dashed] (0,3)--(5,3);
        \draw[blue, thick] (0,0)--(0,2)--(1,2)--(1,3)--(2,3)--(2,4);
        \draw[red, thick] (0,0)--(1,0)--(1,1)--(4,1)--(4,4);
        \draw[fill=black] (0,0) circle (1pt);
        \draw[fill=black] (2,3) circle (1pt);
        \draw[fill=black] (4,3) circle (1pt);
        \node[below] at (0,0) {$v$};
        \node at (5.5,1.5) {$\leftrightarrow$};
        \begin{scope}[shift={(6,0)}]
            \draw[dashed] (0,3)--(5,3);
            \draw[red, thick] (0,0)--(0,2)--(1,2)--(1,3)--(2,3)--(2,4);
            \draw[blue, thick] (0,0)--(1,0)--(1,1)--(4,1)--(4,4);
            \draw[fill=black] (0,0) circle (1pt);
            \draw[fill=black] (2,3) circle (1pt);
            \draw[fill=black] (4,3) circle (1pt);
            \node[below] at (0,0) {$v$};
        \end{scope}
    \end{tikzpicture}
    $}
    \caption{The two types of color swaps. On the left, the case in which the paths separate then later meet. On the right, the case in which the path separate then exit the top boundary of the domain without meeting. The rest of the configuration is left unchanged.}
    \label{fig:colorswap}
\end{figure}

With this our base chain can be given as follows.

\begin{enumerate}
    \item Choose a color $c$ uniformly from $\{1,2,\ldots,n\}$.
    \item Choose a column $k$ uniformly from $\{1,2,\ldots,m\}$.
    \item Propose a move according to the following rules:
    \begin{enumerate}
        \item With probability\footnote{The probability of selecting different moves is somewhat arbitrary. One needs to choose the probability to color swap with $c+1$ to be the same as that to color swap with $c-1$ in order for the transition probabilities to be symmetric. Otherwise, any non-zero probabilities will do.} $\frac{1}{3}$, propose a flip of color $c$ in column $k$ if possible, otherwise leave the configuration unchanged. 
        \item With probability $\frac{1}{3}$, propose a swap starting in column $k$ with color $c+1$  if possible, otherwise leave the configuration unchanged.
        \item With probability $\frac{1}{3}$, propose a swap starting in column $k$ with color $c-1$  if possible, otherwise leave the configuration unchanged.
    \end{enumerate}
\end{enumerate}
One may check that the transition probabilities for the base chain are symmetric. Let $\mathcal{C}'$ be configuration resulting from the proposal. The acceptance probability is given by
\[
\min\left\{1, \frac{\pi(\mathcal{C}')}{\pi(\mathcal{C})}\right\} = \min\left\{1, \frac{\tilde w(\mathcal{C}')}{\tilde w(\mathcal{C})}\right\}
\]
where we note that the partition function cancels from the probability. More explicitly, we see that the probability a flip is then accepted is
\[
\min\left\{1, \frac{x_{h'}}{x_h}q^{h'-h}t^{\delta}\right\}
\]
where $\delta$ counts the change in the power of $t$ between the original and proposed configurations. A color swap is accepted with probability
\[
\min\left\{1,t^{\pm 1}\right\}
\]
where we pick $+$ if in the proposed configuration the smaller color now lies below the larger color, and $-$ otherwise. Note that with $t=0$ the algorithm works without issue, with proposals lowering the power of $t$ always accepted and those increasing the power of $t$ always rejected.

\subsection{Simulations}
Here we present a variety of simulations for various values of $t$, $q$, and several different boundary conditions. In all of these simulations we set the $x$-weights to 1. Recall that the colored vertex model at $t=0$ is equivalent to the colorblind vertex model at $t=0$. In the $t=0$ case, we superimpose the computed arctic curve over the simulation.

\begin{itemize}
    \item Figure \ref{fig:DWBCsim} shows simulations of the colored DWBC for various values of $t$ and $q$.
    \item Figure \ref{fig:Gapsim} shows simulations for a boundary condition with a single gap, as in Section \ref{sec:onegap}, for a various values of $t$ and $q$.
    \item  Figure \ref{fig:Clumpsim} shows simulations for three different boundary conditions with frozen regions, that is, regions in which a macroscopic number of paths exit at the same points on the top boundary, all at $t=0$ and $q=1$. The boundary conditions include those in Sections \ref{sec:oneclump} and \ref{sec:clumpgap}.
    \item Figure \ref{fig:tnear1} shows the colored DWBC for $q=1$ and values of $t>1$ of the form $t=\tau^{1/n}$ where $n$ is the number of paths.
    \item Figure \ref{fig:t>1} shows the colored DWBC for  $t=2$ and $q=1$.
    \item Figure \ref{fig:qinfinity} shows simulation of the tropical limit $q\to\infty$.
\end{itemize}

\begin{figure}
    \centering
    \resizebox{\textwidth}{!}{
    \begin{tabular}{ccccc}
        & $q=0.2$ & $q=0.5$ & $q=1.0$ & $q=5$ \\
        $t=0$ & 
        $\vcenter{\hbox{\includegraphics[width=0.3\textwidth]{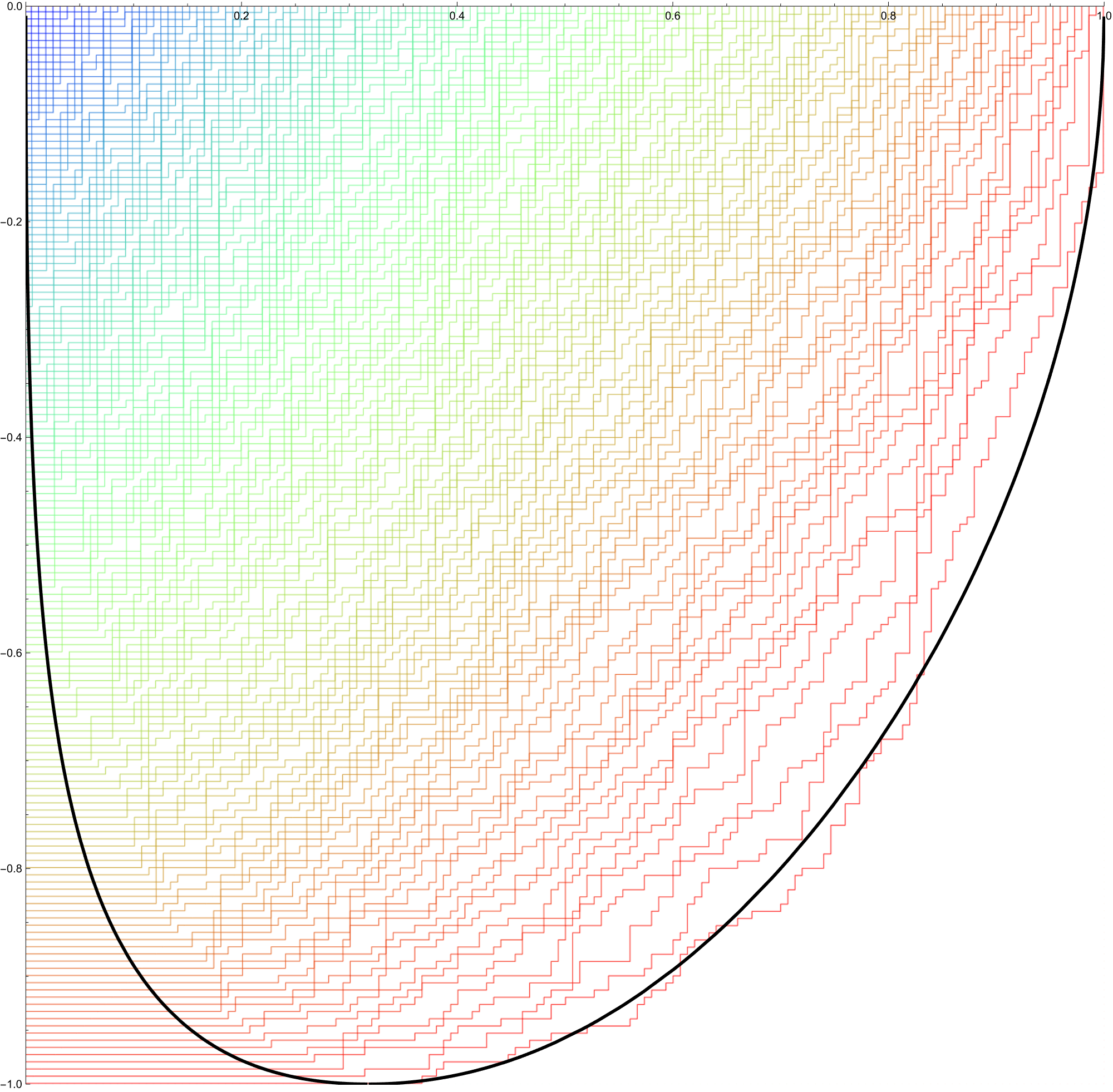}}}$
        & 
        $\vcenter{\hbox{\includegraphics[width=0.3\textwidth]{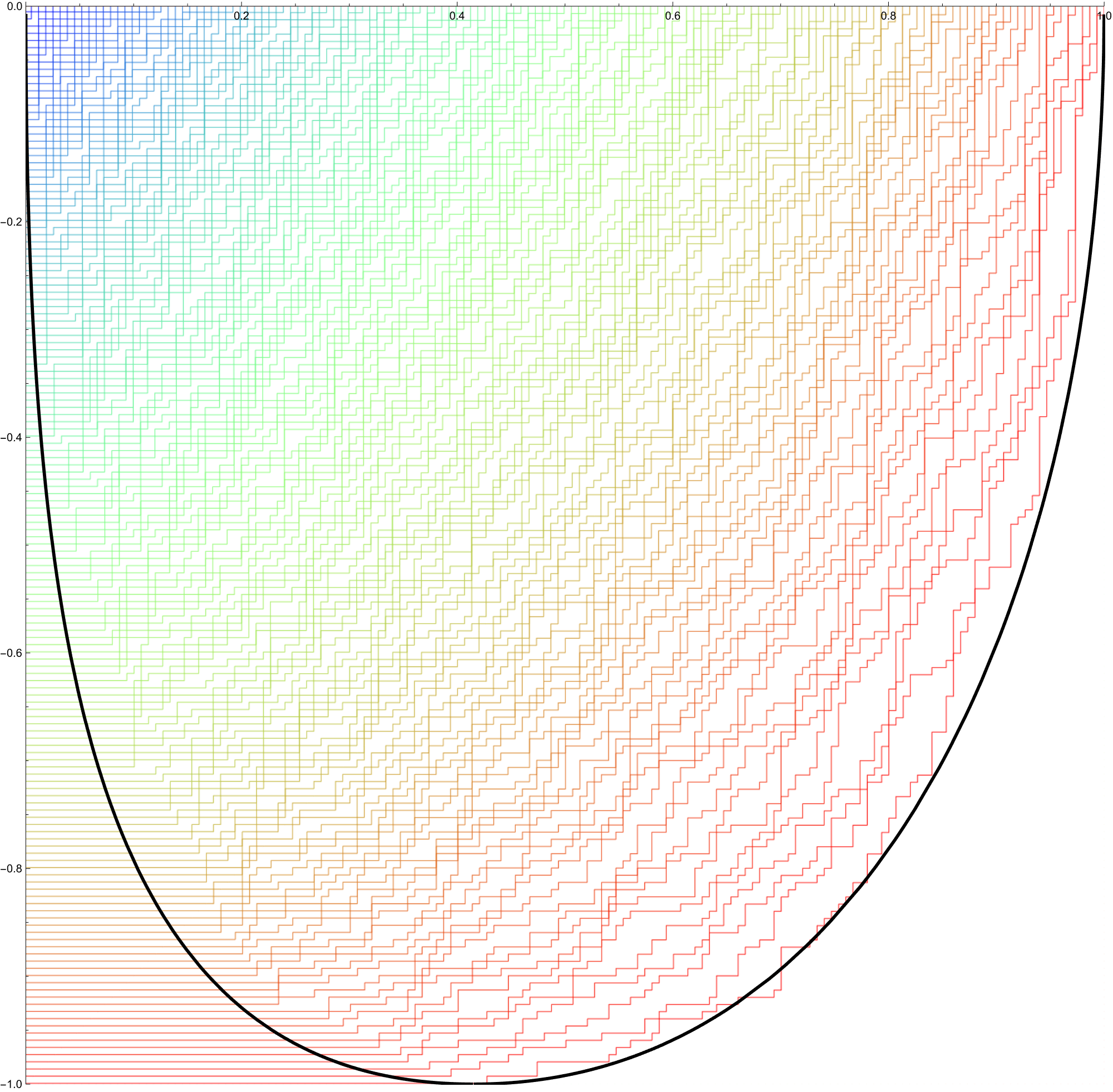}}}$
        & 
        $\vcenter{\hbox{\includegraphics[width=0.3\textwidth]{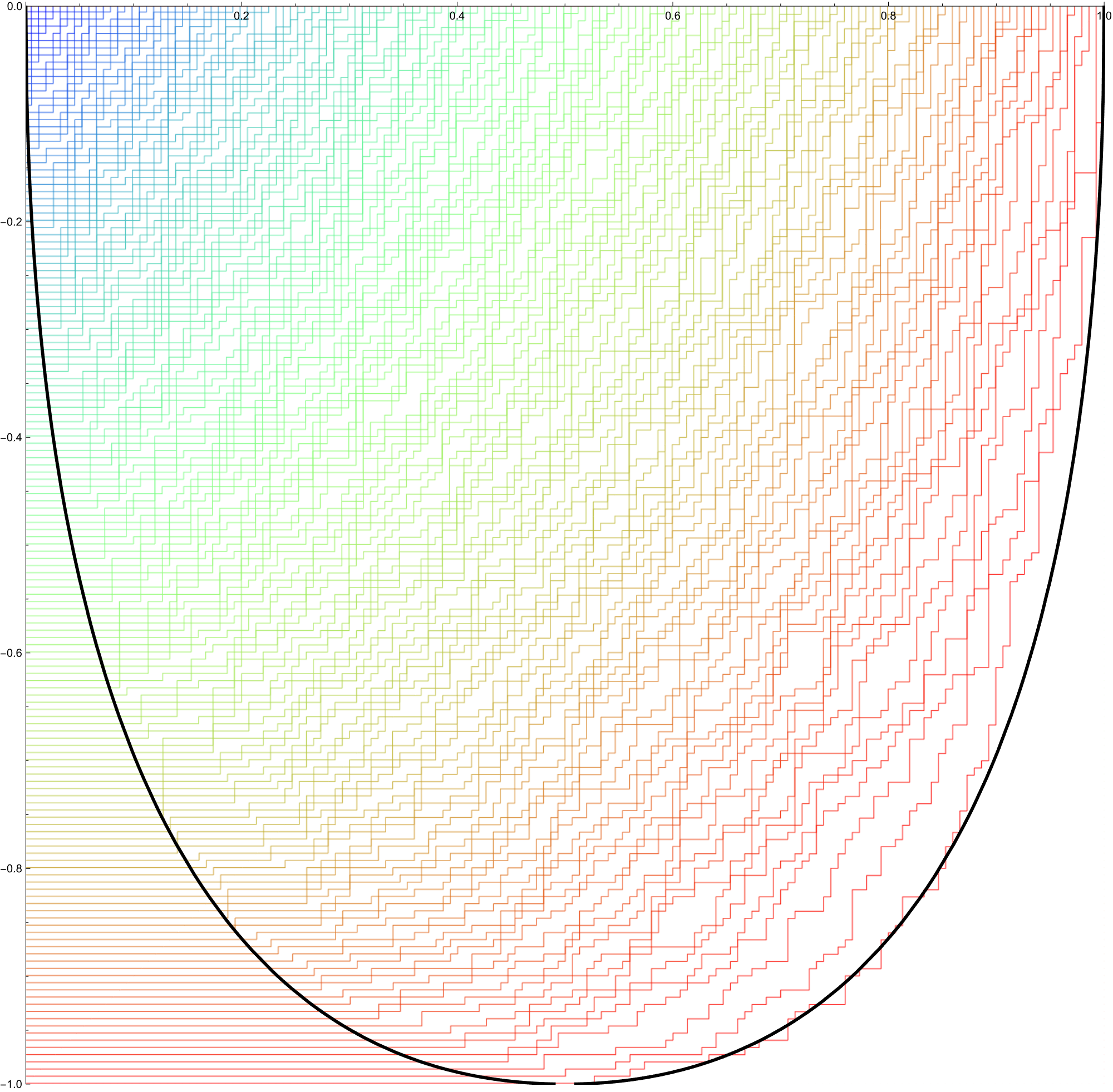}}}$
        & 
        $\vcenter{\hbox{\includegraphics[width=0.3\textwidth]{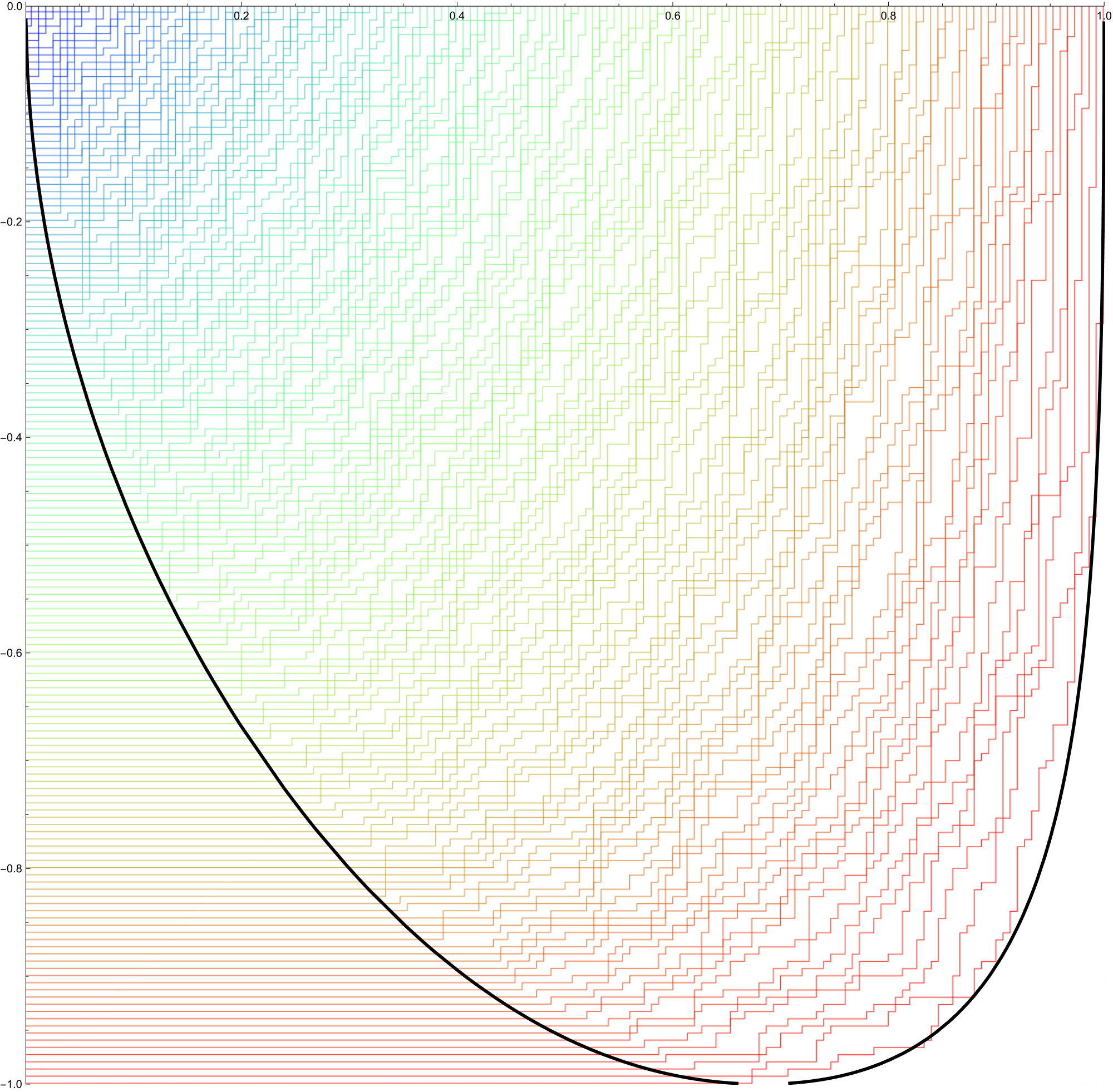}}}$
        \\
        $t=0.3$ & 
        $\vcenter{\hbox{\includegraphics[width=0.3\textwidth]{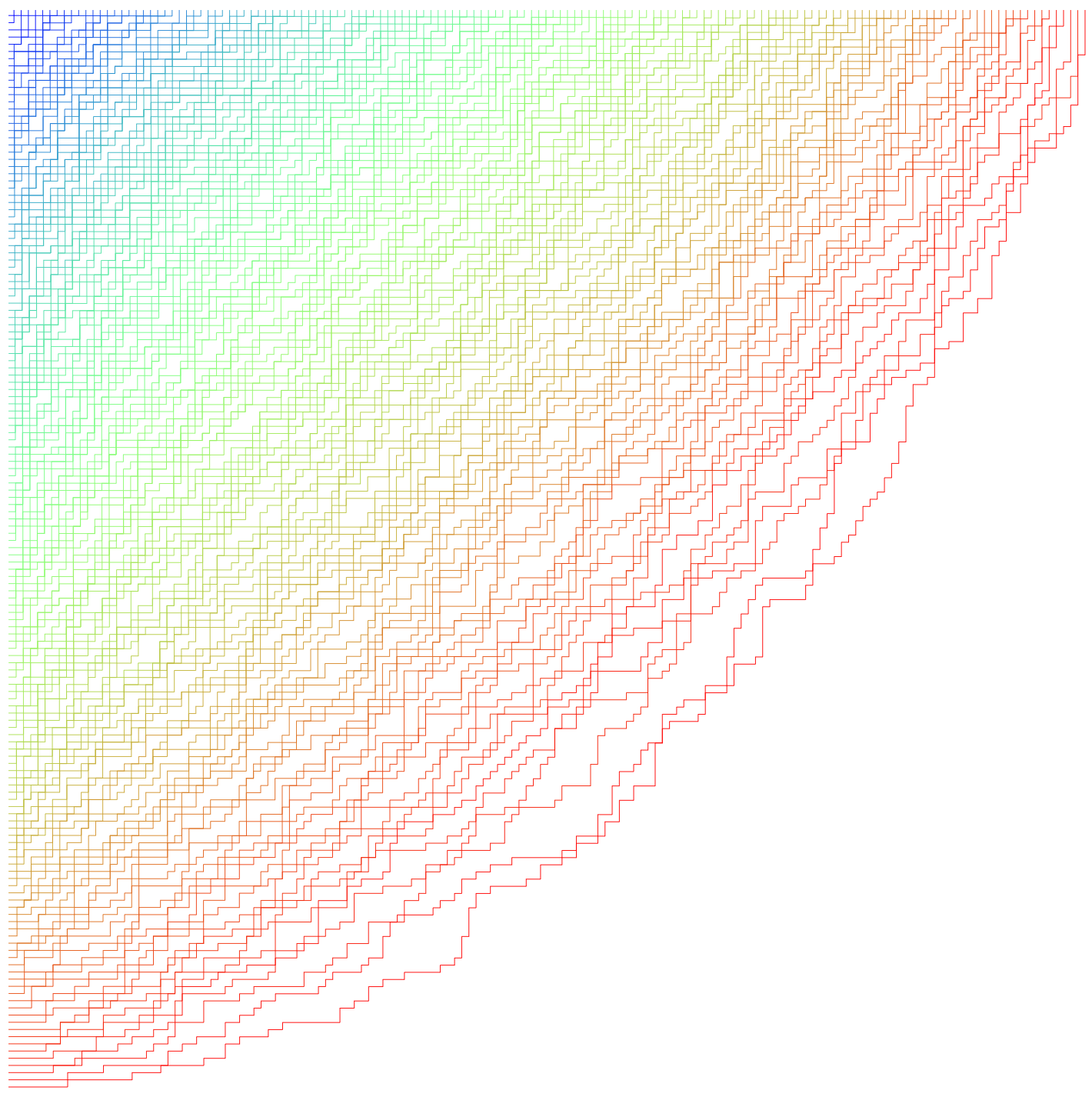}}}$
        & 
        $\vcenter{\hbox{\includegraphics[width=0.3\textwidth]{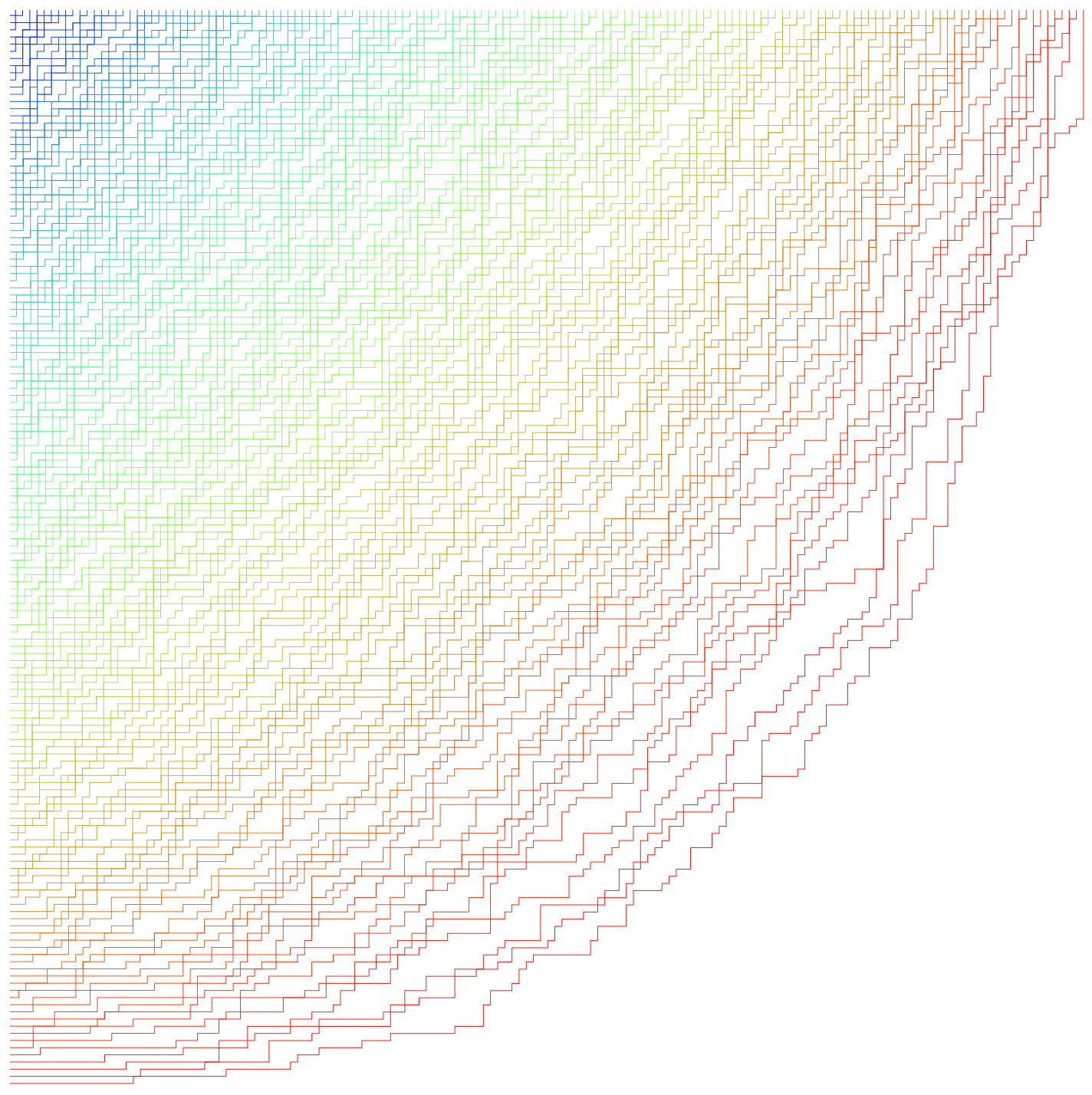}}}$
        & 
        $\vcenter{\hbox{\includegraphics[width=0.3\textwidth]{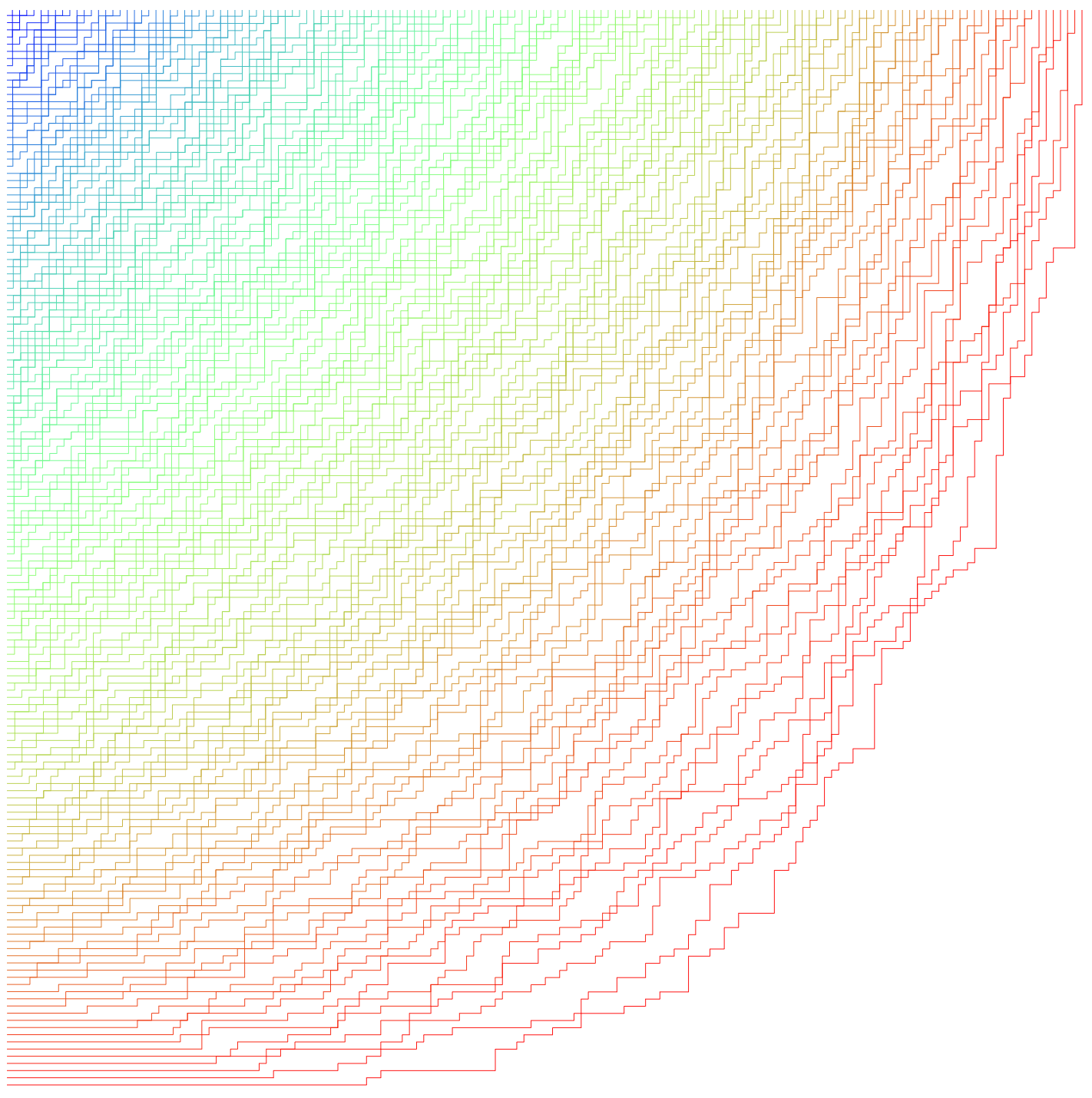}}}$
        & 
        $\vcenter{\hbox{\includegraphics[width=0.3\textwidth]{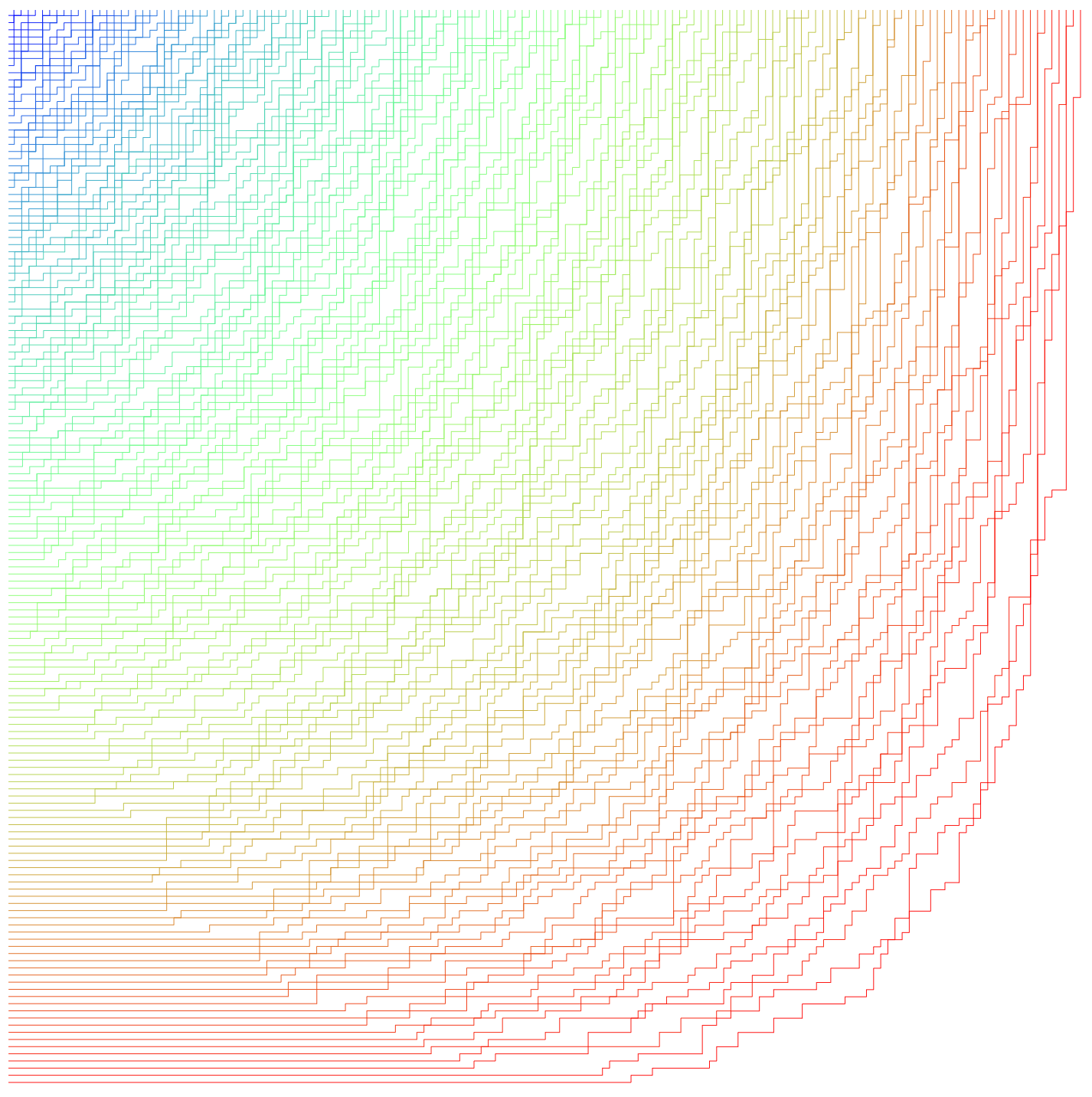}}}$
        \\
        $t=0.6$ & 
        $\vcenter{\hbox{\includegraphics[width=0.3\textwidth]{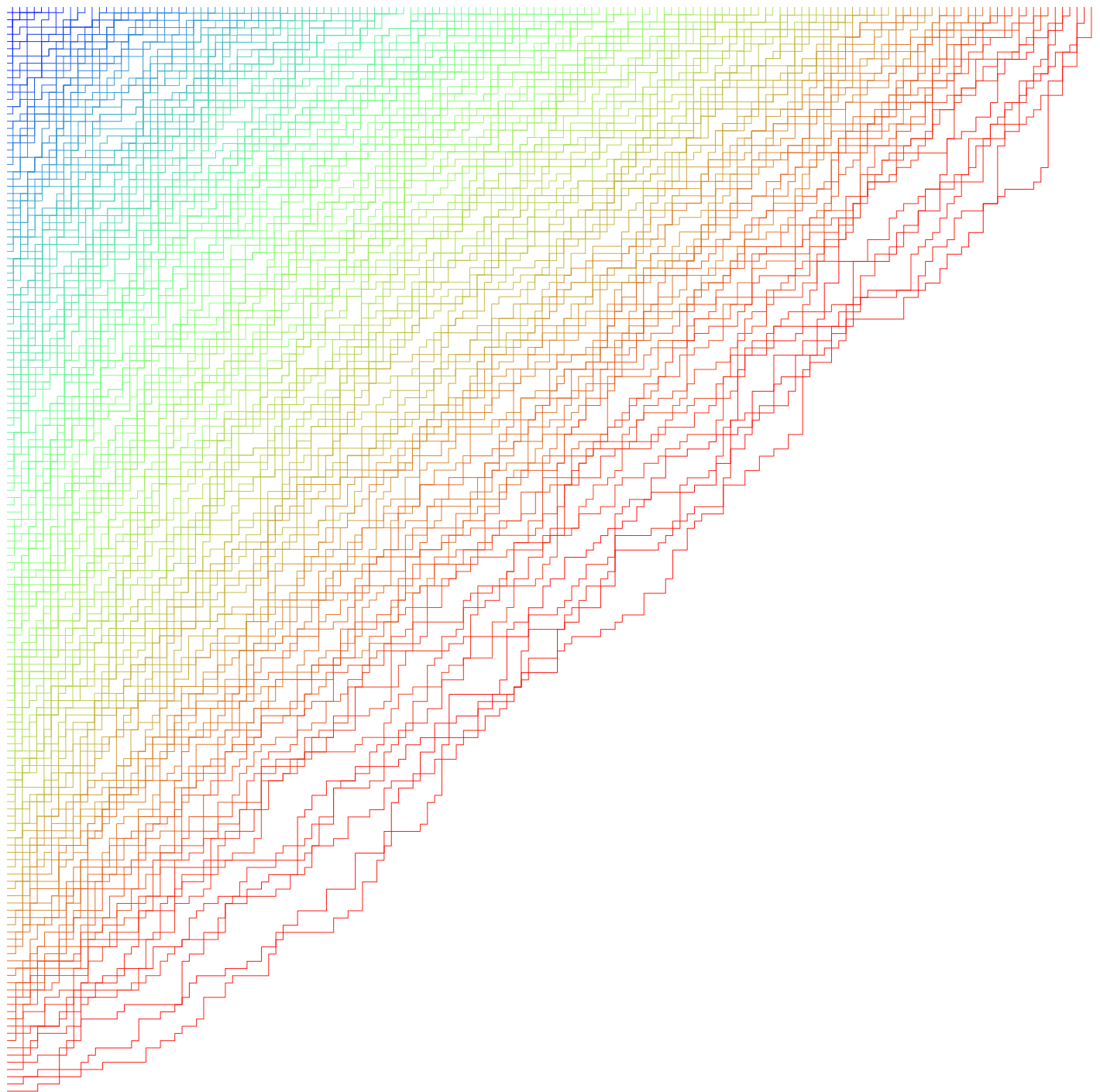}}}$
        & 
        $\vcenter{\hbox{\includegraphics[width=0.3\textwidth]{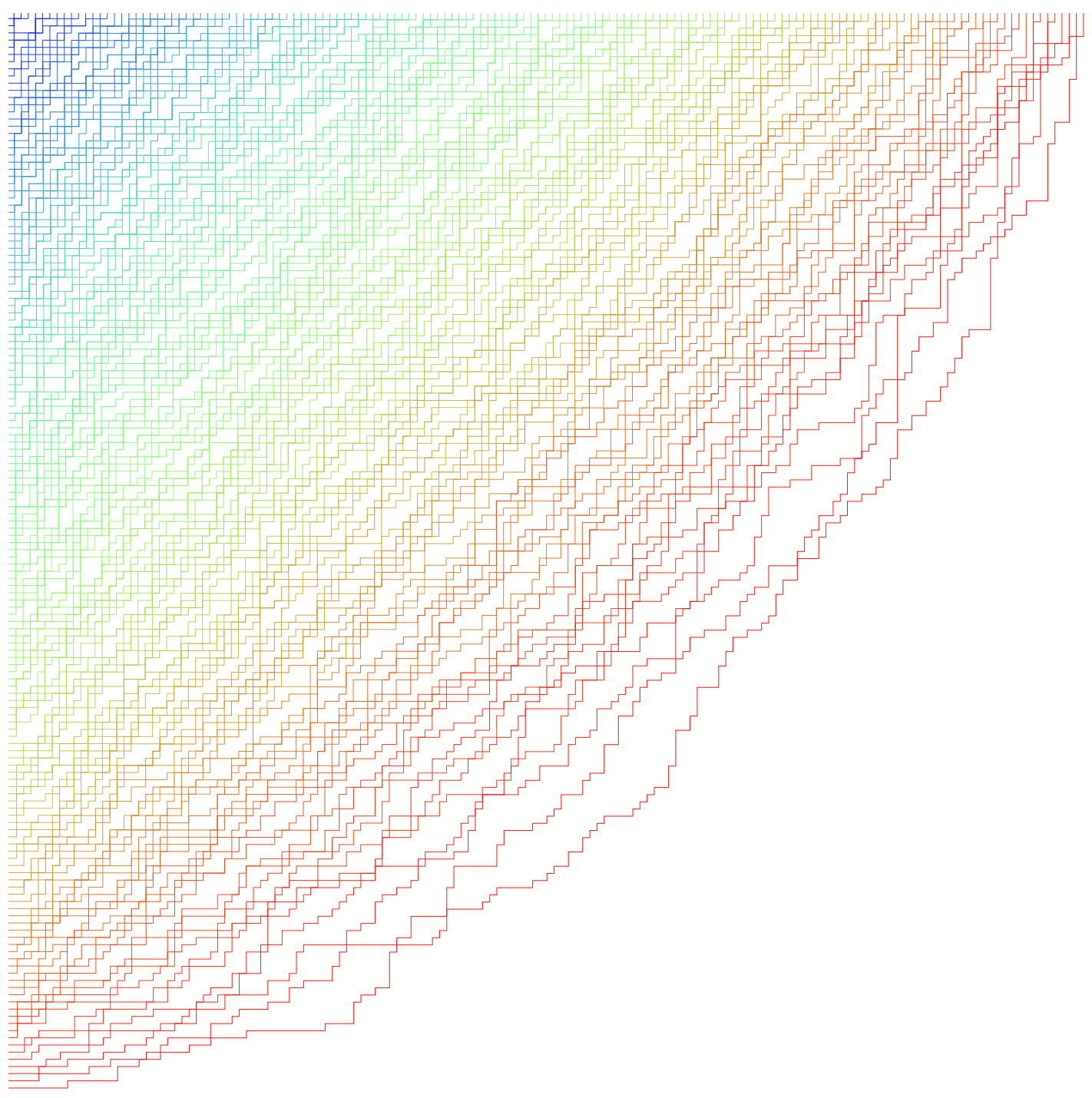}}}$
        & 
        $\vcenter{\hbox{\includegraphics[width=0.3\textwidth]{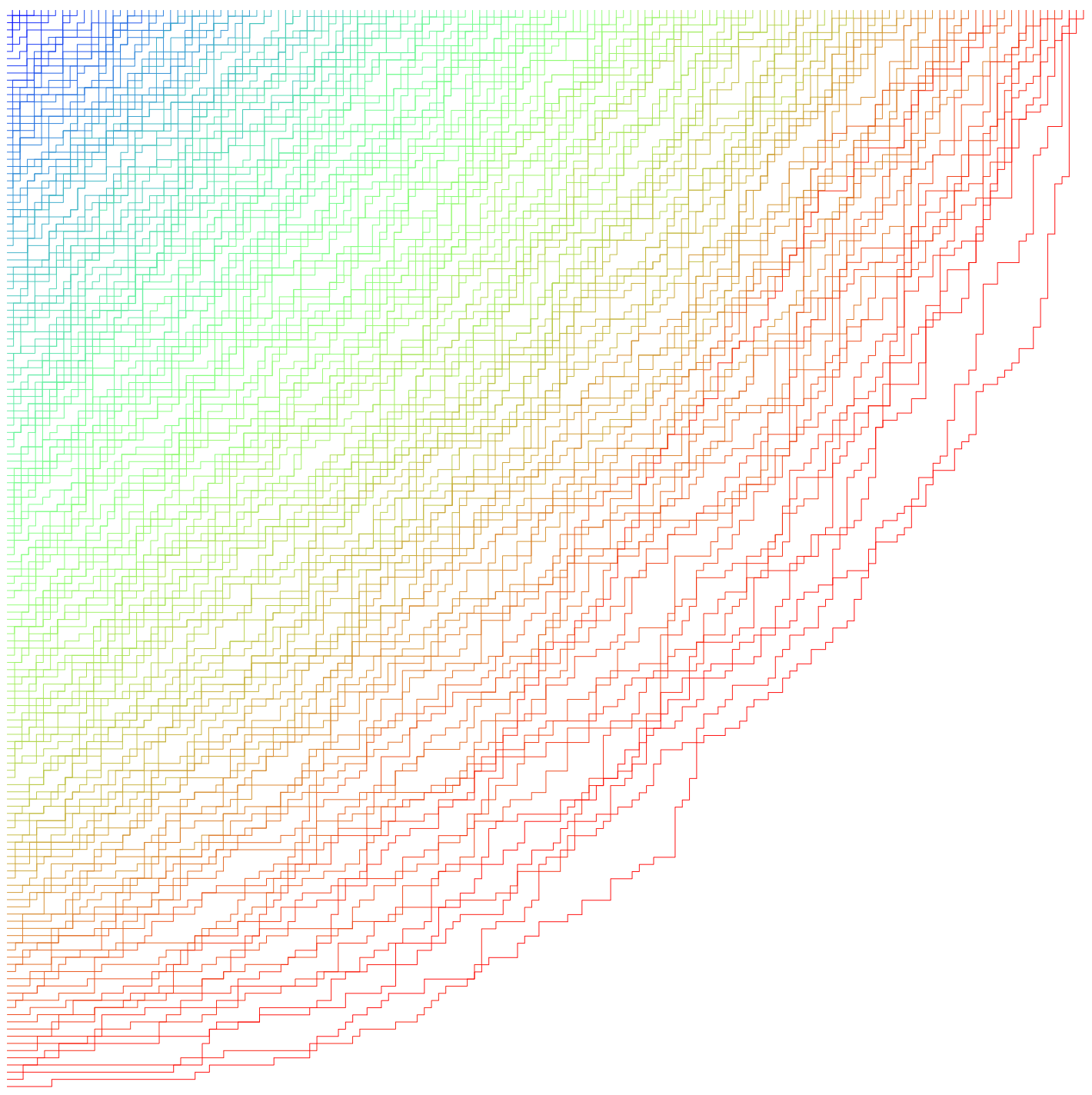}}}$
        & 
        $\vcenter{\hbox{\includegraphics[width=0.3\textwidth]{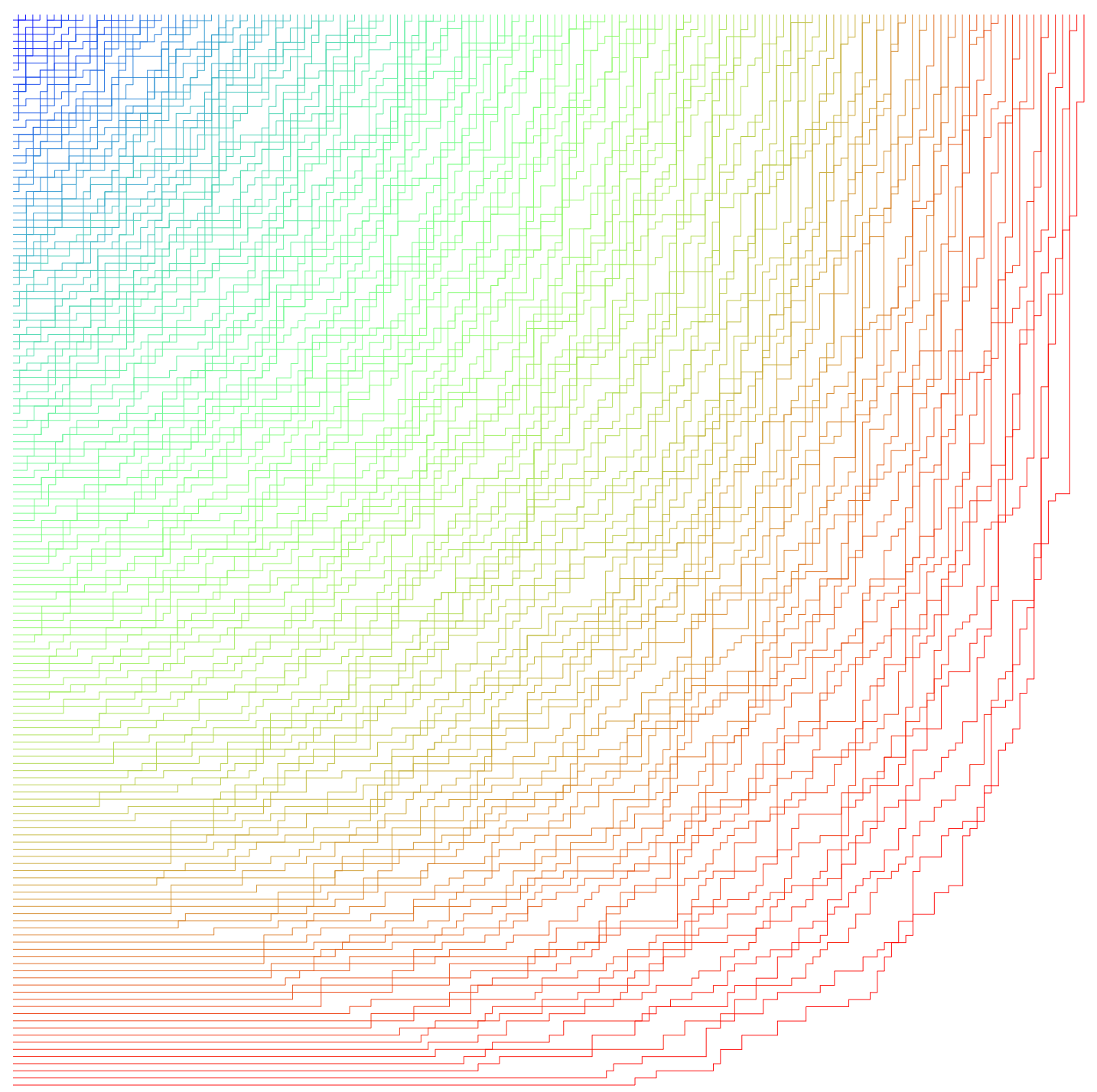}}}$
        \\
        $t=0.9$ & 
        $\vcenter{\hbox{\includegraphics[width=0.3\textwidth]{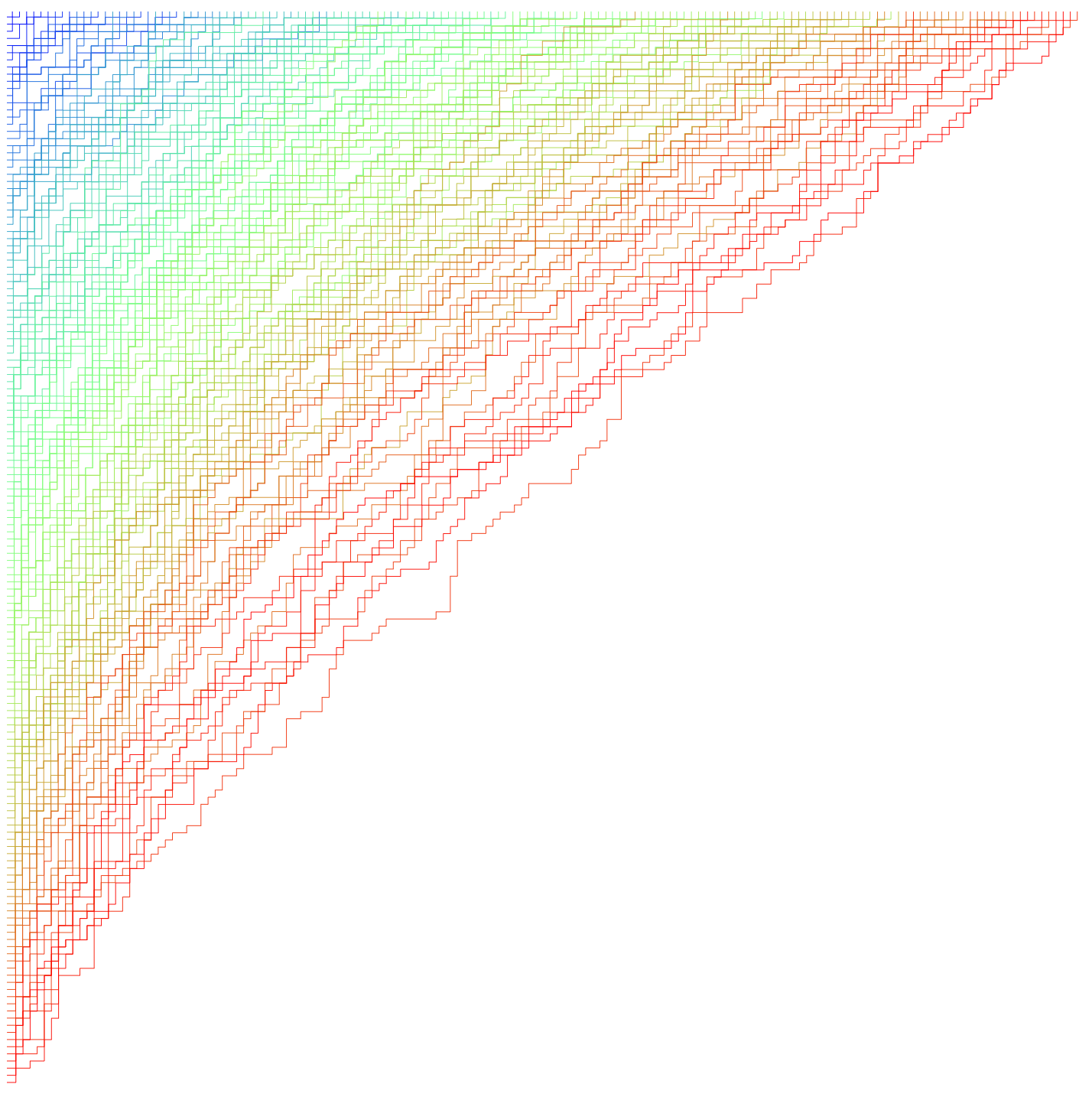}}}$
        & 
        $\vcenter{\hbox{\includegraphics[width=0.3\textwidth]{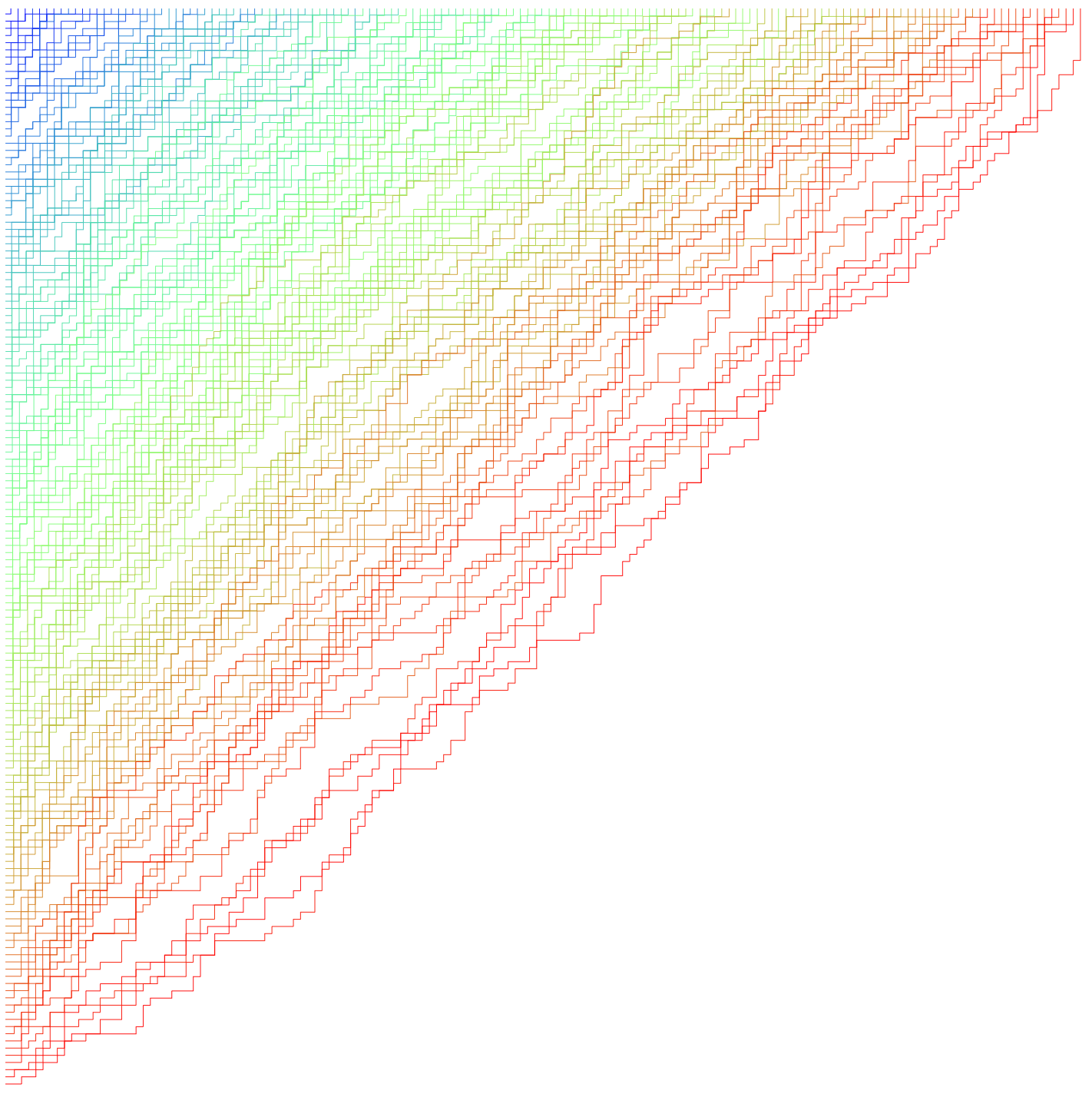}}}$
        & 
        $\vcenter{\hbox{\includegraphics[width=0.3\textwidth]{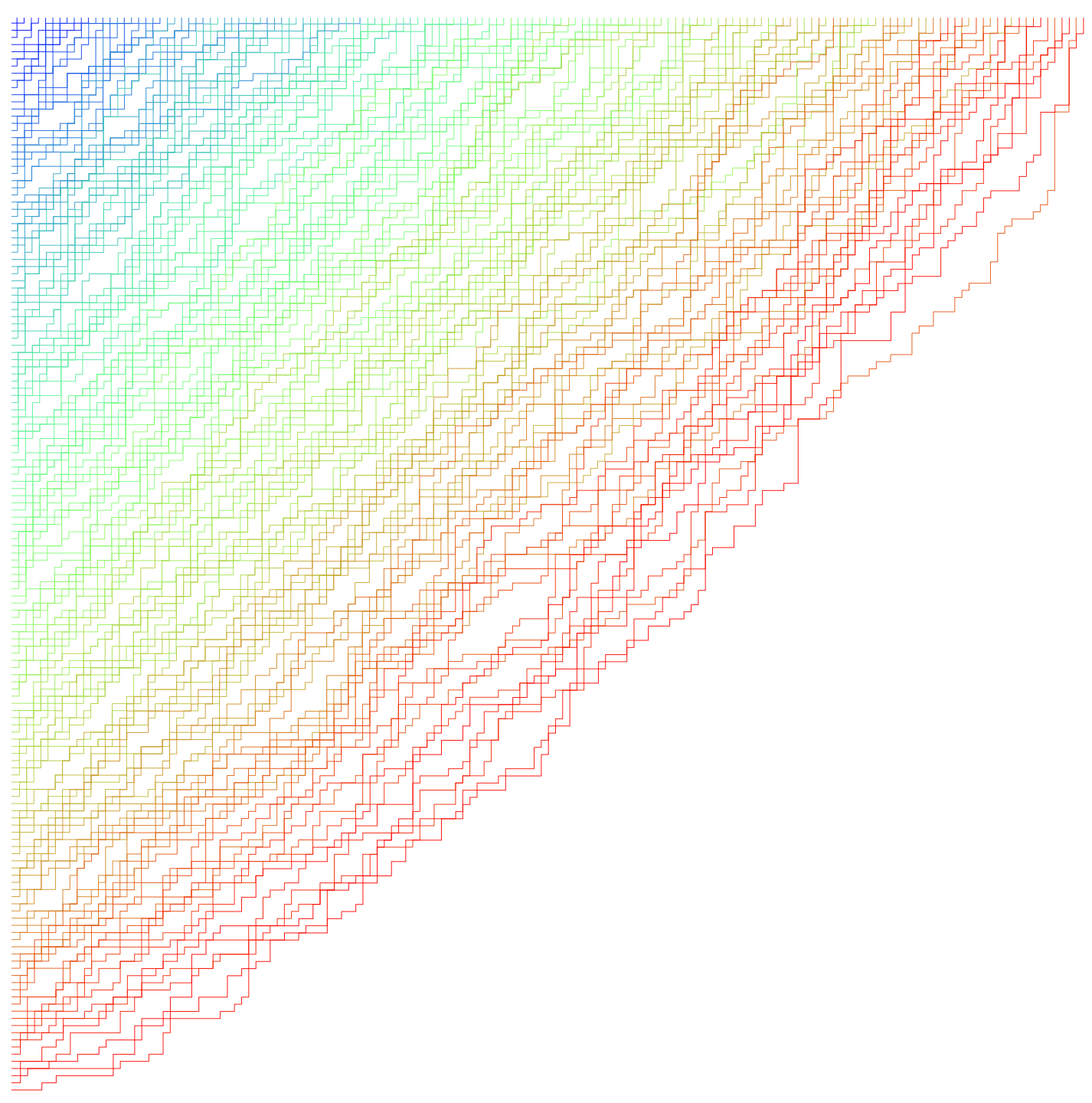}}}$
        & 
        $\vcenter{\hbox{\includegraphics[width=0.3\textwidth]{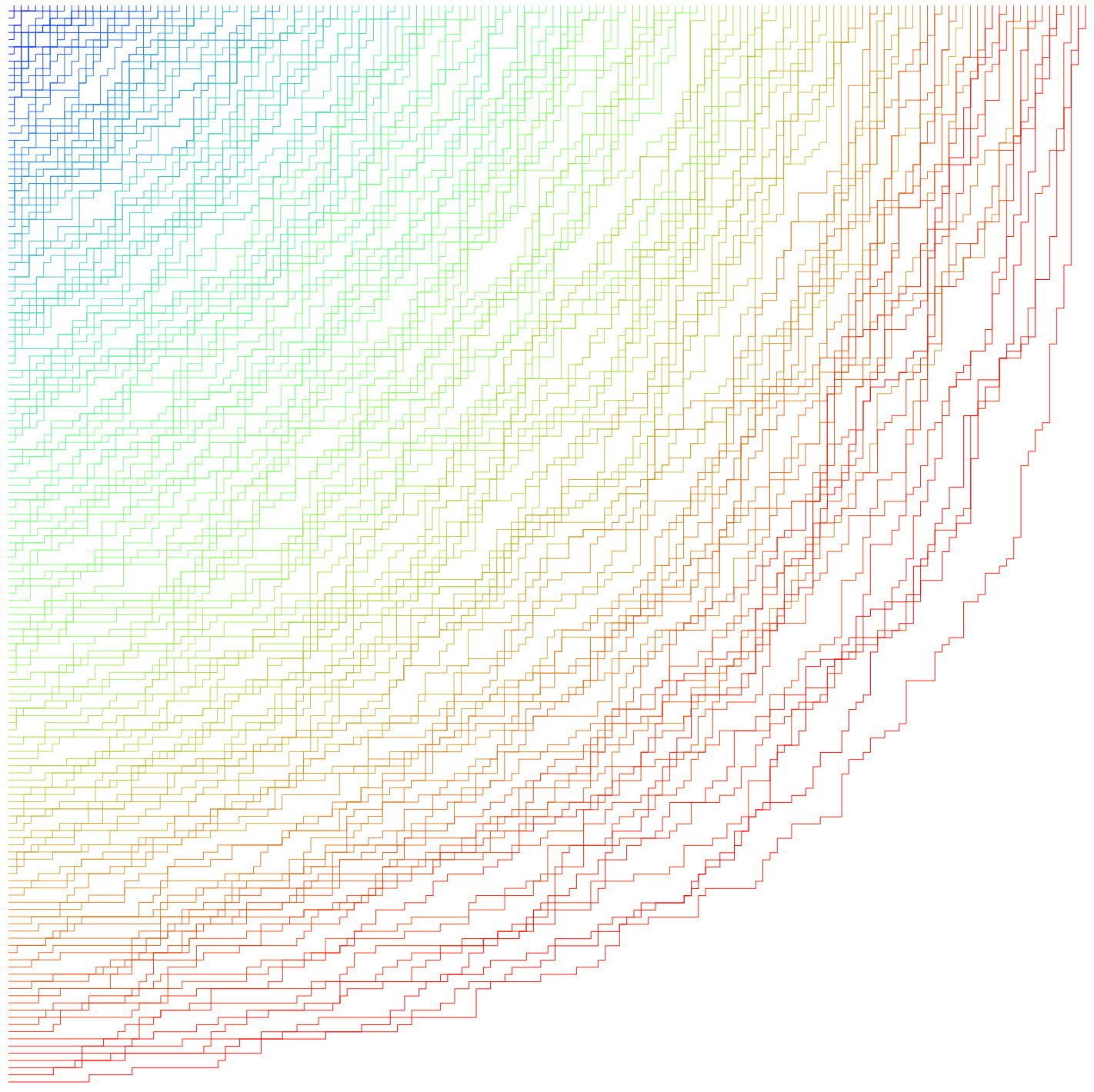}}}$
    \end{tabular}
    }
    \caption{Simulations of colored DWBC with $n=150$ and free-coloring on the top boundary for various values of $t$ and $q$. We set $x_1=\ldots=x_n=1$. When $t=0$ we superimpose the theoretically predicted arctic curve.}
    \label{fig:DWBCsim}
\end{figure}

\begin{figure}
    \centering
    \resizebox{\textwidth}{!}{
    \begin{tabular}{ccccc}
        & $q=0.2$ & $q=0.5$ & $q=1.0$ & $q=5$ \\
        $t=0$ & 
        $\vcenter{\hbox{\includegraphics[width=0.3\textwidth]{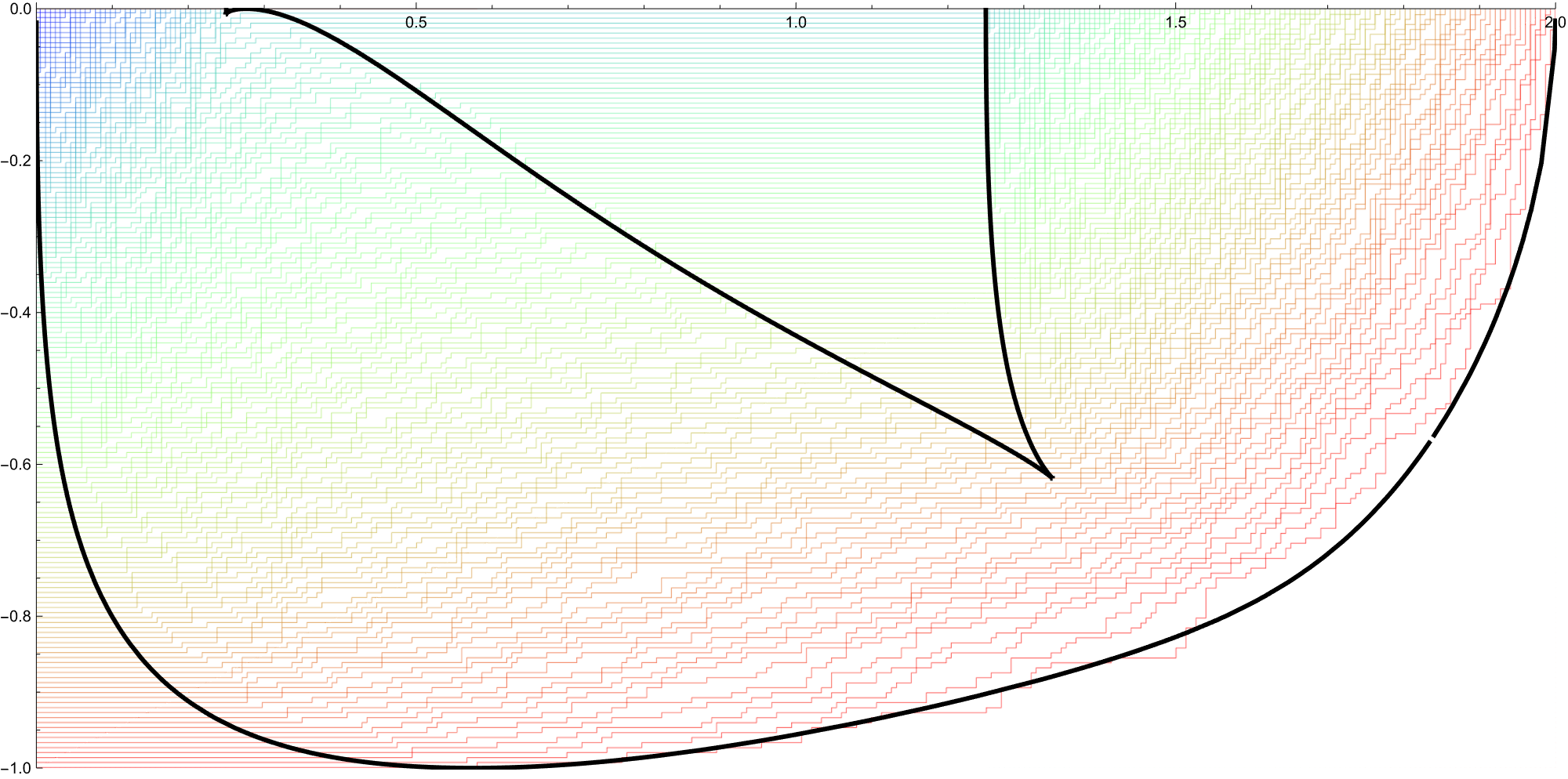}}}$
        & 
        $\vcenter{\hbox{\includegraphics[width=0.3\textwidth]{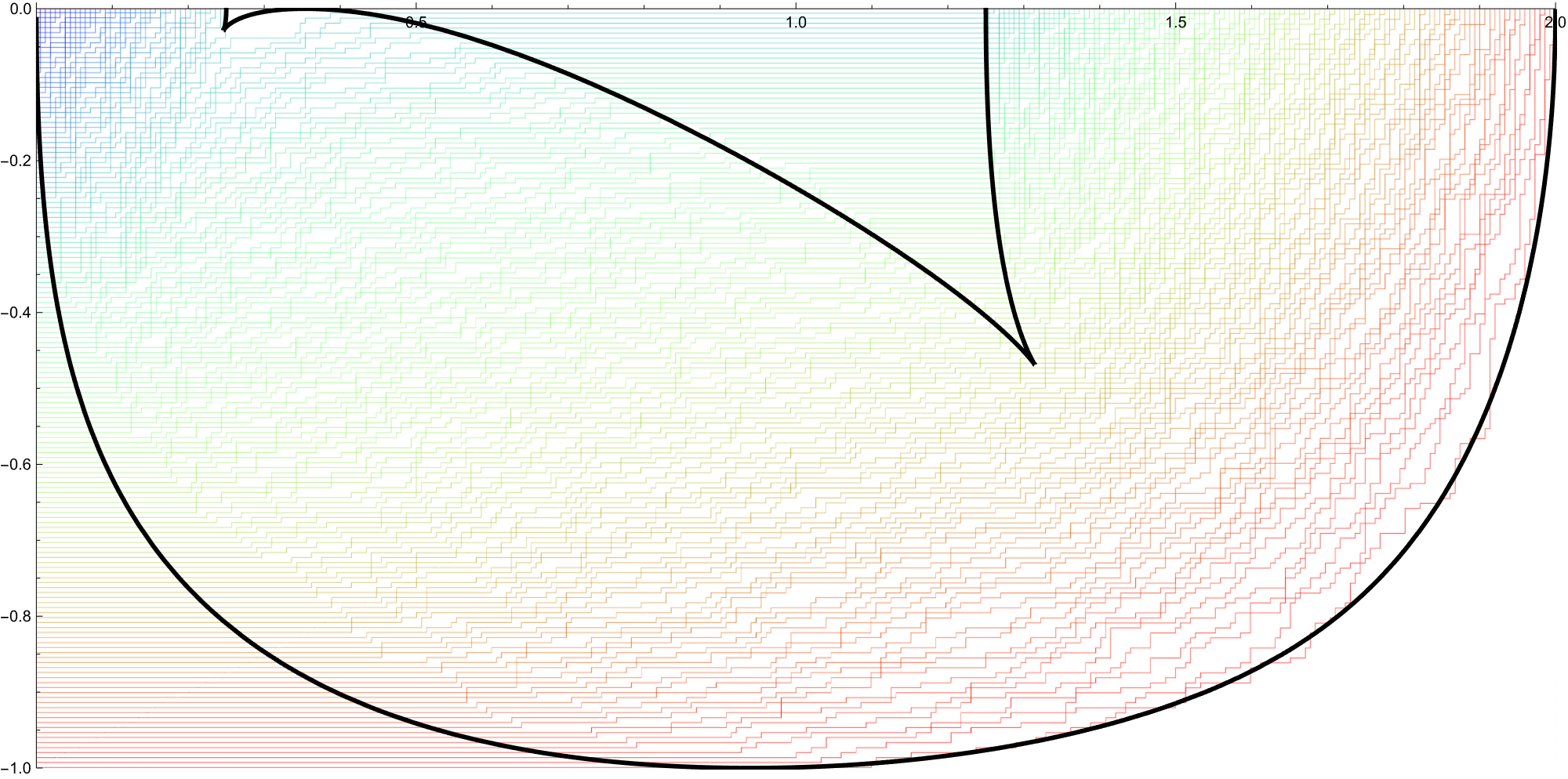}}}$
        & 
        $\vcenter{\hbox{\includegraphics[width=0.3\textwidth]{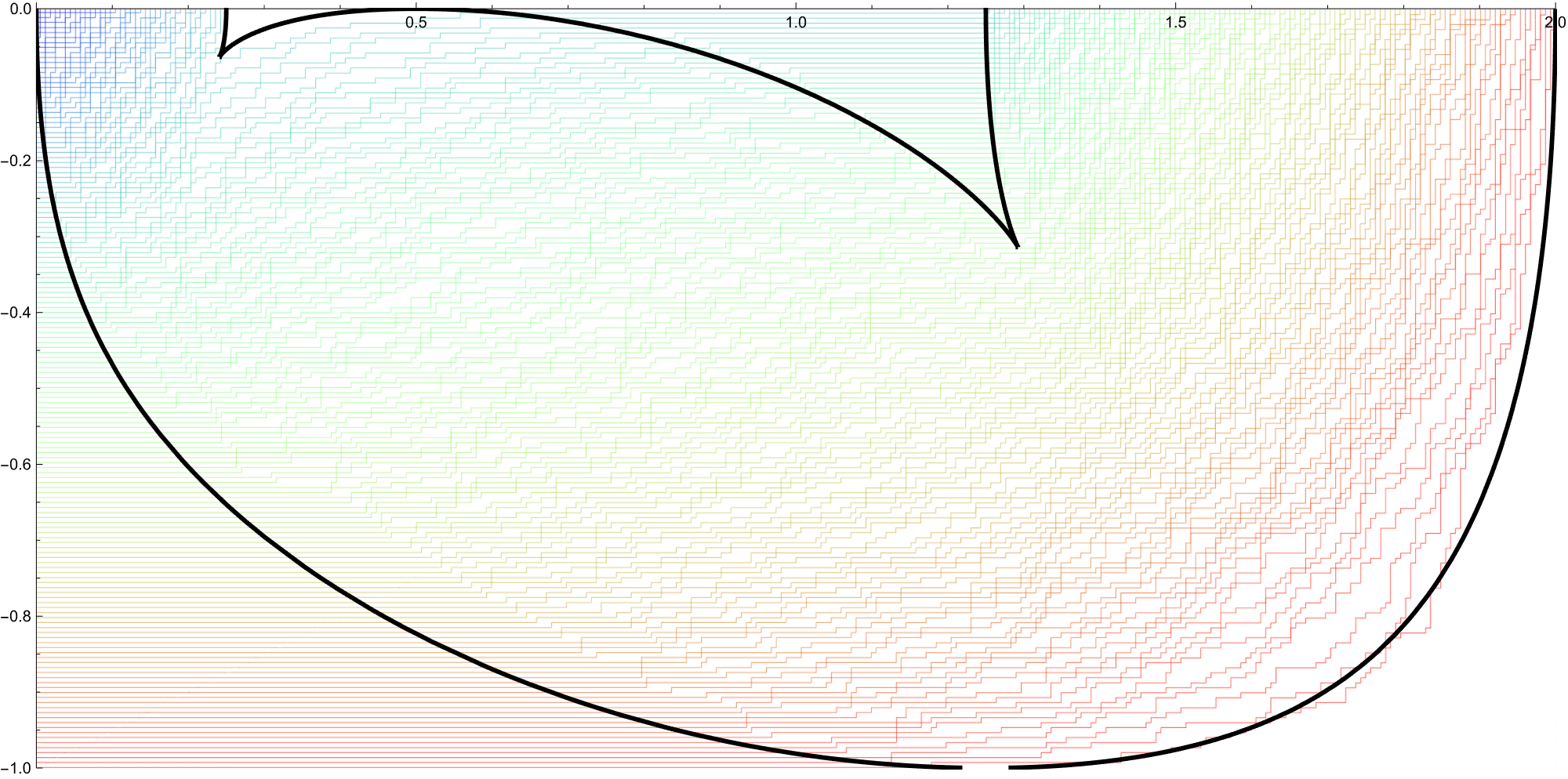}}}$
        & 
        $\vcenter{\hbox{\includegraphics[width=0.3\textwidth]{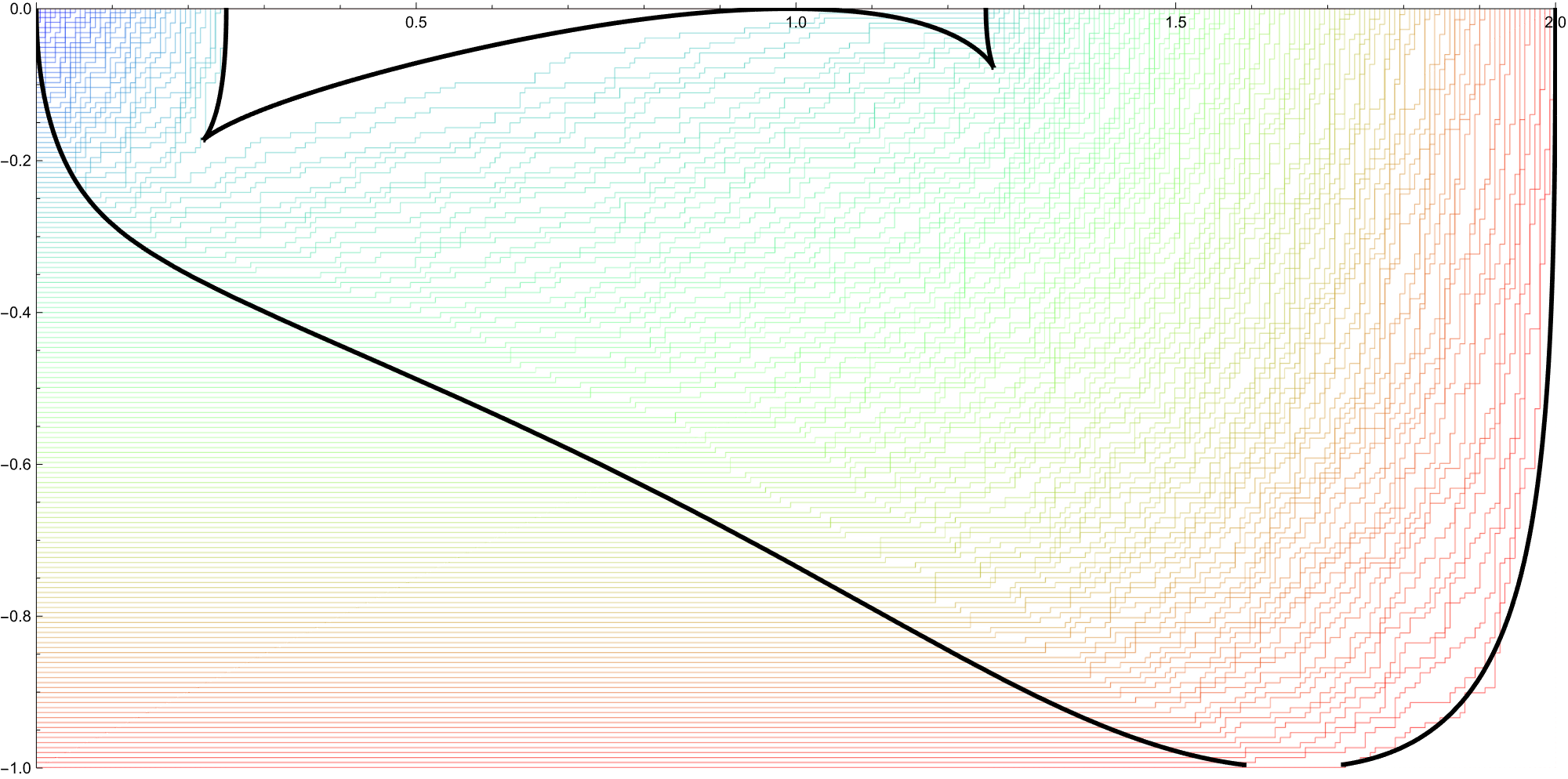}}}$
        \\
        $t=0.3$ & 
        $\vcenter{\hbox{\includegraphics[width=0.3\textwidth]{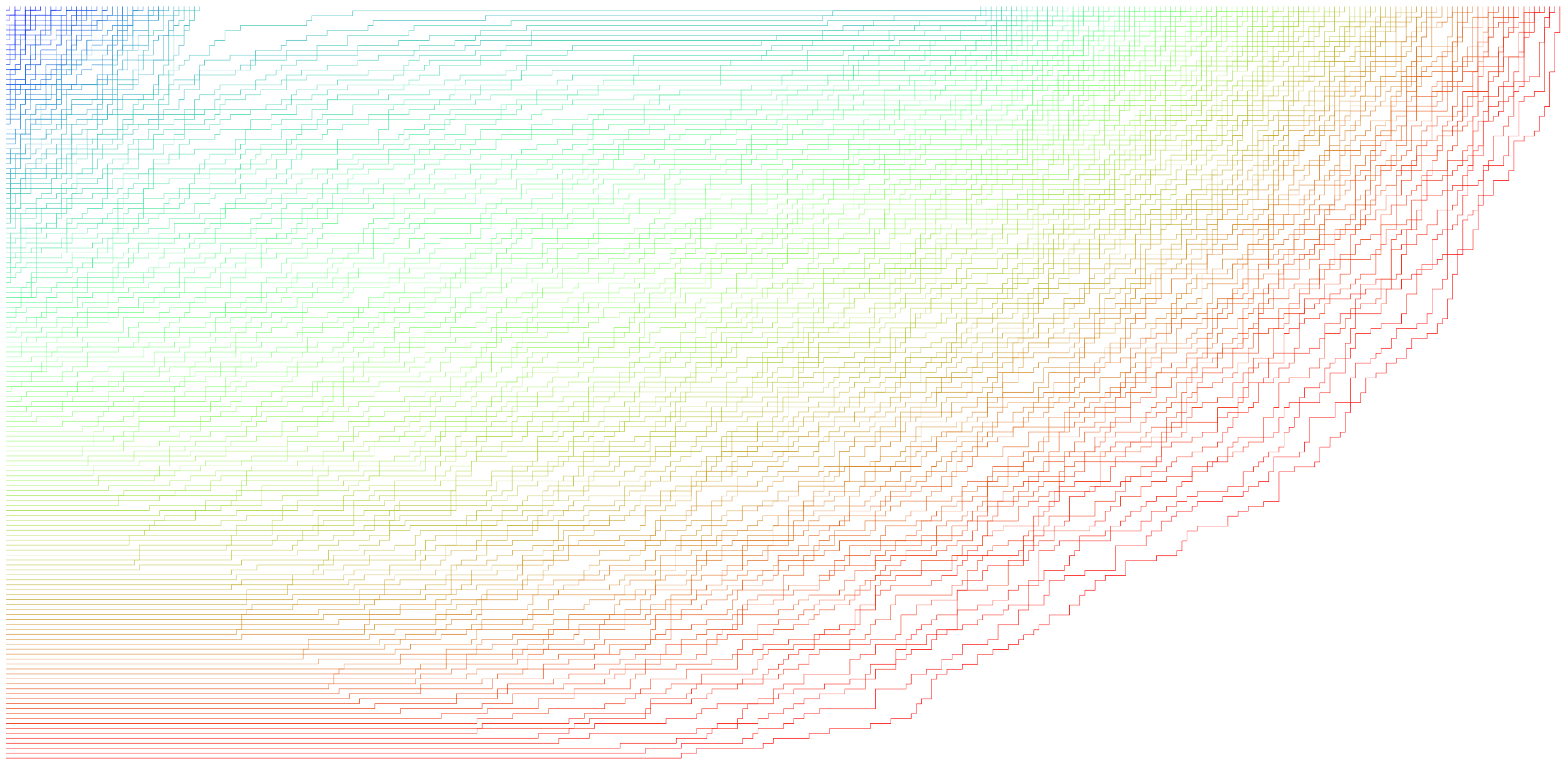}}}$
        & 
        $\vcenter{\hbox{\includegraphics[width=0.3\textwidth]{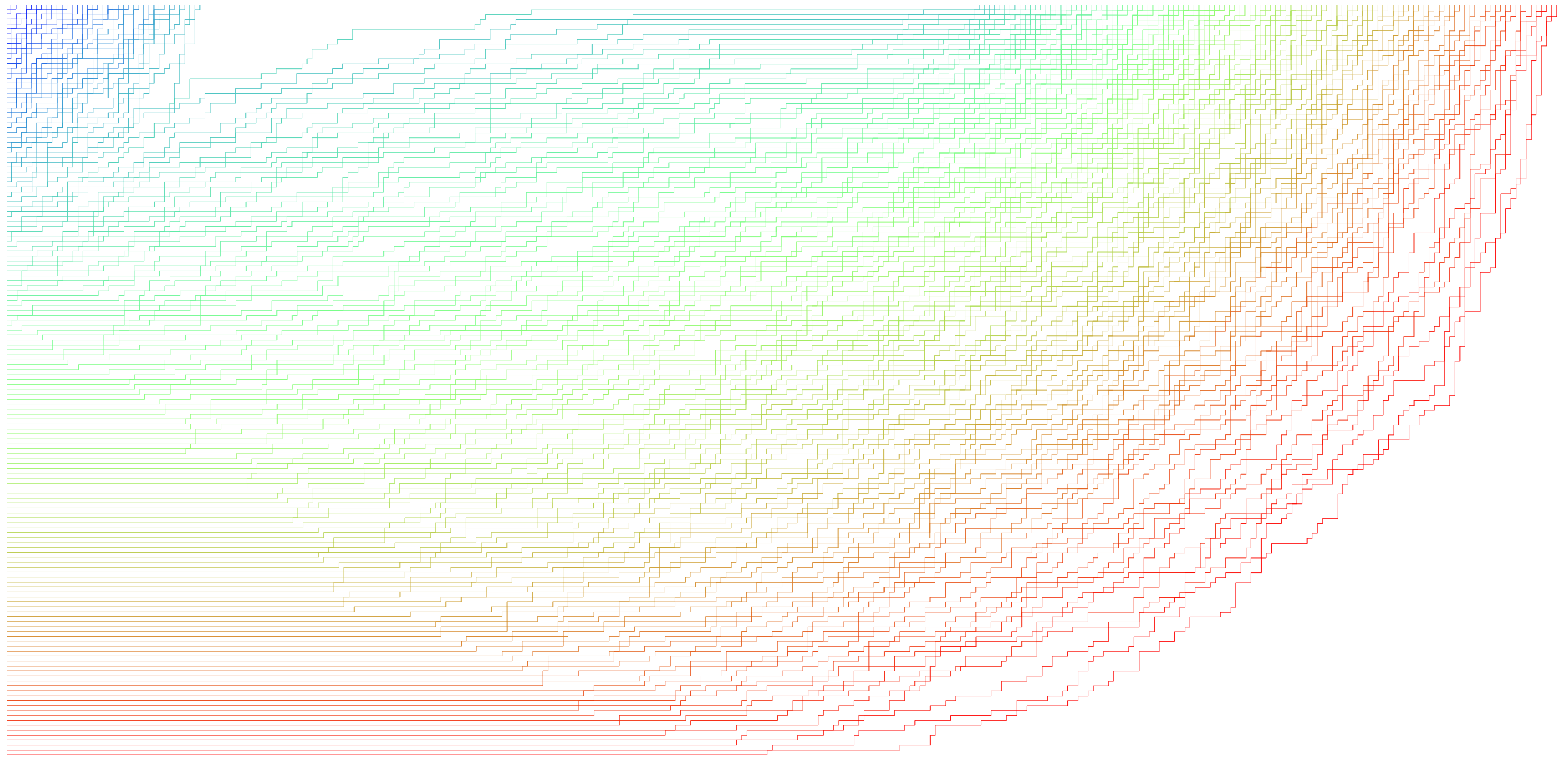}}}$
        & 
        $\vcenter{\hbox{\includegraphics[width=0.3\textwidth]{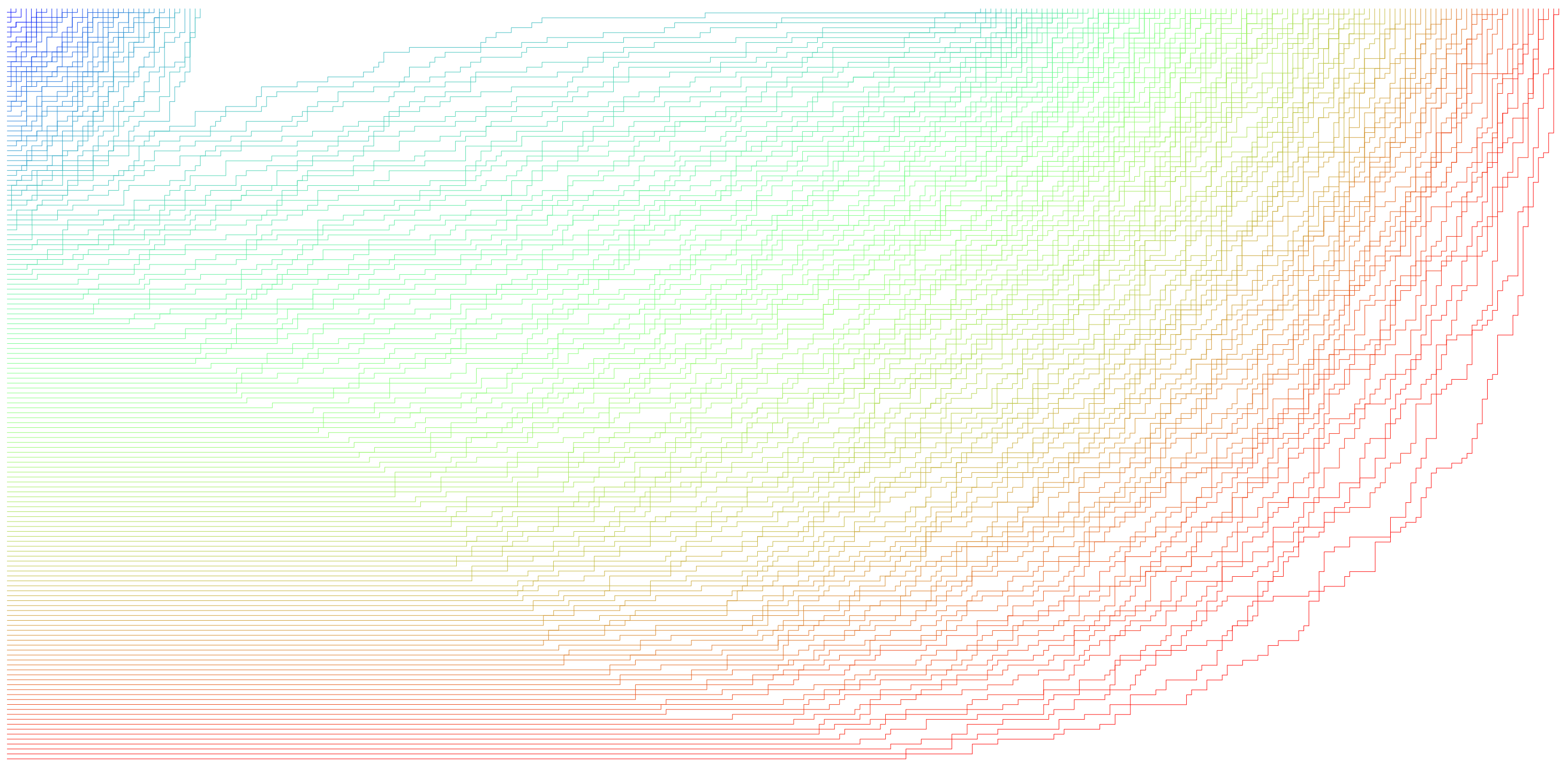}}}$
        & 
        $\vcenter{\hbox{\includegraphics[width=0.3\textwidth]{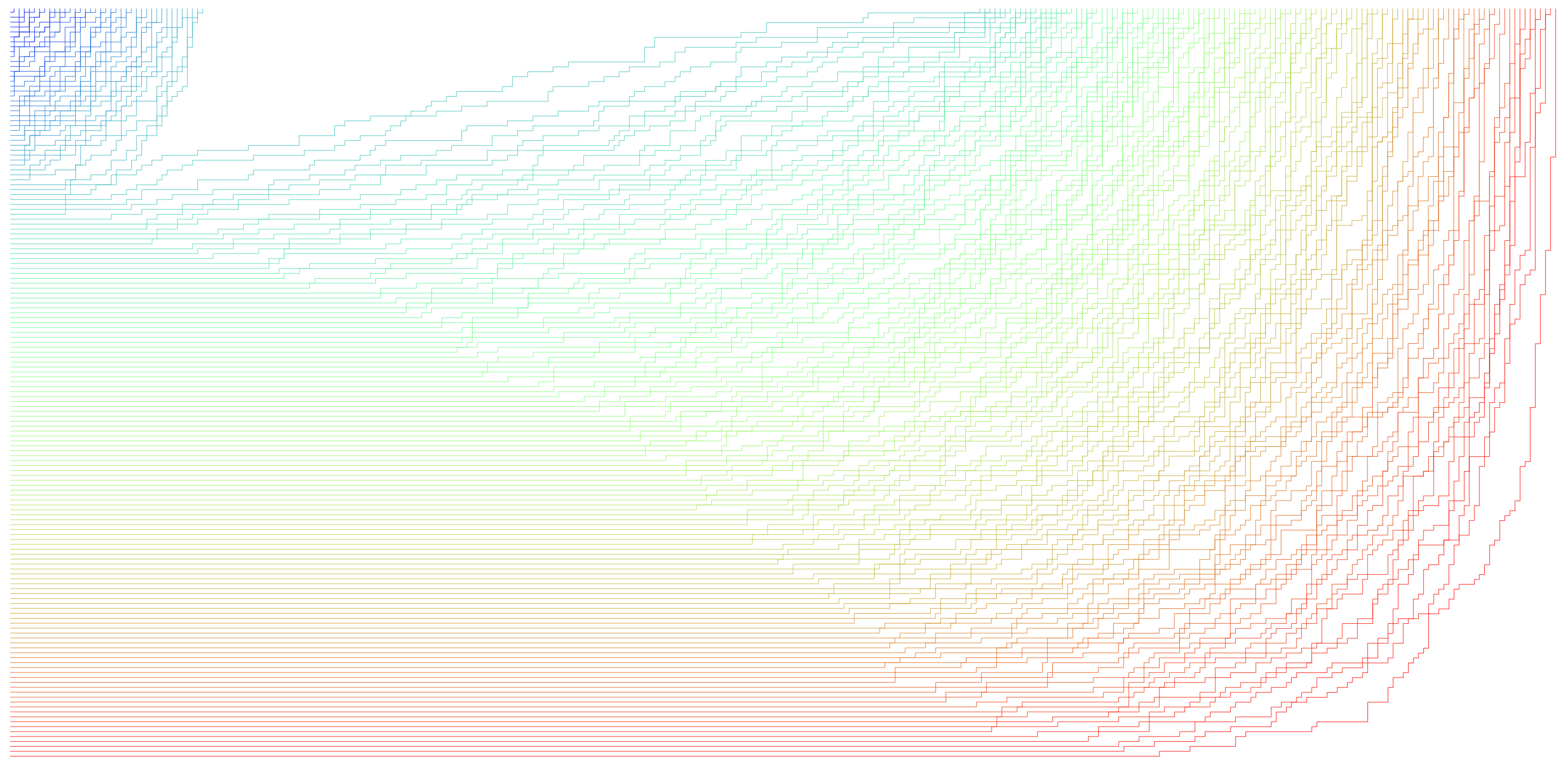}}}$
        \\
        $t=0.6$ & 
        $\vcenter{\hbox{\includegraphics[width=0.3\textwidth]{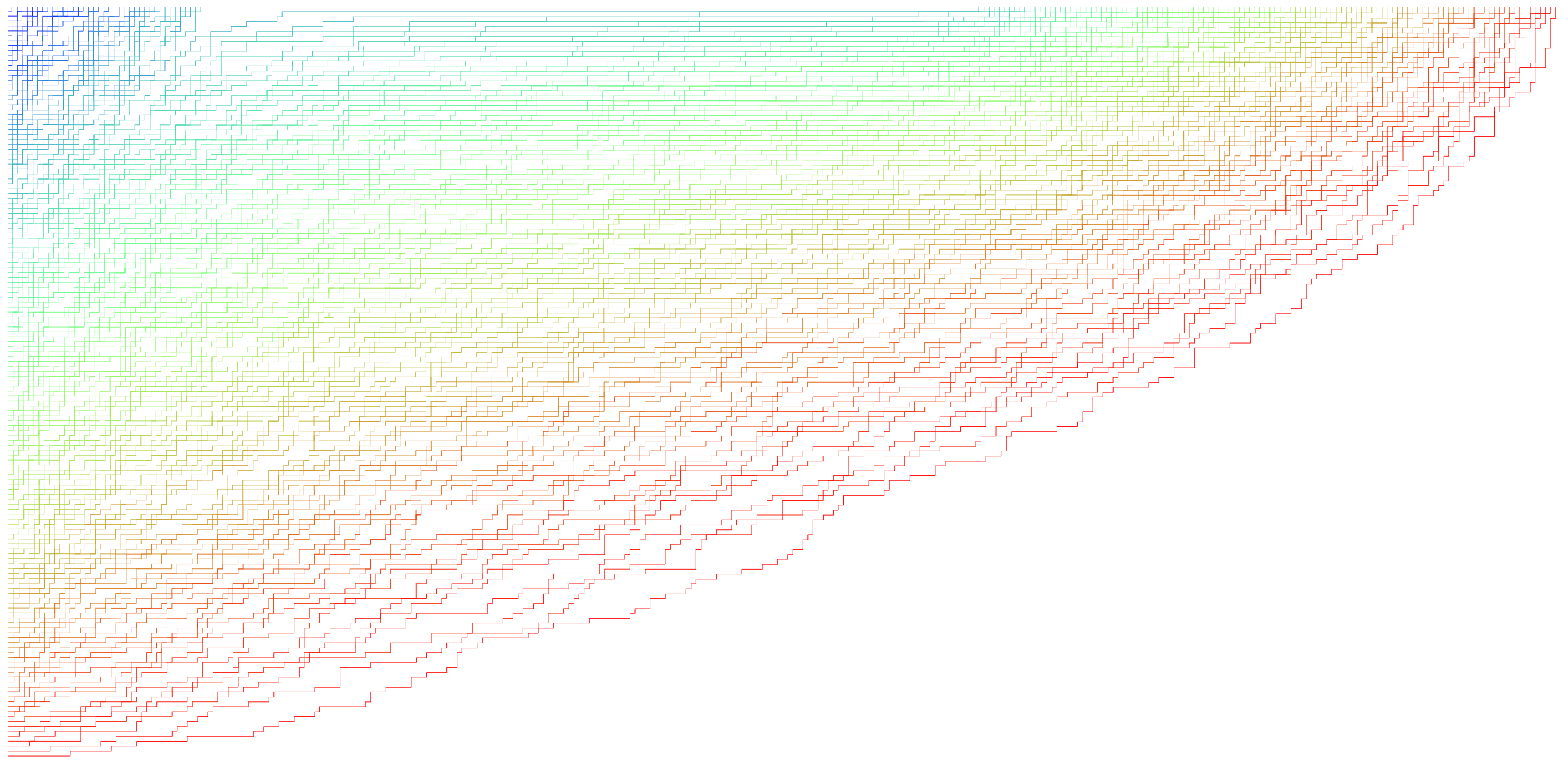}}}$
        & 
        $\vcenter{\hbox{\includegraphics[width=0.3\textwidth]{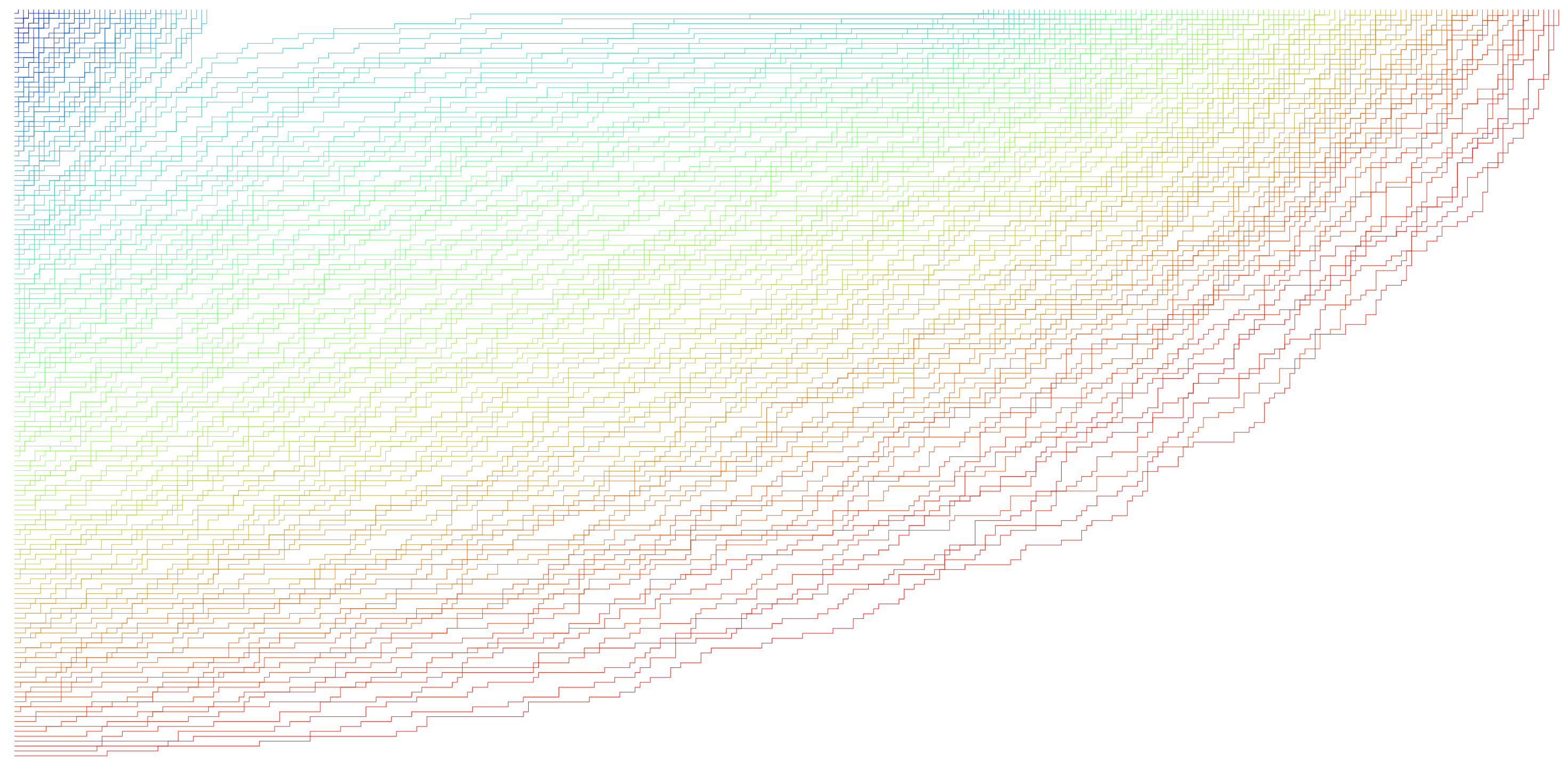}}}$
        & 
        $\vcenter{\hbox{\includegraphics[width=0.3\textwidth]{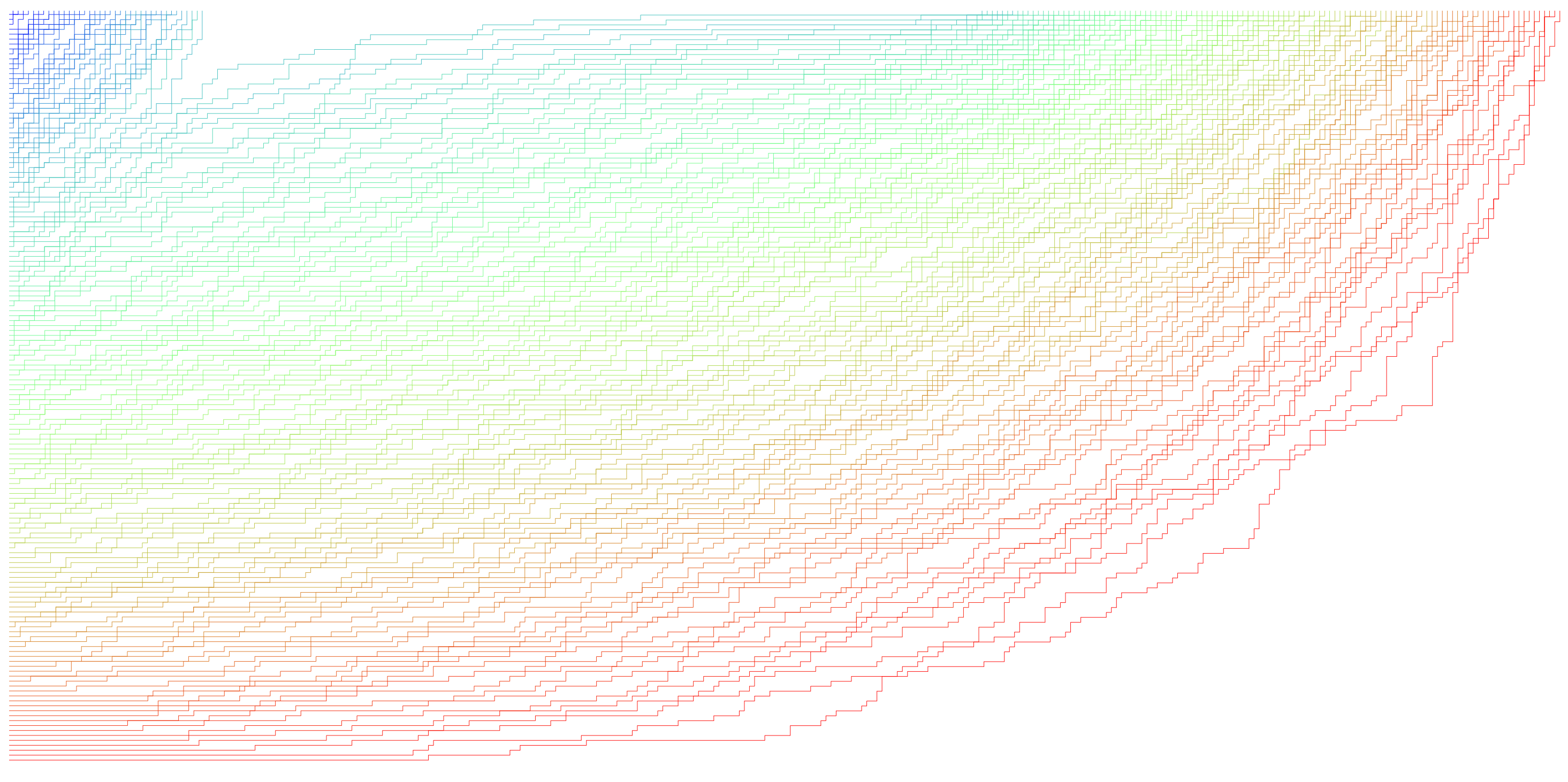}}}$
        & 
        $\vcenter{\hbox{\includegraphics[width=0.3\textwidth]{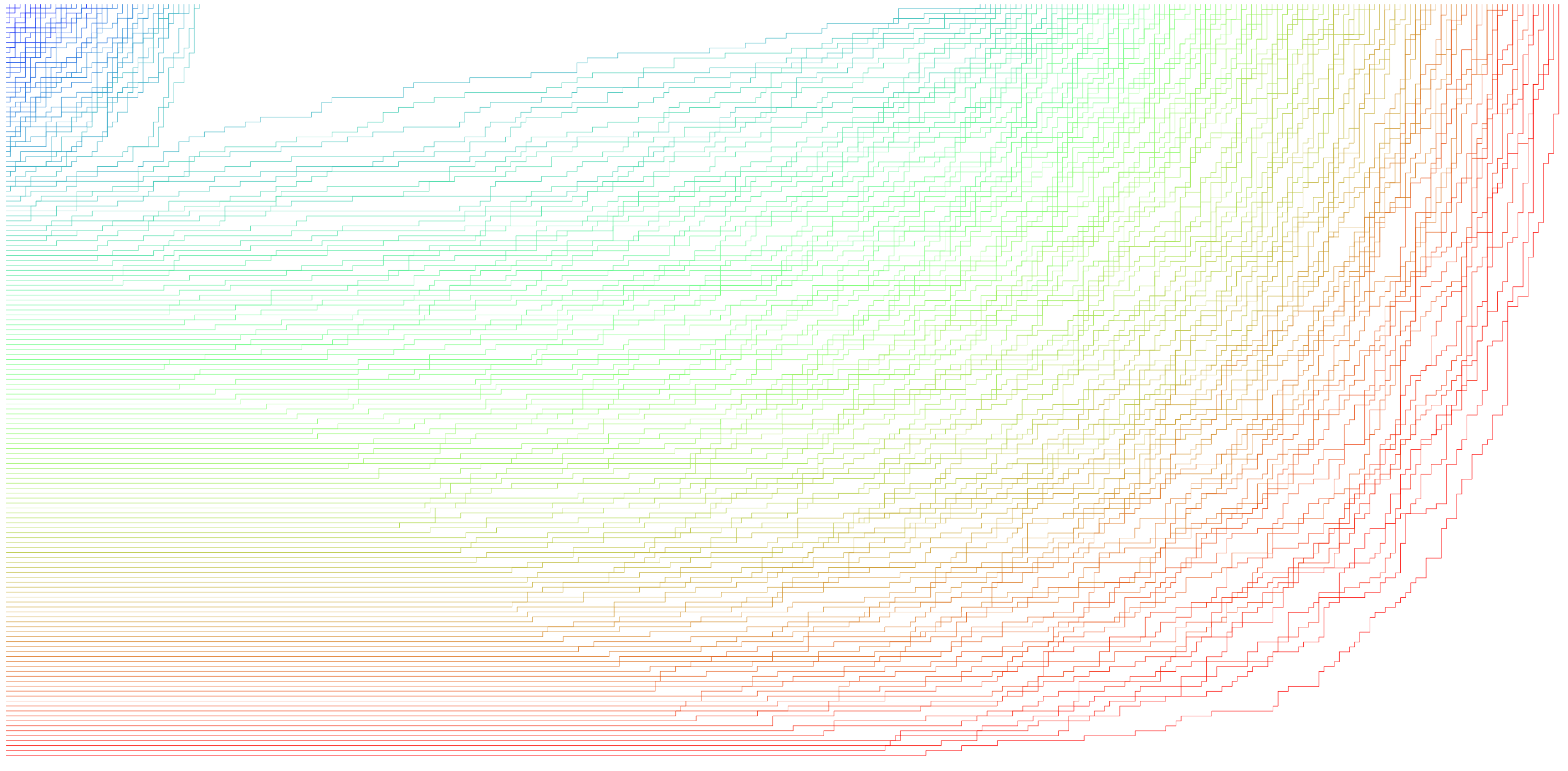}}}$
        \\
        $t=0.9$ & 
        $\vcenter{\hbox{\includegraphics[width=0.3\textwidth]{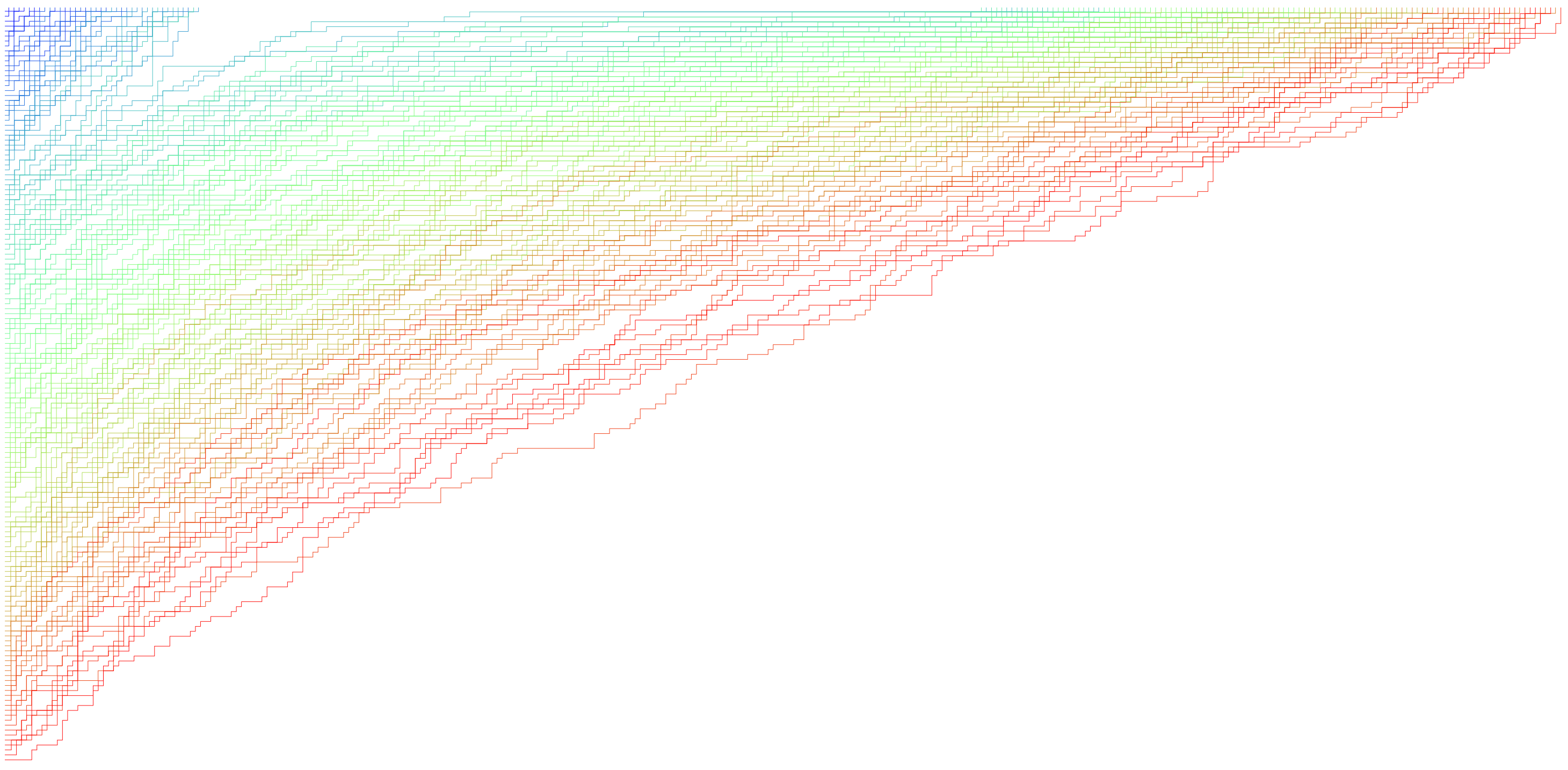}}}$
        & 
        $\vcenter{\hbox{\includegraphics[width=0.3\textwidth]{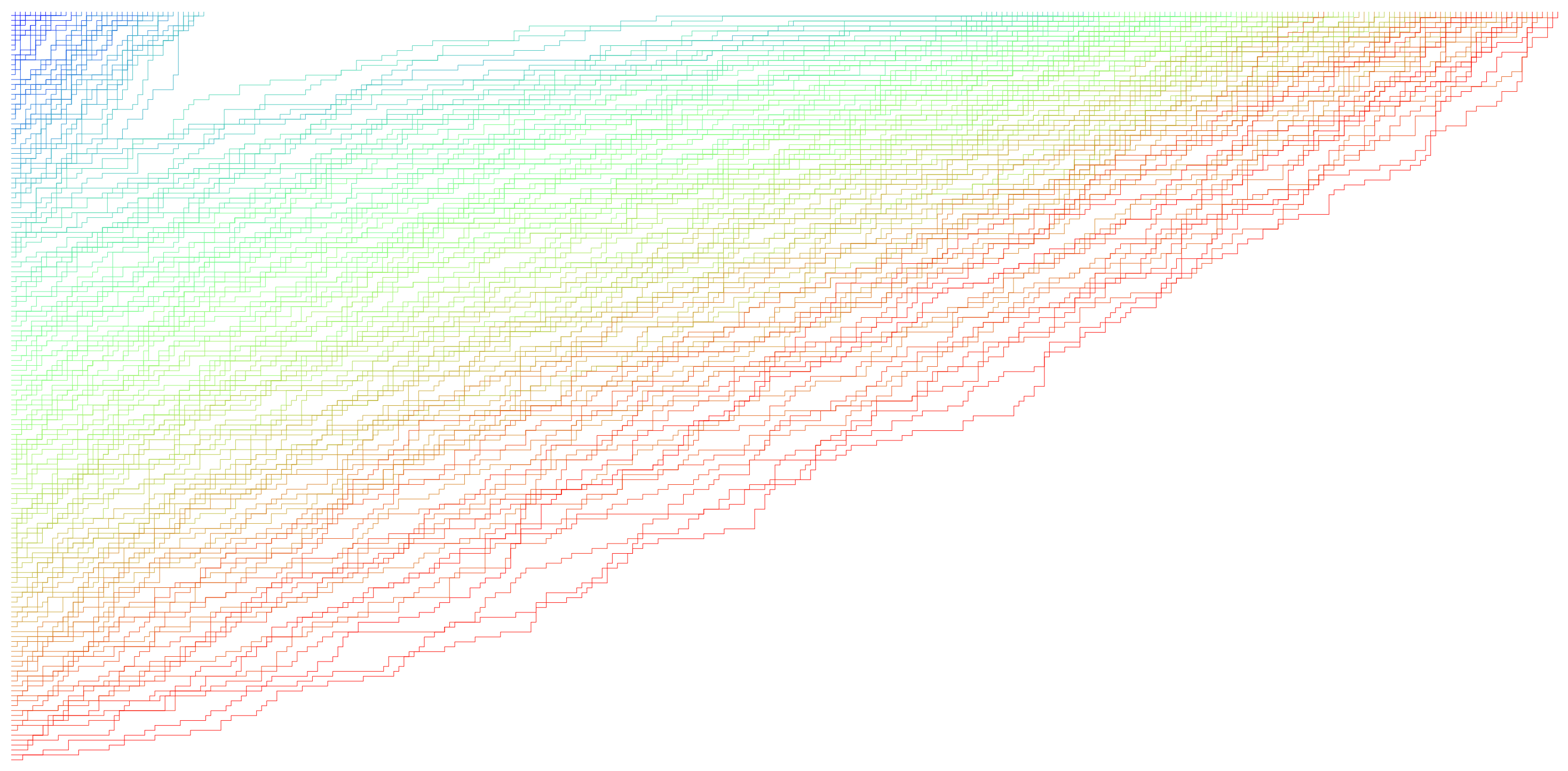}}}$
        & 
        $\vcenter{\hbox{\includegraphics[width=0.3\textwidth]{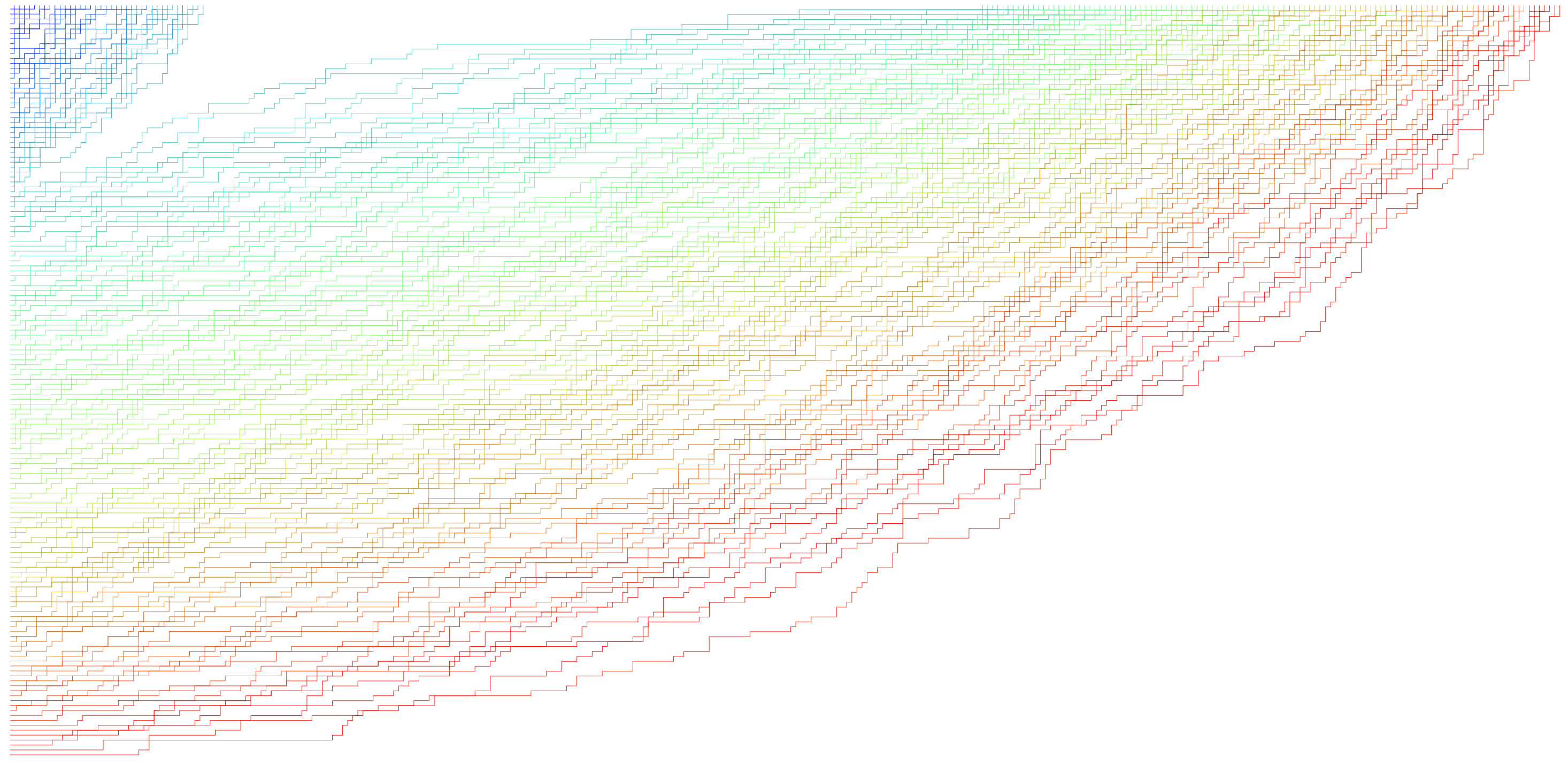}}}$
        & 
        $\vcenter{\hbox{\includegraphics[width=0.3\textwidth]{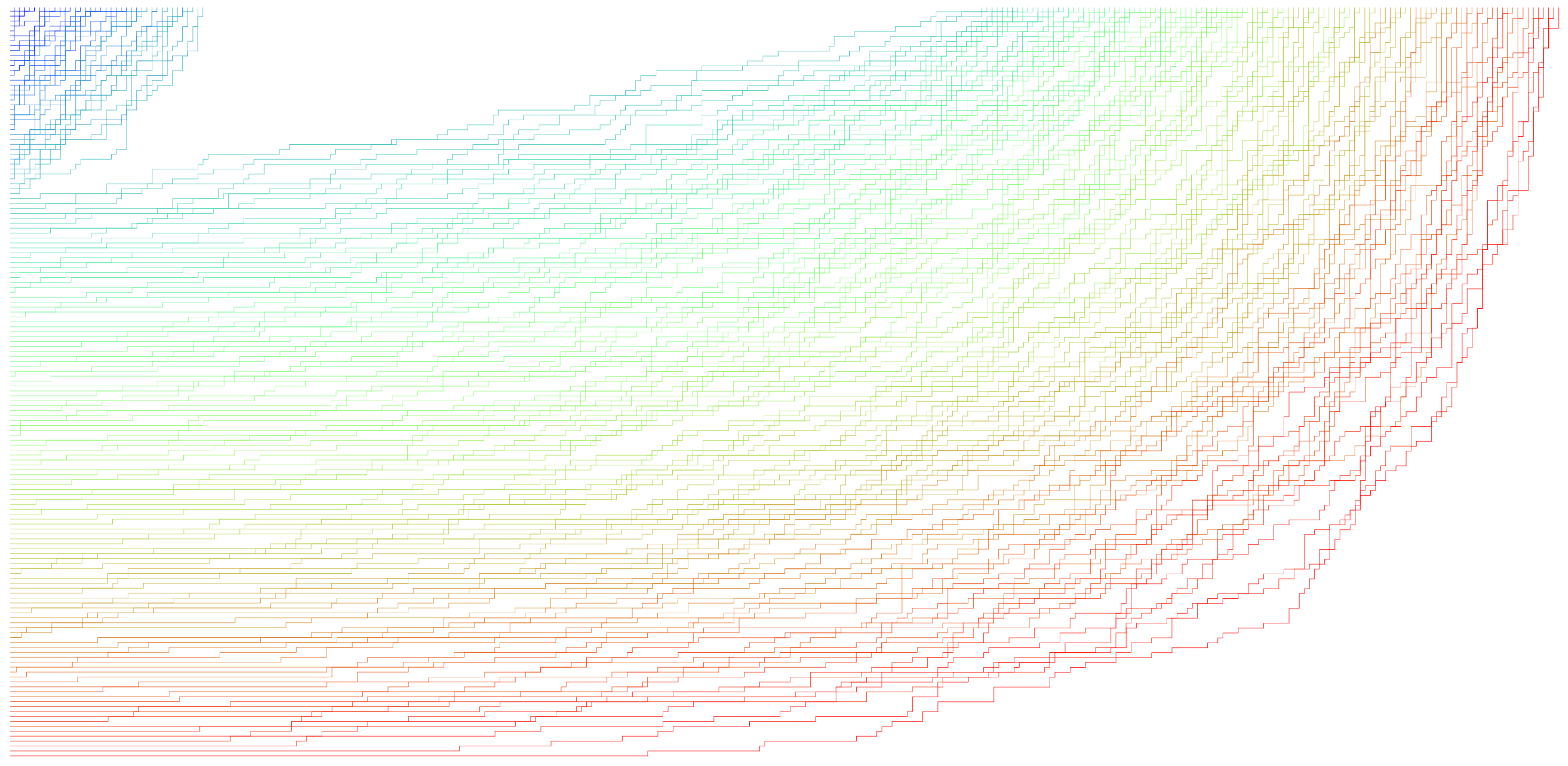}}}$
    \end{tabular}
    }
    \caption{Simulations where the boundary contains a single gap with $n=150$ and free-coloring on the top boundary for various values of $t$ and $q$. We set $x_1=\ldots=x_n=1$. The gap corresponds to $\kappa=\frac{1}{4}$ and $\mu=1$ as in Section \ref{sec:onegap}. When $t=0$ we superimpose the theoretically predicted arctic curve.}
    \label{fig:Gapsim}
\end{figure}

\begin{figure}
    \centering
    \resizebox{\textwidth}{!}{
    \begin{tabular}{ccc}
         $\vcenter{\hbox{\includegraphics[height=0.3\textwidth]{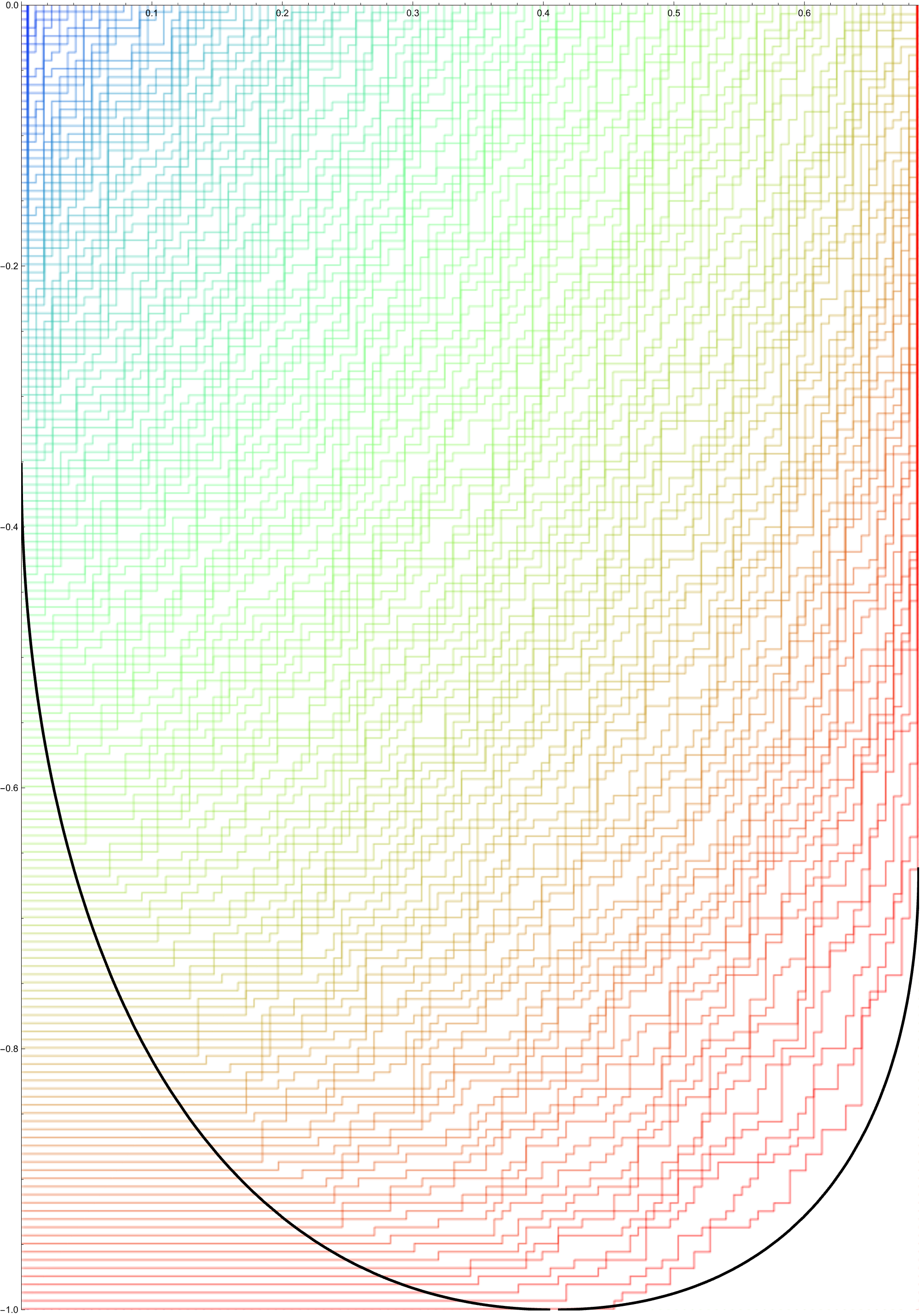}}}$
         &
         $\vcenter{\hbox{\includegraphics[height=0.3\textwidth]{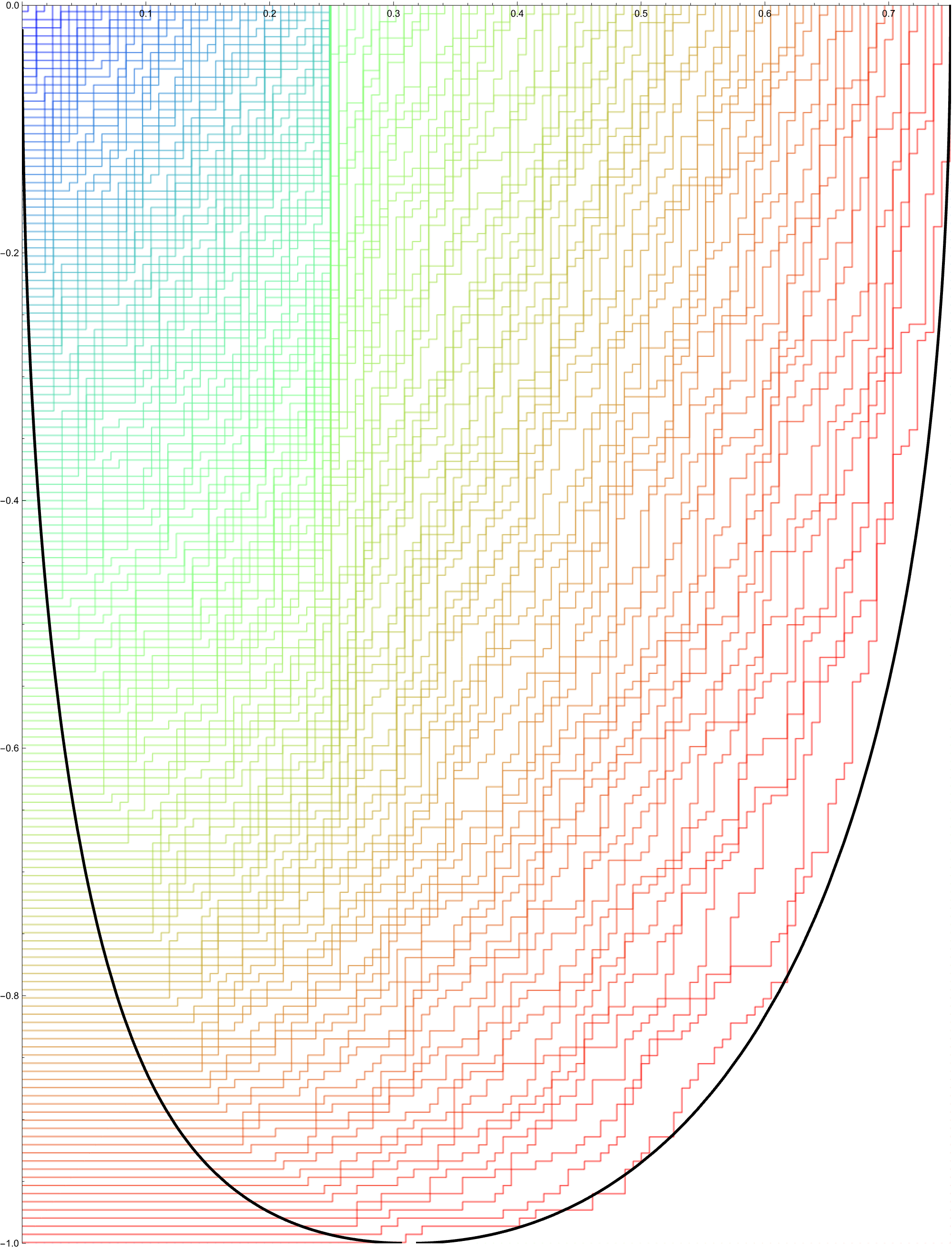}}}$
         &
         $\vcenter{\hbox{\includegraphics[height=0.3\textwidth]{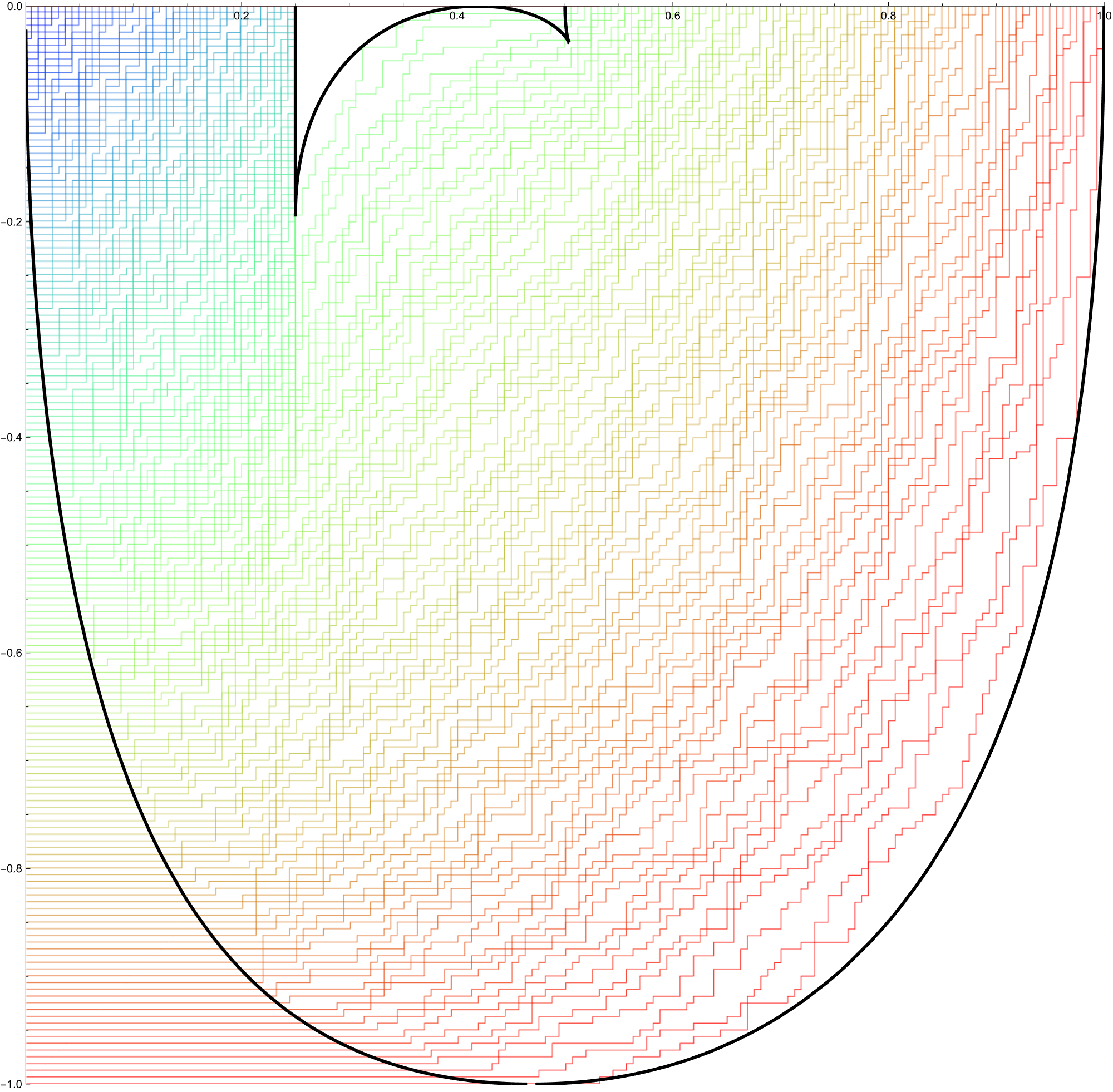}}}$
    \end{tabular}
    
    }
    \caption{Examples of boundary conditions with frozen regions. Left: Boundary conditions with $\frac{1}{16}$ of the paths exiting in the leftmost column, $\frac{1}{4}$ exiting in the rightmost column, and single path exiting at every other column. Center: Boundary conditions as in Section \ref{sec:oneclump}. Right: Boundary conditions as in Section \ref{sec:clumpgap}. In all cases we have $t=q=x_1=\ldots=x_n=1$ and have superimposed the theoretical arctic curve.}
    \label{fig:Clumpsim}
\end{figure}

\begin{figure}
    \centering
    \begin{tabular}{cc}
    $\vcenter{\hbox{\includegraphics[height=0.3\textwidth]{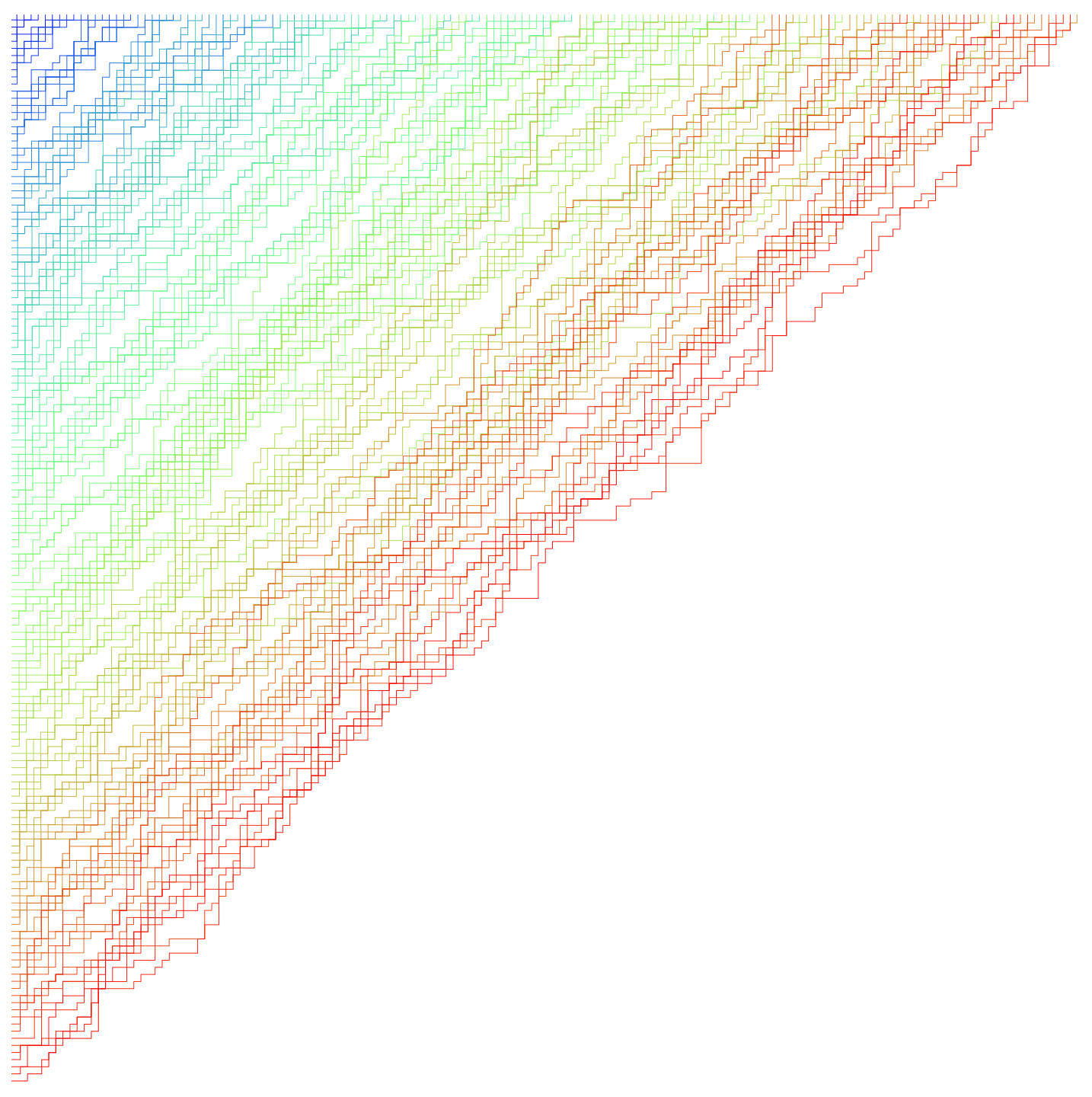}}}$ 
    & 
    $\vcenter{\hbox{\includegraphics[height=0.3\textwidth]{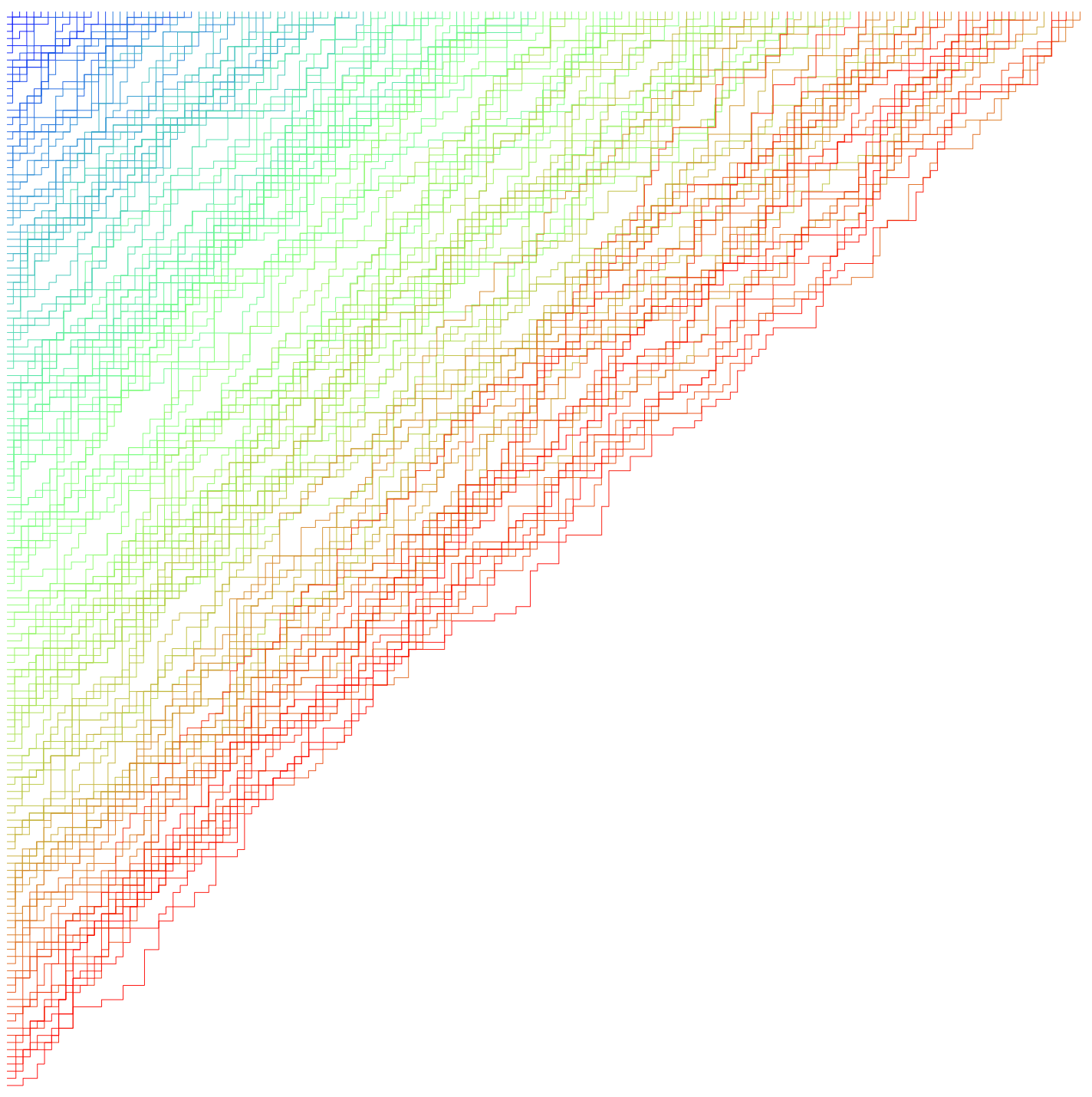}}}$ 
    \\
    $\tau=1$ & $\tau=5$ 
    \\ \\
    $\vcenter{\hbox{\includegraphics[height=0.3\textwidth]{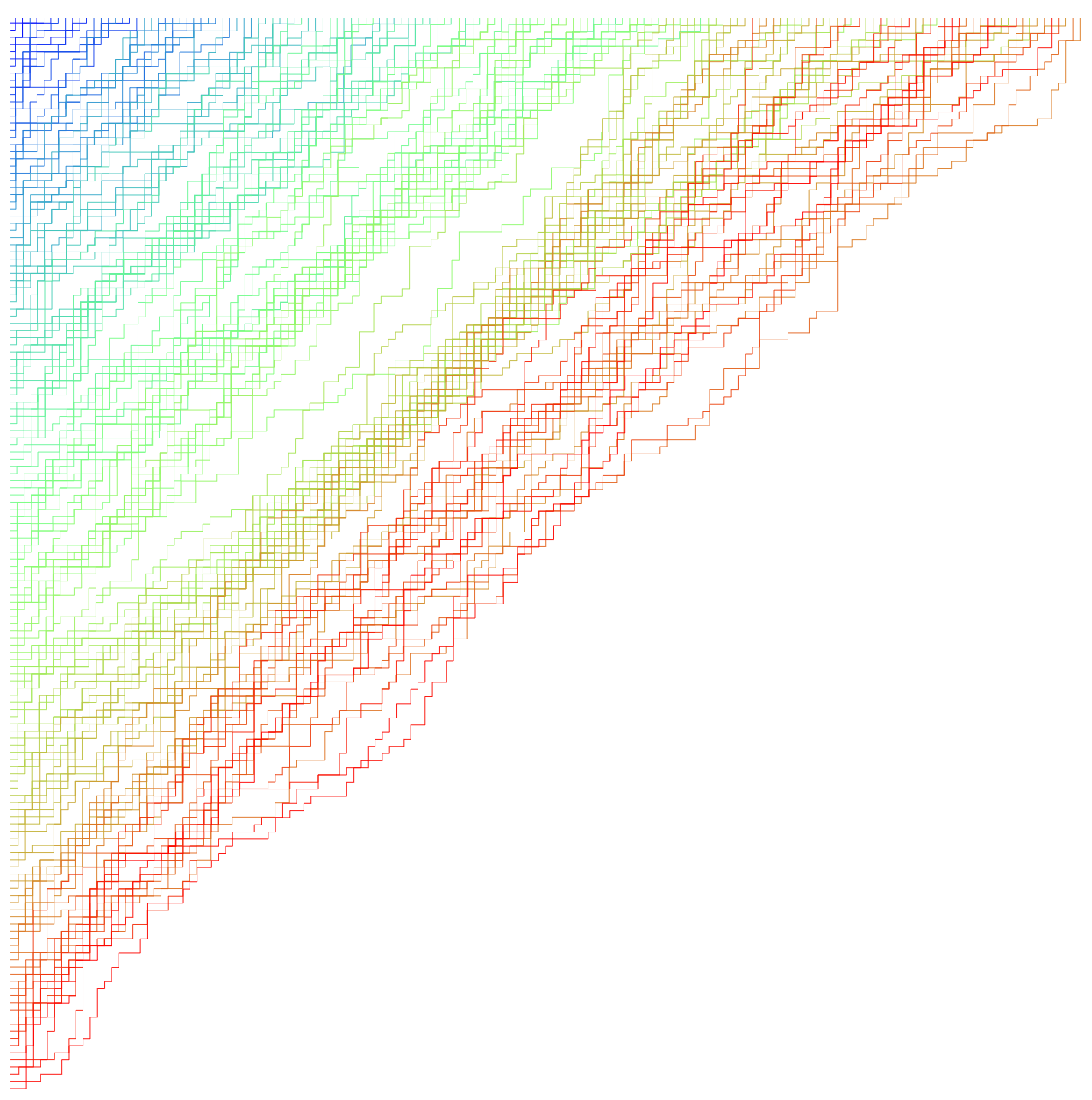}}}$ 
    &
    $\vcenter{\hbox{\includegraphics[height=0.3\textwidth]{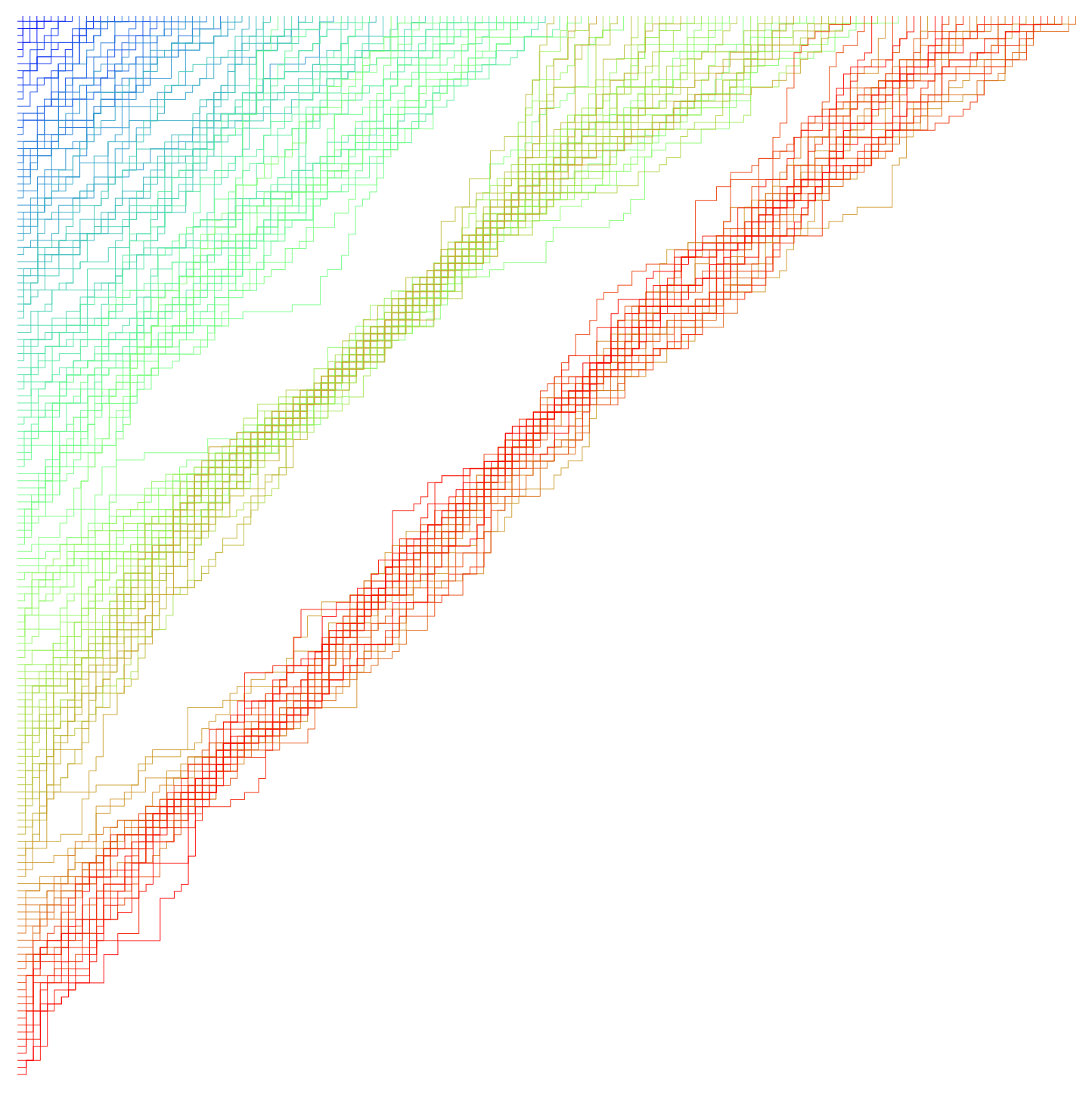}}}$ 
    \\
    $\tau=10$ & $\tau=20$ 
    \end{tabular}
    \caption{Simulations with colored DWBC with $n=150$ and free-coloring on the top boundary. The $t$-weight is given by $t=\tau^{1/n}$. We set $q=x_1=\ldots=x_n=1.$ }
    \label{fig:tnear1}
\end{figure}

\begin{figure}
    \centering
    \begin{tabular}{cc}
        \begin{tikzpicture}[baseline=(current bounding box).center]
        \draw[red, thick] (0,0)--(1.1,0)--(1.1,4);
        \draw[orange, thick] (0,1)--(1,1)--(1,3)--(2,3)--(2,4);
        \draw[green, thick] (0,2)--(0.9,2)--(0.9,2.9)--(3,2.9)--(3,4);
        \draw[blue, thick] (0,2.8)--(4,2.8)--(4,4);
        \end{tikzpicture}
         &
         $\vcenter{\hbox{\includegraphics[height=0.4\textwidth]{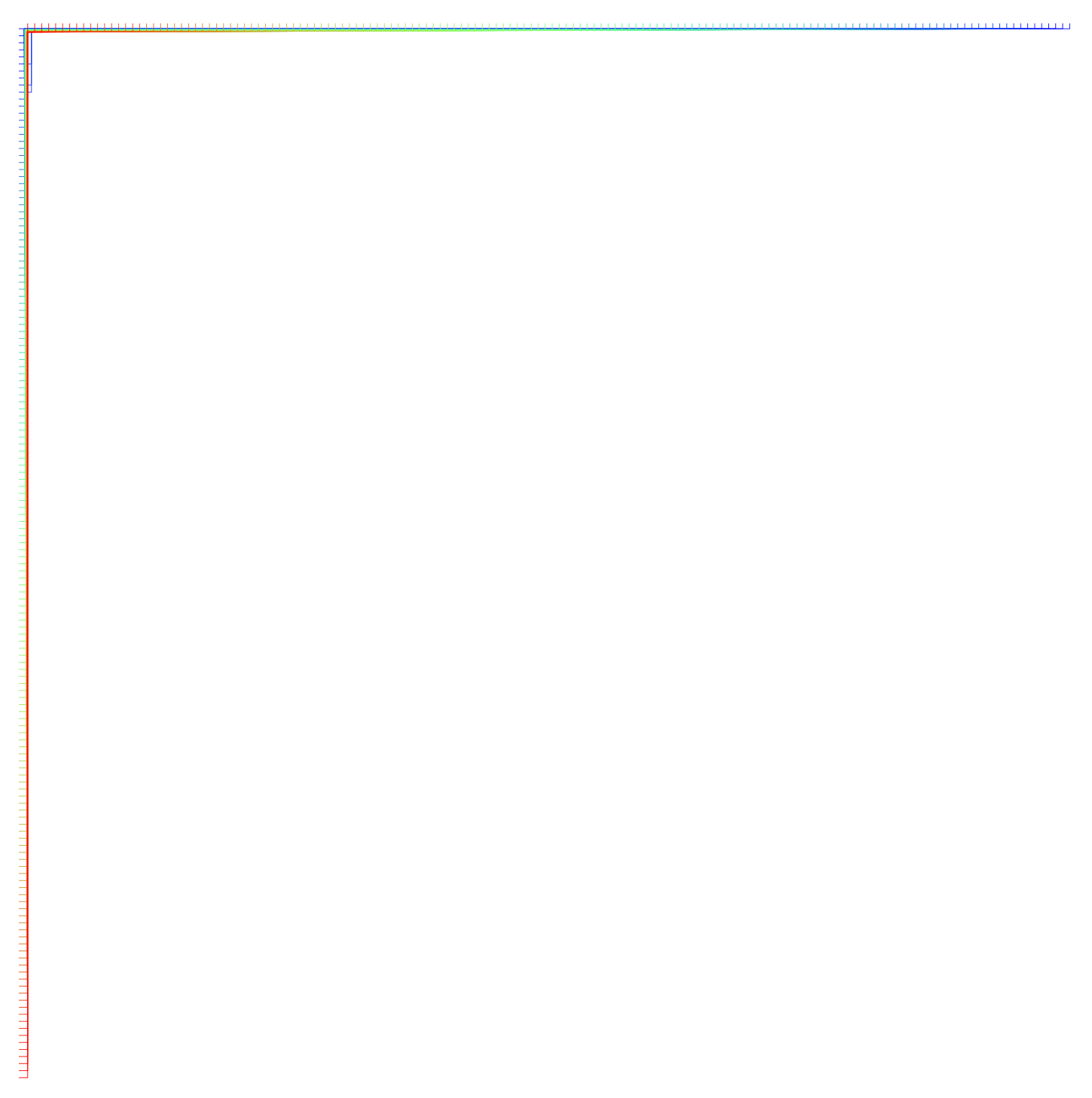}}}$ 
    \end{tabular}
    \caption{Left: The configuration with the maximum power of $t$ in the weight for $n=4$. On can check that the power of $t$ in the weight is $\binom{n+1}{3}=10$. Right: Simulations with colored DWBC with $n=150$ and free-coloring on the top boundary. We set $t=2$ and $q=x_1=\ldots=x_n=1$.}
    \label{fig:t>1}
\end{figure}

\begin{figure}
    \centering
    \begin{tabular}{cc}
        $\vcenter{\hbox{\includegraphics[height=0.4\textwidth]{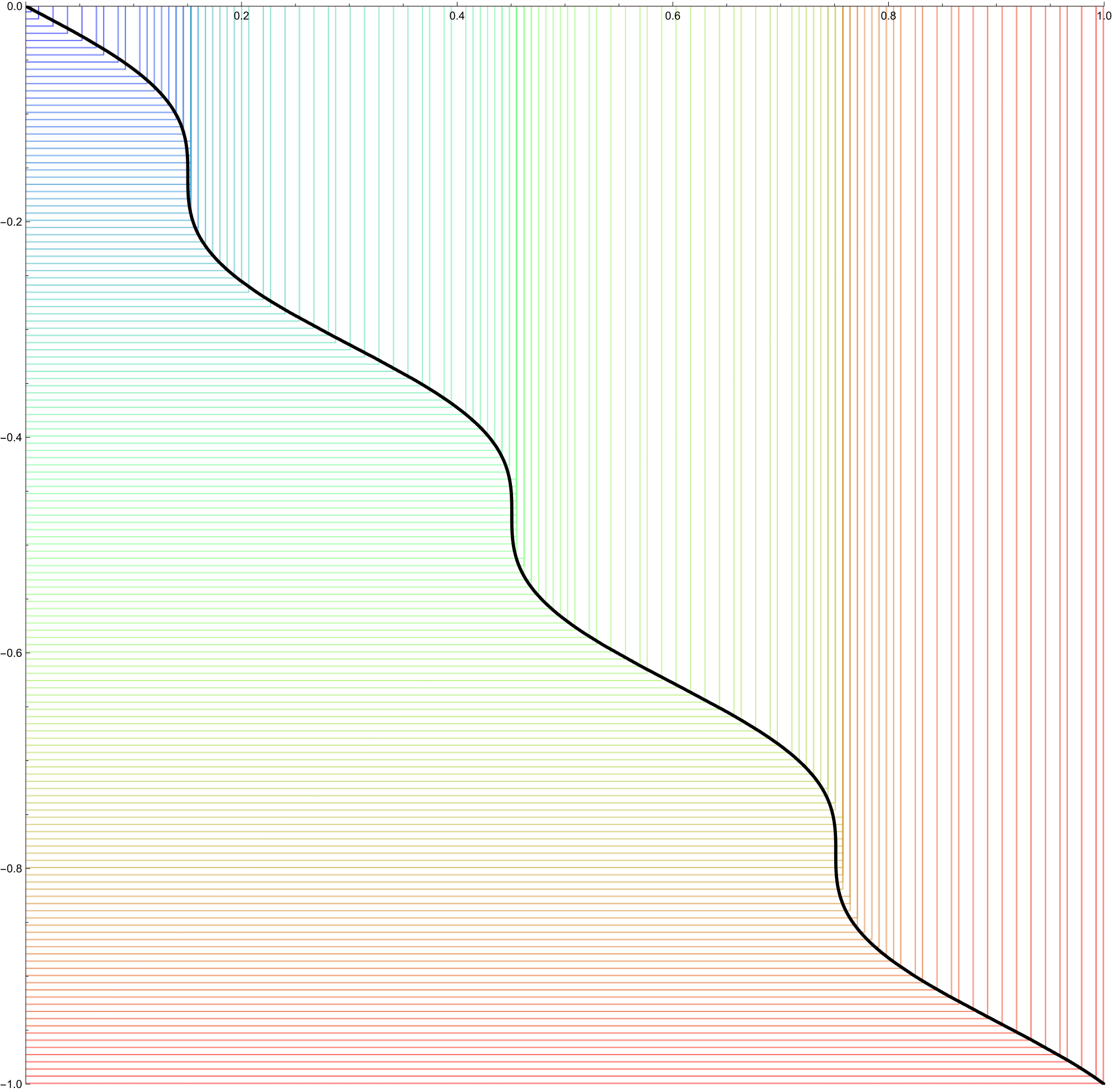}}}$ 
         &
        $\vcenter{\hbox{\includegraphics[height=0.4\textwidth]{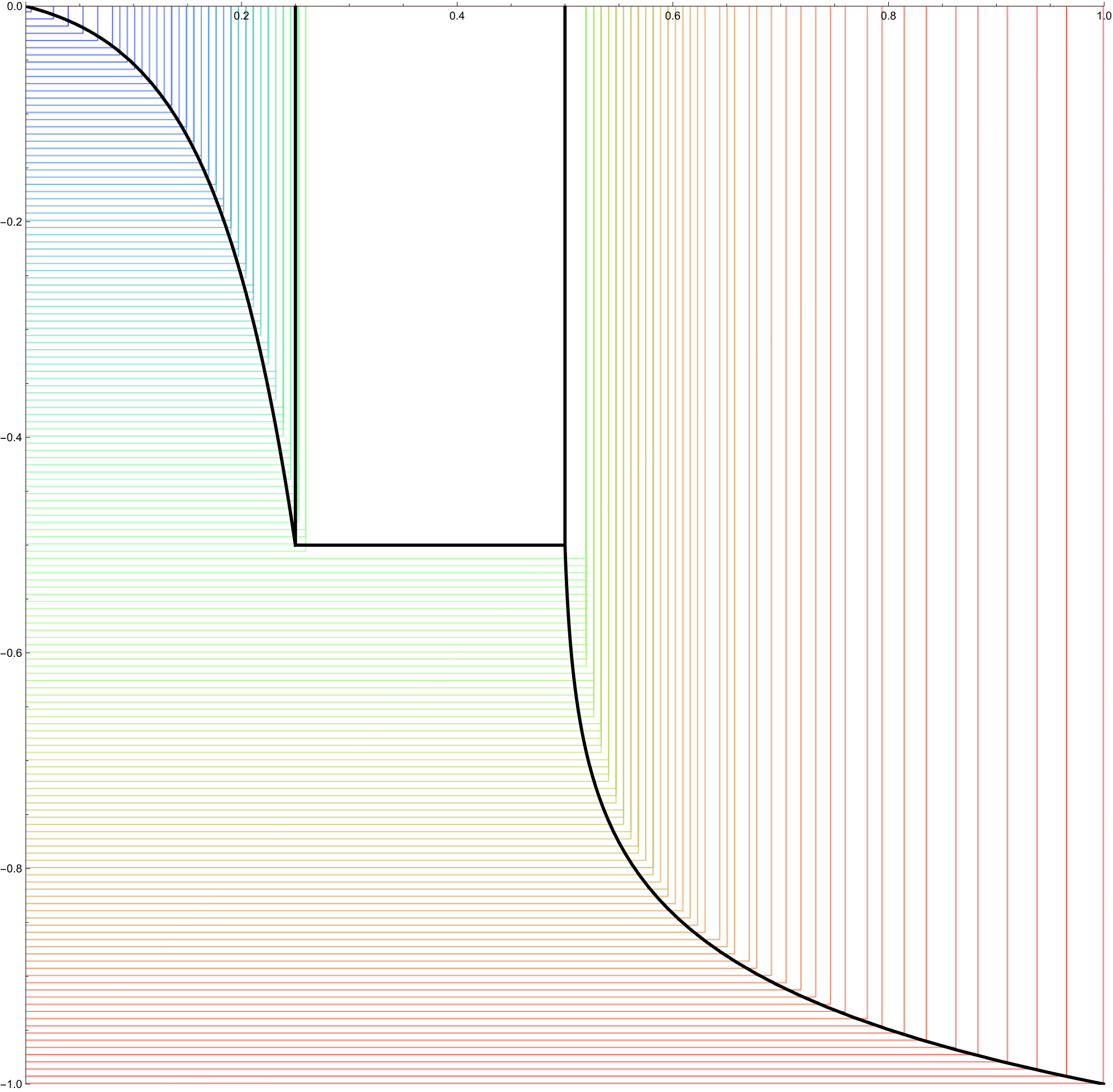}}}$ 
    \end{tabular}
    \caption{Examples of the tropical limit $q\to\infty$.  On the left we have a path configuration whose top boundary approximates the density $\alpha(u) = \frac{400u+20\sin(20u)}{400+\sin(20)}$ with the theoretical arctic curve superimposed.  On the right we have a path configuration whose top boundary approximates the density $\alpha(u) =  \protect\begin{cases} \frac{\log(50u+1)}{\log(26)}, & 0\le u <\frac{1}{2} \\ \frac{e^{10u-5}+e^5-2}{2(e^5-1)}, & \frac{1}{2} \le u \le 1 \protect\end{cases}$ with the theoretical arctic curve superimposed. In both cases, $t=1=x_1=\ldots=x_n=1$. }
    \label{fig:qinfinity}
\end{figure}

\subsection{Discussion}
\medskip
In this section, we present a discussion of the model for values of $t$ away from zero. It is in large part motivated by the simulations in the previous section.

While in the current work we are able to compute the arctic curves at $t=0$, in simulations, for example those in Figures \ref{fig:DWBCsim} and \ref{fig:Gapsim}, we still observe a phase separation between frozen and disordered regions for values of $0<t<1$.  However, for values of $t\ne 0$ the model can no longer be rendered free fermionic. The tangent method has been shown to work in some non-free-fermionic models, such as the the 6- and 20-vertex models, and it would be interesting to apply it to these colored vertex models, especially as there is little known about limit shapes for colored vertex models at the time of this writing.

When $t=1$ paths are independent of each-other. For large $n$, one expects the Gibbs measure to concentrate around the configurations that maximize the entropy. In this case, the coloring corresponding to the identity permutation maximizes the length of the path and each path will be near a straight line between its start and end point.

For values of $t>1$, it is useful to consider the path configuration with maximum power of $t$ in its weights. This configuration has boundary condition determined by the permutation $\sigma=(n,n-1,\ldots,1)$ and has all paths immediately travel north as far as possible then turn right and travel east until they reach their endpoint. One can check that the $t$-weight is given by $t^{\binom{n+1}{3}}\sim t^{n^3/6}$. For fixed $t>1$ and large $n$, the configurations near this maximal configuration dominate as seen in Figure \ref{fig:t>1}.

Here we give a heuristic argument on why an intermediate regime between the behavior with $t<1$ and $t>1$ can be seen when $t>1$ but approaches 1 as we take $n\to \infty$. Note that for a fixed choice of boundary coloring one expects the number of configurations to grow like ${\rm const}^{n^2}$. Consider\ a simultaneous limit in which we take the number of paths $n\to\infty$ while also taking $t=\tau^{1/n}$ for some $\tau>1$. In this case, configurations with $t$-weight $\sim t^{n^3/6}=\tau^{n^2/6}$ no longer dominate. That is, the configurations with larger powers of $t$ in their weight are now balanced with the high entropy configurations. In the simulations shown in Figure \ref{fig:tnear1}, one observes an interesting behavior in which the boundary color permutation is near the identity, the paths are near lines going from their start to end points, but they seem to form ``tendrils" in which nearby paths clump together increasing the power of $t$ in the weight.

Finally let us consider the path model with the area weight $q$, in the tropical limit $q\to \infty$.
We note that if we fix\footnote{We could also choose fixed $t\ne 0$. Since $q\to\infty$, the tendency to maximize the area of the paths overcomes the weight of the interactions between paths.} $t=0$, then in the tropical limit $q\to \infty$ there is a simple relationship between the boundary path density $\alpha(u)$ and the arctic curve. As previously mentioned in Section \ref{sec:qweights}, for boundary conditions in which the path density is piecewise linear (where we consider jumps in the density as segments with infinite slope), the arctic curve is also piecewise linear in the tropical limit. In fact, the slopes of the line segments in the arctic curve are the negative reciprocals of those of the density. In general, one can check that in this limit the arctic curve can be parametrized as
\[
\begin{cases}
    X(t) = \alpha(t) \\
    Y(t) = -t
\end{cases}, \;\; t\in[0,1].
\]
Figure \ref{fig:qinfinity} shows two examples of this in which $\alpha$ is no longer piecewise linear but instead a piecewise smooth function.
 
\bibliographystyle{plain}
\bibliography{coloredDWBC.bib}

\end{document}